\newtheorem{theorem}{Theorem}[section]
\newtheorem{prop}[theorem]{Proposition}
\newtheorem{lemma}[theorem]{Lemma}
\newtheorem{coro}[theorem]{Corollary}
\newtheorem{prop-def}{Proposition-Definition}[section]
\theoremstyle{definition}
\newtheorem{defn}[theorem]{Definition}
\newtheorem{remark}[theorem]{Remark}
\newtheorem{exam}[theorem]{Example}
\newcommand{\nc}{\newcommand}
\newcommand {\emptycomment}[1]{}
\nc{\delete}[1]{{}}
\nc{\mmargin}[1]{}
\nc{\mlabel}[1]{\label{#1}}  
\nc{\mcite}[1]{\cite{#1}}  
\nc{\mref}[1]{\ref{#1}}  
\nc{\meqref}[1]{\eqref{#1}}  
\nc{\mbibitem}[1]{\bibitem{#1}} 
	\nc{\mlabel}[1]{\label{#1}  
		{\hfill \hspace{1cm}{\bf{{\ }\hfill(#1)}}}}
	\nc{\mcite}[1]{\cite{#1}{{\bf{{\ }(#1)}}}}  
	\nc{\mref}[1]{\ref{#1}{{\bf{{\ }(#1)}}}}  
	\nc{\meqref}[1]{\eqref{#1}{{\bf{{\ }(#1)}}}}  
	\nc{\mbibitem}[1]{\bibitem[\bf #1]{#1}} 
\newcommand{\g}{\mathfrak h}
\newcommand{\h}{\mathfrak k}
\newcommand{\huaS}{\mathcal{S}}
\newcommand{\huaH}{\mathcal{H}}
\newcommand{\huaK}{\mathcal{K}}
 \font\cyrs=wncyr7
\newcommand{\bk}{{\mathbf{k}}}
\nc{\vep}{\varepsilon}
\nc{\bin}[2]{ (_{\stackrel{\scs{#1}}{\scs{#2}}})}  
\nc{\binc}[2]{(\!\! \begin{array}{c} \scs{#1}\\
		\scs{#2} \end{array}\!\!)}  
\nc{\bincc}[2]{  ( {\scs{#1} \atop
		\vspace{-1cm}\scs{#2}} )}  
\nc{\oline}[1]{\overline{#1}}
\nc{\mapm}[1]{\lfloor\!|{#1}|\!\rfloor}
\nc{\bs}{\bar{S}}
\nc{\cast}{{\,\mbox{\raisebox{.8pt}{$\scriptstyle \circledast$}}\,}}
\nc{\la}{\longrightarrow}
\nc{\ot}{\otimes}
\nc{\rar}{\rightarrow}
\nc{\dar}{\downarrow}
\nc{\dap}[1]{\downarrow \rlap{$\scriptstyle{#1}$}}
\nc{\defeq}{\stackrel{\rm def}{=}}
\nc{\dis}[1]{\displaystyle{#1}}
\nc{\dotcup}{\ \displaystyle{\bigcup^\bullet}\ }
\nc{\hcm}{\ \hat{,}\ }
\nc{\hts}{\hat{\otimes}}
\nc{\hcirc}{\hat{\circ}}
\nc{\lleft}{[}
\nc{\lright}{]}
\nc{\curlyl}{\left \{ \begin{array}{c} {} \\ {} \end{array}
	\right .  \!\!\!\!\!\!\!}
\nc{\curlyr}{ \!\!\!\!\!\!\!
	\left . \begin{array}{c} {} \\ {} \end{array}
	\right \} }
\nc{\longmid}{\left | \begin{array}{c} {} \\ {} \end{array}
	\right . \!\!\!\!\!\!\!}
\nc{\ora}[1]{\stackrel{#1}{\rar}}
\nc{\ola}[1]{\stackrel{#1}{\la}}
\nc{\scs}[1]{\scriptstyle{#1}} \nc{\mrm}[1]{{\rm #1}}
\nc{\dirlim}{\displaystyle{\lim_{\longrightarrow}}\,}
\nc{\invlim}{\displaystyle{\lim_{\longleftarrow}}\,}
\nc{\dislim}[1]{\displaystyle{\lim_{#1}}} \nc{\colim}{\mrm{colim}}
\nc{\mvp}{\vspace{0.3cm}} \nc{\tk}{^{(k)}} \nc{\tp}{^\prime}
\nc{\ttp}{^{\prime\prime}} \nc{\svp}{\vspace{2cm}}
\nc{\vp}{\vspace{8cm}}
\nc{\modg}[1]{\!<\!\!{#1}\!\!>}
\nc{\intg}[1]{F_C(#1)}
\nc{\lmodg}{\!<\!\!}
\nc{\rmodg}{\!\!>\!}
\nc{\cpi}{\widehat{\Pi}}
\nc{\ssha}{{\mbox{\cyrs X}}} 
\nc{\tsha}{{\mbox{\cyrt X}}}
\nc{\shpr}{\diamond}    
\nc{\labs}{\mid\!}
\nc{\rabs}{\!\mid}
\nc{\ad}{\mrm{ad}}
\nc{\rRB}{\mathsf{rRB}}
\nc{\cocrRB}{\mathsf{cocrRB}}
\nc{\PH}{\mathsf{PH}}
\nc{\cocPH}{\mathsf{cocPH}}
\nc{\ann}{\mrm{ann}}
\nc{\Aut}{\mrm{Aut}}
\nc{\Der}{\mrm{Der}}
\nc{\Sym}{\mrm{Sym}}
\nc{\br}{\mrm{bre}}
\nc{\can}{\mrm{can}}
\nc{\Cont}{\mrm{Cont}}
\nc{\rchar}{\mrm{char}}
\nc{\cok}{\mrm{coker}}
\nc{\de}{\mrm{dep}}
\nc{\dtf}{{R-{\rm tf}}}
\nc{\dtor}{{R-{\rm tor}}}
\nc{\Dif}{\mrm{Diff}}
\nc{\Div}{\mrm{Div}}
\nc{\End}{\mrm{End}}
\nc{\Ext}{\mrm{Ext}}
\nc{\Fil}{\mrm{Fil}}
\nc{\Fr}{\mrm{Fr}}
\nc{\Frob}{\mrm{Frob}}
\nc{\Gal}{\mrm{Gal}}
\nc{\GL}{\mrm{GL}}
\nc{\Gr}{\mrm{Gr}}
\nc{\Hom}{\mrm{Hom}}
\nc{\Hoch}{\mrm{Hoch}}
\nc{\hsr}{\mrm{H}}
\nc{\hpol}{\mrm{HP}}
\nc{\id}{\mrm{id}}
\nc{\im}{\mrm{im}}
\nc{\inv}{\mrm{inv}}
\nc{\Id}{\mrm{Id}}
\nc{\ID}{\mrm{ID}}
\nc{\Irr}{\mrm{Irr}}
\nc{\incl}{\mrm{incl}}
\nc{\length}{\mrm{length}}
\nc{\NLSW}{\mrm{NLSW}}
\nc{\Lie}{\mrm{Lie}}
\nc{\mchar}{\rm char}
\nc{\mpart}{\mrm{part}}
\nc{\ql}{{\QQ_\ell}}
\nc{\qp}{{\QQ_p}}
\nc{\rank}{\mrm{rank}}
\nc{\rcot}{\mrm{cot}}
\nc{\rdef}{\mrm{def}}
\nc{\rdiv}{{\rm div}}
\nc{\rtf}{{\rm tf}}
\nc{\rtor}{{\rm tor}}
\nc{\res}{\mrm{res}}
\nc{\SL}{\mrm{SL}}
\nc{\Spec}{\mrm{Spec}}
\nc{\tor}{\mrm{tor}}
\nc{\Tr}{\mrm{Tr}}
\nc{\tr}{\mrm{tr}}
\nc{\wt}{\mrm{wt}}
\newcommand{\frki}{\mathfrak i}
\newcommand{\co}{\mathsf{cosh}}
\newcommand{\huaX}{\mathcal{X}}
\newcommand{\huaY}{\mathcal{Y}}
\nc{\bfk}{{\bf k}}
\nc{\bfone}{{\bf 1}}
\nc{\bfzero}{{\bf 0}}
\nc{\detail}{\marginpar{\bf More detail}
	\noindent{\bf Need more detail!}
	\svp}
\nc{\gap}{\marginpar{\bf Incomplete}\noindent{\bf Incomplete!!}
	\svp}
\nc{\FMod}{\mathbf{FMod}}
\nc{\Int}{\mathbf{Int}}
\nc{\Mon}{\mathbf{Mon}}
\nc{\remarks}{\noindent{\bf Remarks: }}
\nc{\Rep}{\mathbf{Rep}}
\nc{\Rings}{\mathbf{Rings}}
\nc{\Sets}{\mathbf{Sets}}
\nc{\Diff}{\mathbf{Diff}}
\nc{\Inte}{\mathbf{Inte}}
\nc{\U}{\mathbf{U}}
\nc{\BA}{{\mathbb A}}   \nc{\CC}{{\mathbb C}}
\nc{\DD}{{\mathbb D}}   \nc{\EE}{{\mathbb E}}
\nc{\FF}{{\mathbb F}}   \nc{\GG}{{\mathbb G}}
\nc{\HH}{{\mathbb H}}   \nc{\LL}{{\mathbb L}}
\nc{\NN}{{\mathbb N}}   \nc{\PP}{{\mathbb P}}
\nc{\QQ}{{\mathbb Q}}   \nc{\RR}{{\mathbb R}}
\nc{\TT}{{\mathbb T}}   \nc{\VV}{{\mathbb V}}
\nc{\ZZ}{{\mathbb Z}}   \nc{\TP}{\widetilde{P}}
\newcommand{\huaO}{{\mathcal{OT}}}
\newcommand{\huaT}{{\mathcal{T}}}
\nc{\cala}{{\mathcal A}}    \nc{\calc}{{\mathcal C}}
\nc{\cald}{\mathcal{D}}     \nc{\cale}{{\mathcal E}}
\nc{\calf}{{\mathcal F}}    \nc{\calg}{{\mathcal G}}
\nc{\calh}{{\mathcal H}}    \nc{\cali}{{\mathcal I}}
\nc{\call}{{\mathcal L}}    \nc{\calm}{{\mathcal M}}
\nc{\caln}{{\mathcal N}}    \nc{\calo}{{\mathcal O}}
\nc{\calp}{{\mathcal P}}    \nc{\calr}{{\mathcal R}}
\nc{\cals}{{\mathcal S}}    \nc{\calt}{{\Omega}}
\nc{\calv}{{\mathcal V}}    \nc{\calw}{{\mathcal W}}
\nc{\calx}{{\mathcal X}}
\nc{\fraka}{{\mathfrak a}}
\nc{\frakb}{\mathfrak{b}}
\nc{\frakg}{{\frak g}}
\nc{\frakl}{{\frak l}}
\nc{\fraks}{{\frak s}}
\nc{\frakB}{{\frak B}}
\nc{\frakm}{{\frak m}}
\nc{\frakM}{{\frak M}}
\nc{\frakp}{{\frak p}}
\nc{\frakW}{{\frak W}}
\nc{\frakX}{{\frak X}}
\nc{\frakS}{{\frak S}}
\nc{\frakA}{{\frak A}}
\nc{\frakx}{{\frakx}}
\nc{\ynr}[1]{\textcolor{orange}{\underline{Yunnan:}#1 }}
\nc{\lir}[1]{\textcolor{red}{\underline{Li:}#1 }}
	\newtheorem{theorem}{Theorem}[section]
	\newtheorem{lemma}[theorem]{Lemma}
	\theoremstyle{definition}
	\theoremstyle{remark}
	\newtheorem{remark}[theorem]{Remark}
	\numberwithin{equation}{section}
\begin{document}

\title[Post-Hopf algebras, relative Rota-Baxter operators and solutions of YBE]{Post-Hopf algebras, relative Rota-Baxter operators and solutions of the Yang-Baxter equation}

\author{Yunnan Li}
\address{School of Mathematics and Information Science, Guangzhou University,
Guangzhou 510006, China}
\email{ynli@gzhu.edu.cn}

\author{Yunhe Sheng}
\address{Department of Mathematics, Jilin University, Changchun 130012, Jilin, China}
\email{shengyh@jlu.edu.cn}

\author{Rong Tang}
\address{Department of Mathematics, Jilin University, Changchun 130012, Jilin, China}
\email{tangrong@jlu.edu.cn}

\begin{abstract}
In this paper, first we introduce the notion of a post-Hopf algebra, which gives rise to a post-Lie algebra on the space of primitive elements and there is naturally a post-Hopf algebra structure on the universal enveloping algebra of a post-Lie algebra. A novel property is that a cocommutative post-Hopf algebra gives rise to a generalized Grossman-Larsson product, which leads to a subadjacent Hopf algebra and can be used to construct solutions of the  Yang-Baxter equation. Then we introduce the notion of relative Rota-Baxter operators on   Hopf algebras.
A cocommutative post-Hopf algebra gives rise to a relative Rota-Baxter operator on its subadjacent Hopf algebra. Conversely,  a relative Rota-Baxter operator also induces a post-Hopf algebra. Then we show that relative Rota-Baxter operators give rise to matched pairs of Hopf algebras. Consequently, post-Hopf algebras and relative Rota-Baxter operators give solutions of the Yang-Baxter equation in certain cocommutative Hopf algebras. Finally we characterize   relative Rota-Baxter operators on Hopf algebras using relative Rota-Baxter operators on the Lie algebra of primitive elements, graphs and  module bialgebra structures.
\end{abstract}

\keywords{post-Hopf algebra, Hopf algebra, relative Rota-Baxter operator,  Yang-Baxter equation, matched pair\\
\qquad 2020 Mathematics Subject Classification. 16T05, 16T25,   17B38}

\maketitle

\tableofcontents

\allowdisplaybreaks

\section{Introduction}

The Yang-Baxter equation is an important subject in mathematical physics \cite{Ya}. Drinfeld
highlighted  the importance of the study of set-theoretical solutions of the
Yang-Baxter equation in \cite{Drinfeld}. The pioneer works on set-theoretical solutions are those of Etingof-Schedler-Soloviev \cite{Etingof},   Lu-Yan-Zhu \cite{Lu} and Gateva-Ivanova-Van den Bergh \cite{GVan}.  To understand  the structure of set-theoretical solutions, Rump introduced braces in \cite{Rump1} for abelian groups, which provide involutive nondegenerate solutions. See also \cite{CJO,CJO-1} for more details about the finite simple solutions of the
Yang-Baxter equation. Later Guarnieri and Vendramin generalized braces to the nonabelian case and introduced  skew braces in \cite{GV}, which provide nondegenerate set-theoretical solutions of the Yang-Baxter equation. Recently, Gateva-Ivanova  \cite{G}   used braided groups and
braces to study set-theoretical solutions of the
Yang-Baxter equation. In \cite{AGV}, Angiono, Galindo and Vendramin introduced the notion of Hopf braces, generalizing Rump's braces and Guarnieri-Vendramin's skew-braces. Any Hopf brace produces a solution of the  Yang-Baxter equation.

In this paper, we provide another approach to understand the structure of set-theoretical solutions of the   Yang-Baxter equation in certain Hopf algebras. In particular, we introduce the notion of   post-Hopf algebras, which naturally provide solutions of the   Yang-Baxter equation in the underlying vector spaces. We also introduce the notion of relative Rota-Baxter operators on  Hopf algebras, which naturally give rise to post-Hopf algebras, and thus to solutions of the  Yang-Baxter equation. The whole theory is based on the fact that a   cocommutative  post-Hopf algebra gives rise to a  generalized Grossman-Larsson product, which leads to a subadjacent Hopf algebra. Note that the classical  Grossman-Larsson product was defined in the context of polynomials of ordered rooted trees \cite{MW}, and have important applications in the studies of Magnus expansions \cite{CP,EM} and Lie-Butcher series  \cite{ML,MW}.

A post-Hopf algebra is a Hopf algebra $H$ equipped with a coalgebra homomorphism from $H\otimes H$ to $H$ satisfying some compatibility conditions (see Definition \ref{defi:pH}).  Magma algebras, in particular ordered rooted trees, provide a class of examples of post-Hopf algebras.  A  cocommutative  post-Hopf algebra gives rise to a new subadjacent Hopf algebra and a module bialgebra structure on itself. The terminology of post-Hopf algebras is justified by the fact that a post-Hopf algebra gives rise to a post-Lie algebra on the space of primitive elements. The notion of post-Lie algebras was introduced in \cite{Val}, and have important  applications in geometric numerical integration \cite{Fard1,Fard2}. In \cite{ELM}, Ebrahimi-Fard,  Lundervold and   Munthe-Kaas studied the Lie enveloping algebra of a post-Lie algebra, which turns out to be a post-Hopf algebra. They also find that there is a new Hopf algebra structure (the subadjacent Hopf algebra) on the Lie enveloping algebra of a post-Lie algebra, by which the Magnus expansions and Lie-Butcher series can be constructed. The subadjacent Hopf algebra  is also the main ingredient in our construction of solutions of the   Yang-Baxter equation. Moreover, we show that cocommutative post-Hopf algebras and cocommutative Hopf braces are equivalent. As a byproduct, we obtain the notion of pre-Hopf algebras as   commutative post-Hopf algebras.

Rota-Baxter operators on Lie algebras and associative algebras have important applications in various fields, such as Connes-Kreimer's algebraic approach to renormalization
of quantum  field theory \cite{CK}, the classical Yang-Baxter equation and integrable systems \cite{Bai,Ku,STS}, splitting of operads \cite{BBGN}, double Lie algebras \cite{GK} and etc. See the book \cite{Gub} for more details. Recently, the notion of Rota-Baxter operators on groups was introduced in \cite{GLS}, and further studied in \cite{BG2}. One can obtain Rota-Baxter operators of weight 1 on Lie algebras from that on Lie groups by differentiation. Then in the remarkable work  \cite{Go}, Goncharov succeeded in defining Rota-Baxter operators on cocommutative Hopf algebras such that many classical results still hold in the Hopf algebra level. In this paper, we introduce a more general notion of relative Rota-Baxter operators on  Hopf algebras containing Goncharov's Rota-Baxter operators as special cases. A cocommutative post-Hopf algebra naturally gives rise to a relative Rota-Baxter operator on its subadjacent Hopf algebra, and conversely, a relative Rota-Baxter operator   also induces a   post-Hopf algebra.

Remarkably, a relative Rota-Baxter operator  on  a  cocommutative Hopf algebra naturally gives rise to a matched pair of Hopf algebras. In particular, for a cocommutative post-Hopf algebra, the original Hopf algebra and the subadjacent Hopf algebra form a matched pair of Hopf algebras satisfying certain good properties. Based on this fact,  we construct solutions of the   Yang-Baxter equation in a Hopf algebra using post-Hopf algebras as well as relative Rota-Baxter operators, and give explicit formulas of solutions for the post-Hopf algebras coming from ordered rooted trees.   We further characterize relative Rota-Baxter operators using graphs in the smash product Hopf algebra and module structures.

The paper is organized as follows. In Section \ref{sec:pH}, first we introduce the notion of post-Hopf algebras and show that a cocommutative post-Hopf algebra gives rise to a subadjacent Hopf algebra together with a module bialgebra structure on itself. Then we show that there is a one-to-one correspondence between cocommutative post-Hopf algebras and cocommutative Hopf braces. In Section \ref{sec:rb}, we introduce the notion of relative Rota-Baxter operators and show that post-Hopf algebras are the underlying structures, and give rise to relative Rota-Baxter operators on the subadjacent Hopf algebras. In Section \ref{sec:mp}, we show that a relative Rota-Baxter operator gives rise to a matched pair of Hopf algebras. In particular, a cocommutative post-Hopf algebra gives rise to a matched pair of Hopf algebras. Consequently, one can construct solutions of the  Yang-Baxter equation using post-Hopf algebras and relative Rota-Baxter operators. In Section \ref{sec:gr}, we give some alternative characterizations of relative Rota-Baxter operators using  relative Rota-Baxter operators on the Lie algebra of primitive elements, graphs and  module bialgebra structures.

\vspace{2mm}

\noindent
{\bf Convention.}
In this paper, we fix an algebraically closed ground field $\bk$ of characteristic 0. 
For any coalgebra $(C,\Delta,\vep)$, we compress the Sweedler notation of the comultiplication $\Delta$ as $$\Delta(x)=x_1\otimes x_2$$ for simplicity.
Furthermore, for $n\geq1$ we write $$\Delta^{(n)}(x)=(\Delta\otimes\id^{\otimes (n-1)})\cdots(\Delta\otimes\id)\Delta(x)=x_1\otimes\cdots\otimes x_{n+1}.$$

Let $(H,\cdot,1,\Delta,\vep,S)$ be a  Hopf algebra.
Denote by $G(H)$ the set of group-like elements in $H$, which is a group. Denote by $P_{g,h}(H)$ the subspace of $(g,h)$-primitive elements in $H$ for $g,h\in G(H)$. Denote by $P(H)$ the subspace of primitive  elements in $H$, which is a Lie algebra.
For other basic notions of Hopf algebras, we follow the textbooks~\mcite{Mon}.

\section{Post-Hopf algebras}\label{sec:pH}

In this section, first we introduce the notion of a post-Hopf algebra, and show that a cocommutative post-Hopf algebra gives rise to a subadjacent Hopf algebra together with a module bialgebra structure on itself. A post-Hopf algebra induces a post-Lie algebra structure on the space of primitive elements and conversely, there is naturally a post-Hopf algebra structure on the universal enveloping algebra of a post-Lie algebra. Then we show that cocommutative post-Hopf algebras and cocommutative Hopf braces are equivalent. Finally, we introduce the notion of a pre-Hopf algebra which is a commutative  post-Hopf algebra.

Recall from \cite{FM,Val} that a {\bf post-Lie algebra} $(\g,[\cdot,\cdot]_\g,\rhd)$ consists of a Lie algebra $(\g,[\cdot,\cdot]_\g)$ and a binary product $\rhd:\g\otimes\g\to\g$ such that
\begin{eqnarray}
\label{Post-L-1}x\rhd[y,z]_\g&=&[x\rhd y,z]_\g+[y,x\rhd z]_\g,\\
\label{Post-L-2}([x,y]_\g+x\rhd y-y\rhd x)\rhd z&=&x\rhd(y\rhd z)-y\rhd(x\rhd z).
\end{eqnarray}
Any post-Lie algebra $(\g,[\cdot,\cdot]_\g,\rhd)$ has a {\bf subadjacent Lie algebra}
$\g_\rhd:=(\g,[\cdot,\cdot]_{\g_\rhd})$
defined by
$$[x,y]_{\g_\rhd} \coloneqq x\rhd y-y\rhd x+[x,y]_\g,\quad\forall x,y\in\g,$$
and Eqs.~\eqref{Post-L-1}-\eqref{Post-L-2} equivalently mean that the linear map $L_\rhd:\g\to\mathfrak g\mathfrak l(\g)$ defined by $L_{\rhd,x}y=x\rhd y$ is an action of the Lie algebra $(\g,[\cdot,\cdot]_{\g_\rhd})$ on $(\g,[\cdot,\cdot]_\g)$.

\subsection{Post-Hopf algebras and their basic properties}

\begin{defn}\label{defi:pH}
A {\bf post-Hopf algebra} is a pair $(H,\rhd)$, where $H$ is a Hopf algebra and $\rhd:H\otimes H\to H$ is a coalgebra homomorphism satisfying the following equalities:
\begin{eqnarray}
\label{Post-2}x\rhd (y\cdot z)&=&(x_1\rhd y)\cdot(x_2\rhd z),\\
\label{Post-4}x\rhd (y\rhd z)&=&\big(x_1\cdot(x_2\rhd y)\big)\rhd z,
\end{eqnarray}
for any $x,y,z\in H$, and the left multiplication $\alpha_\rhd:H\to \End(H)$ defined by
$$\alpha_{\rhd, x} y= x\rhd y,\quad\forall x,y\in H,$$
is convolution invertible in $\Hom(H,\End(H))$. Namely, there exists unique $\beta_\rhd:H\to\End(H)$ such that
\begin{equation}\label{Post-con}
\alpha_{\rhd,x_1}\beta_{\rhd,x_2}=\beta_{\rhd,x_1}\alpha_{\rhd,x_2}=\vep(x)\id_H,\quad\forall x\in H.
\end{equation}

A {\bf homomorphism} from a post-Hopf algebra $(H,\rhd)$ to $(H',\rhd')$ is  a Hopf algebra homomorphism $g:H\to H'$ satisfying
$$
g(x\rhd y)=g(x)\rhd'g(y),\quad \forall x,y\in H.
$$

\end{defn}

It is obvious that post-Hopf algebras and homomorphisms between post-Hopf algebras form a category, which is denoted by $\PH$. We denote by $\cocPH$ the subcategory of $\PH$ consisting of cocommutative post-Hopf algebras and homomorphisms between them.

\begin{remark}
Similar axioms in the definition of a post-Hopf algebra also appeared in the definition of $D$-algebras \cite{ML,MW} and $D$-bialgebras  \cite{MQS} with  motivations from the studies of numerical Lie group integrators and the algebraic structure on the universal enveloping algebra of a post-Lie algebra.
\end{remark}


Moreover, we have the following properties.

\begin{lemma}
Let $(H,\rhd)$ be a post-Hopf algebra. Then for all $x,y\in H$, we have
\begin{eqnarray}
\label{Post-1}x\rhd 1&=&\vep(x)1,\\
\label{Post-3}1\rhd x&=&x,\\
\label{Post-5}S(x\rhd y)&=&x\rhd S(y).
\end{eqnarray}
\end{lemma}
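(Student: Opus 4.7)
The plan is to deduce the three identities in order, working in the convolution algebras $\Hom(H,\End(H))$ and $\Hom(H\otimes H,H)$ and exploiting uniqueness of two-sided convolution inverses.

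For \eqref{Post-1}, I would specialize Eq.~\eqref{Post-2} to $y=1$ to obtain $x\rhd z=(x_1\rhd 1)(x_2\rhd z)=e(x_1)\alpha_{\rhd,x_2}(z)$, where $e(x):=x\rhd 1$. Letting $L_a\in\End(H)$ denote left multiplication by $a\in H$ and $L_e:H\to\End(H)$ the map $x\mapsto L_{e(x)}$, the identity reads in $\Hom(H,\End(H))$ as the convolution equation $\alpha_\rhd=L_e\ast\alpha_\rhd$, where $\ast$ denotes convolution with composition as the product on $\End(H)$. Convolving with the inverse $\beta_\rhd$ from \eqref{Post-con} on the right yields $L_e=\vep(\cdot)\id_H$, and evaluating at $1\in H$ gives $e(x)=\vep(x)1$, i.e.\ \eqref{Post-1}.

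For \eqref{Post-3}, set $\phi:=\alpha_{\rhd,1}$ and specialize Eq.~\eqref{Post-4} to $y=1$; by \eqref{Post-1}, $x_1\cdot(x_2\rhd 1)=x_1\vep(x_2)1=x$, so the right-hand side collapses to $x\rhd z$, giving $\alpha_{\rhd,x}\circ\phi=\alpha_{\rhd,x}$. Taking $x=1$ yields $\phi^2=\phi$, while \eqref{Post-con} at $x=1$ shows $\phi\circ\beta_{\rhd,1}=\beta_{\rhd,1}\circ\phi=\id_H$, so $\phi$ is invertible and $\phi=\id_H$. For \eqref{Post-5}, I view $\rhd:H\otimes H\to H$ as a coalgebra morphism from the tensor coalgebra into the Hopf algebra $H$, so the standard antipode argument shows that $S\circ\rhd$ is the unique two-sided convolution inverse of $\rhd$ in $\Hom(H\otimes H,H)$. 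On the other hand, Eq.~\eqref{Post-2}, \eqref{Post-1} and the antipode axiom $y_1S(y_2)=S(y_1)y_2=\vep(y)1$ give
\[\sum(x_1\rhd y_1)\bigl(x_2\rhd S(y_2)\bigr)=x\rhd\bigl(y_1S(y_2)\bigr)=\vep(y)(x\rhd 1)=\vep(x)\vep(y)1,\]
and an analogous computation works for the product in the opposite order, showing that $(x\otimes y)\mapsto x\rhd S(y)$ is also a two-sided convolution inverse of $\rhd$; uniqueness forces \eqref{Post-5}.

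The conceptual hurdle lies entirely in the first step: recognizing the $y=1$ specialization of \eqref{Post-2} as a cancellable convolution identity in $\Hom(H,\End(H))$. Once \eqref{Post-1} is established, \eqref{Post-3} and \eqref{Post-5} both follow by short, standard manipulations with Hopf-algebra convolution inverses.
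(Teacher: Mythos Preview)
Your proof is correct. For \eqref{Post-3} you and the paper argue identically (idempotent plus invertible gives the identity). For \eqref{Post-1} and \eqref{Post-5} the routes diverge. The paper proves \eqref{Post-1} by a direct Sweedler manipulation using the antipode axiom and the coalgebra-morphism property of $\rhd$, never invoking \eqref{Post-con}: one writes $x\rhd 1=(x_1\rhd 1)(x_2\rhd 1)S(x_3\rhd 1)$, collapses the first two factors via \eqref{Post-2}, and applies the antipode identity. You instead cancel $\alpha_\rhd$ by its convolution inverse, so your argument rests on \eqref{Post-2} and \eqref{Post-con} alone and does not need $\rhd$ to respect $\Delta$ at this stage. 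For \eqref{Post-5} the paper writes out an explicit Sweedler chain, while you repackage the same identity as ``both $S\circ\rhd$ and $\rhd\circ(\id\otimes S)$ are convolution inverses of $\rhd$ in $\Hom(H\otimes H,H)$''; these are the same computation, but your phrasing makes the structure clearer. The one substantive trade-off is at \eqref{Post-1}: the paper's argument shows that \eqref{Post-1} already follows from \eqref{Post-2} and the coalgebra-morphism hypothesis without the invertibility axiom \eqref{Post-con}, a slightly sharper fact that your cancellation argument does not recover.
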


\begin{proof}
Since $\rhd$ is a coalgebra homomorphism, we have
\begin{eqnarray*}
x\rhd 1 &=& (x_1\rhd 1) \vep(x_2\rhd 1)
= (x_1\rhd 1)\cdot (x_2\rhd 1)\cdot S(x_3\rhd 1)\\
&\stackrel{\eqref{Post-2}}{=}& (x_1\rhd 1)\cdot S(x_2\rhd 1)= \vep(x\rhd 1)1 =\vep(x)1.
\end{eqnarray*}

By Eq.~\eqref{Post-con},  we have
$\alpha_{\rhd,1}\beta_{\rhd,1}=\beta_{\rhd,1}\alpha_{\rhd,1}=\id_H,$
which means that $\alpha_{\rhd,1}$ is a linear automorphism of $H$. On the other hand, we have
\begin{eqnarray*}
\alpha_{\rhd,1}^2x = 1\rhd (1\rhd x)
\stackrel{\eqref{Post-4}}{=}  (1\rhd 1)\rhd x
\stackrel{\eqref{Post-1}}{=}  1\rhd x = \alpha_{\rhd,1}x.
\end{eqnarray*}
Hence, $1\rhd x = \alpha_{\rhd,1}x = x$.

Finally we have
\begin{eqnarray*}
S(x\rhd y)&=&S(x_1\rhd y_1)\vep(x_2)\vep(y_2) \stackrel{\eqref{Post-1}}{=} S(x_1\rhd y_1)\cdot(x_2\rhd \vep(y_2)1)\\
&=&S(x_1\rhd y_1)\cdot(x_2\rhd (y_2\cdot S(y_3))) \stackrel{\eqref{Post-2}}{=} S(x_1\rhd y_1)\cdot(x_2\rhd y_2)\cdot(x_3\rhd S(y_3))\\
&=&\vep(x_1\rhd y_1)(x_2\rhd S(y_2))=\vep(x_1)\vep(y_1)(x_2\rhd S(y_2))=x\rhd S(y).\qedhere
\end{eqnarray*}
 \end{proof}


Now we give the main result in this section.
\begin{theorem}\label{thm:subHopf}
Let $(H,\rhd)$ be a cocommutative post-Hopf algebra. Then
$$H_\rhd\coloneqq(H,*_\rhd,1,\Delta,\vep,S_\rhd)$$
is a Hopf algebra, which is called the {\bf subadjacent Hopf algebra}, where for all $x,y\in H$,
\begin{eqnarray}
\label{post-rbb-1}x *_\rhd y&\coloneqq&x_1\cdot (x_2\rhd y),\\
\label{post-rbb-2}S_\rhd(x)&\coloneqq&\beta_{\rhd,x_1}(S(x_2)),
\end{eqnarray}

Furthermore,
$(H,\cdot,1,\Delta,\vep,S)$ is a left $H_\rhd$-module bialgebra via the action $\rhd$.

\end{theorem}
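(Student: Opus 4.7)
The plan is to verify in turn the Hopf algebra axioms for $H_\rhd$ and then read off the module bialgebra claim. Since $H_\rhd$ and $H$ share their coalgebra structure, no coalgebra checks are needed; the work lies with $*_\rhd$, $S_\rhd$, and their compatibilities with $\Delta$ and $\vep$. Throughout I will repeatedly use the Sweedler notation $\Delta^{(n)}(x) = x_1 \otimes \cdots \otimes x_{n+1}$ together with the identifications afforded by coassociativity and cocommutativity.

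First I would dispose of unitality: $x *_\rhd 1 = x_1 \cdot (x_2 \rhd 1) = x_1 \vep(x_2) = x$ by \eqref{Post-1}, and $1 *_\rhd x = 1\cdot(1 \rhd x) = x$ by \eqref{Post-3}. For associativity, the key input is the coproduct
\[
\Delta\bigl(x_1 \cdot (x_2 \rhd y)\bigr) = x_1 (x_3 \rhd y_1) \otimes x_2 (x_4 \rhd y_2),
\]
which is immediate from $\Delta$ being an algebra homomorphism on $H$ and $\rhd$ being a coalgebra homomorphism; cocommutativity then lets me swap the two middle factors of $\Delta^{(3)}(x)$ to rewrite this as $x_1(x_2 \rhd y_1) \otimes x_3(x_4 \rhd y_2)$. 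Substituting this into the expansion of $(x *_\rhd y) *_\rhd z$, \eqref{Post-4} collapses $\bigl(x_3 (x_4 \rhd y_2)\bigr) \rhd z$ to $x_3 \rhd (y_2 \rhd z)$, and \eqref{Post-2} applied in reverse combines $(x_2 \rhd y_1)\cdot(x_3 \rhd (y_2 \rhd z))$ into $x_2 \rhd (y *_\rhd z)$, giving $x *_\rhd (y *_\rhd z)$. The same coproduct computation, read differently, shows $\Delta(x *_\rhd y) = (x_1 *_\rhd y_1) \otimes (x_2 *_\rhd y_2)$, so $\Delta$ is $*_\rhd$-multiplicative; $\vep(x *_\rhd y) = \vep(x)\vep(y)$ is immediate from the coalgebra-homomorphism property of $\rhd$. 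Hence $H_\rhd$ is a bialgebra.

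For the antipode, the identity $x_1 *_\rhd S_\rhd(x_2) = \vep(x)1$ unfolds, after coassociativity, to $x_1 \cdot \alpha_{\rhd,x_2}\beta_{\rhd,x_3}\bigl(S(x_4)\bigr)$; the middle pair $x_2 \otimes x_3$ in $\Delta^{(3)}(x)$ is the coproduct of the middle factor of $\Delta^{(2)}(x)$, so \eqref{Post-con} collapses $\alpha_{\rhd,x_2}\beta_{\rhd,x_3}$ to the scalar counit, and the expression reduces to $x_1 S(x_2) = \vep(x)1$ via the antipode axiom in $H$. The reverse identity $S_\rhd(x_1) *_\rhd x_2 = \vep(x)1$ is the subtler one: it requires the coproduct formula $\Delta \circ \beta_{\rhd,x} = (\beta_{\rhd,x_1} \otimes \beta_{\rhd,x_2}) \circ \Delta$, which I would derive by applying $\Delta$ to \eqref{Post-con}, using that $\rhd$ is a coalgebra homomorphism to move $\Delta$ past $\alpha_\rhd$, and then applying \eqref{Post-con} a second time together with cocommutativity to solve for $\Delta \circ \beta_\rhd$. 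This yields $\Delta \circ S_\rhd = (S_\rhd \otimes S_\rhd) \circ \Delta$, after which $S_\rhd(x_1) *_\rhd x_2 = S_\rhd(x_1) \cdot \bigl(S_\rhd(x_2) \rhd x_3\bigr)$ unfolds symmetrically and collapses to $\vep(x)1$ using the other half of \eqref{Post-con}.

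Finally, the module bialgebra structure is essentially built into Definition~\ref{defi:pH}: \eqref{Post-4} together with \eqref{Post-3} say that $\rhd$ makes $H$ a left $H_\rhd$-module; \eqref{Post-2} together with \eqref{Post-1} say that this action is a module algebra action on $(H,\cdot,1)$; and the coalgebra-homomorphism clause for $\rhd$ says that it is a module coalgebra action on $(H,\Delta,\vep)$. Combining these gives the module bialgebra claim. The main obstacle throughout is the derivation of the coproduct formula for $\beta_\rhd$ needed in the second antipode identity, since a convolution inverse in $\Hom(H,\End(H))$ is not a priori a coalgebra homomorphism; cocommutativity is precisely what pushes this through.
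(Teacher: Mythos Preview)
Your outline matches the paper's proof everywhere except at the left antipode identity $S_\rhd(x_1) *_\rhd x_2 = \vep(x)1$, and there the claim that the computation ``unfolds symmetrically'' using the other half of \eqref{Post-con} does not hold up. After applying the coalgebra-homomorphism property of $S_\rhd$ you reach $S_\rhd(x_1)\cdot\bigl(S_\rhd(x_2)\rhd x_3\bigr)$; the inner factor is $\alpha_{\rhd,S_\rhd(x_2)}(x_3)$, not $\beta_{\rhd,x_2}\alpha_{\rhd,x_3}$ applied to anything, so the identity $\beta_{\rhd,x_1}\alpha_{\rhd,x_2}=\vep(x)\id_H$ has nothing to bite on. A direct argument can be made, but it requires first proving $\alpha_{\rhd,S_\rhd(x)}=\beta_{\rhd,x}$ (from \eqref{Post-4} together with the right antipode identity you have already established) and then that $\beta_{\rhd,x}(ab)=\beta_{\rhd,x_1}(a)\cdot\beta_{\rhd,x_2}(b)$; this route is neither symmetric to the right-antipode computation nor does it use \eqref{Post-con}.

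The paper handles this step differently: from the right antipode identity, the coalgebra-homomorphism property of $S_\rhd$, and associativity of $*_\rhd$ one gets
\[
S_\rhd^2(x)=\vep(x_1)S_\rhd^2(x_2)=\bigl(x_1*_\rhd S_\rhd(x_2)\bigr)*_\rhd S_\rhd^2(x_3)
=x_1*_\rhd\bigl(S_\rhd(x_2)*_\rhd S_\rhd(S_\rhd(x_3))\bigr)=x,
\]
the last step being the right antipode identity applied to $S_\rhd(x)$. Then $S_\rhd(x_1)*_\rhd x_2 = S_\rhd(x_1)*_\rhd S_\rhd^2(x_2)=\vep(S_\rhd(x))1=\vep(x)1$. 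This detour through $S_\rhd^2=\id$ is the missing idea in your sketch.
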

\begin{proof}
Since $\rhd$ is a coalgebra homomorphism and $H$ is cocommutative, we have
\begin{eqnarray*}
\Delta(x *_\rhd y)&=&\Delta(x_1\cdot (x_2\rhd y))\\
&=&\Delta(x_1)\cdot \Delta(x_2\rhd y)\\
&=&(x_1\otimes x_2)\cdot((x_3\rhd y_1)\otimes(x_4\rhd y_2))\\
&=&(x_1\cdot (x_3\rhd y_1))\otimes (x_2\cdot(x_4\rhd y_2))\\
&=&(x_1\cdot (x_2\rhd y_1))\otimes (x_3\cdot(x_4\rhd y_2))\\
&=&(x_1 *_\rhd y_1)\otimes(x_2 *_\rhd y_2)
\end{eqnarray*}
for all $x,y\in H$, which implies  that the comultiplication $\Delta$ is an algebra homomorphism with respect to the multiplication $*_\rhd$. Moreover, we have
$$
\vep(x *_\rhd y)=\vep(x_1\cdot (x_2\rhd y))=\vep(x_1)\vep(x_2\rhd y)=\vep(x)\vep(y),
$$
which implies  that the counit $\vep$ is also an algebra homomorphism with respect to the multiplication $*_\rhd$. Since the comultiplication $\Delta$ is an algebra homomorphism with respect to the multiplication $\cdot$, for all $x,y,z\in H$, we have
\begin{eqnarray*}
(x *_\rhd y)*_\rhd z&=& (x_1 *_\rhd y_1)\cdot((x_2 *_\rhd y_2)\rhd z)\\
&=&(x_1\cdot (x_2\rhd y_1))\cdot((x_3\cdot (x_4\rhd y_2))\rhd z)\\
&\stackrel{\eqref{Post-4}}{=}&x_1\cdot (x_2\rhd y_1)\cdot (x_3\rhd (y_2\rhd z))\\
&\stackrel{\eqref{Post-2}}{=}&x_1\cdot (x_2\rhd (y_1\cdot (y_2\rhd z)))\\
&=&x *_\rhd (y*_\rhd z),
\end{eqnarray*}
which implies that the multiplication $*_\rhd$ is associative.  For any $x\in H$, by  \eqref{Post-1} and \eqref{Post-3}, we have
\begin{eqnarray*}
x *_\rhd 1&=&x_1\cdot (x_2\rhd 1)=x_1\cdot\vep(x_2)=x,\\
1 *_\rhd x&=&1\cdot (1\rhd x)=x.
\end{eqnarray*}
Thus,   $(H,*_\rhd,1,\Delta,\vep)$ is a cocommutative bialgebra.
Since $\rhd$ is a coalgebra homomorphism and $H$ is cocommutative, we know that
$$\Delta\beta_{\rhd,x}=(\beta_{\rhd,x_1}\otimes\beta_{\rhd,x_2})\Delta,$$
and $S_\rhd$ is a coalgebra homomorphism. Also, note that
\begin{eqnarray*}
x_1*_\rhd S_\rhd(x_2) &\stackrel{\eqref{post-rbb-1}}{=}& x_1\cdot(x_2\rhd S_\rhd(x_2))\\
&\stackrel{\eqref{post-rbb-2}}{=}& x_1\cdot(\alpha_{\rhd,x_2}(\beta_{\rhd,x_3}(S(x_4))))\\
&=& x_1\cdot(\vep(x_2)S(x_3))\\
&=& \vep(x)1,
\end{eqnarray*}
and it means that
\begin{eqnarray*}
S_\rhd^2(x) &=& \vep(x_1)S_\rhd^2(x_2) = (x_1*_\rhd S_\rhd(x_2))*_\rhd S_\rhd^2(x_3)\\
&=& x_1*_\rhd (S_\rhd(x_2)*_\rhd S_\rhd(S_\rhd(x_3)))\\
&=& x_1*_\rhd \vep(S_\rhd(x_2))1 = x.\\[.5em]
S_\rhd(x_1) *_\rhd x_2 &=&  S_\rhd(x_1) *_\rhd S_\rhd^2(x_2)= \vep(S_\rhd(x))1 = \vep(x)1.
\end{eqnarray*}
Therefore, $(H,*_\rhd,1,\Delta,\vep,S_{\rhd})$ is a cocommutative Hopf algebra.

Moreover, we have
$$
(x *_\rhd y)\rhd z=(x_1\cdot (x_2\rhd y))\rhd z=x\rhd (y\rhd z).
$$
Then by \eqref{Post-2} and \eqref{Post-1}, $(H,\cdot,1)$ is a left $H_\rhd$-module algebra. Since $\rhd$ is also a coalgebra homomorphism, $(H,\cdot,1,\Delta,\vep,S)$ is a left $H_\rhd$-module  bialgebra via the action $\rhd$.
\end{proof}

\begin{remark}\begin{itemize}
\item[{\rm(i)}] The product \eqref{post-rbb-1} generalizes the Grossman-Larsson product \cite{GL,MW,OG} defined in the context of (noncommutative) polynomials of (ordered) rooted trees. The Grossman-Larsson product plays important roles in the theories of Magnus expansions \cite{CP} and Lie-Butcher series \cite{ML,MW}. 

 \item[{\rm(ii)}]  By  ~\eqref{Post-3} and \eqref{Post-4}, we have
$$\alpha_{\rhd,x_1}\alpha_{\rhd,S_\rhd(x_2)}=\alpha_{\rhd,x_1*_\rhd S_\rhd(x_2)}=\alpha_{\rhd,\vep(x)1}=\vep(x){\id},\quad{\rm i.e.}\quad  \beta_{\rhd,x}=\alpha_{\rhd,S_\rhd(x)},$$
as $\beta_\rhd$ is also the convolution inverse of $\alpha_\rhd$.
Then we can rewrite Eq.~\eqref{post-rbb-2} as
\begin{equation}\label{post-rbb-2'}
S_\rhd(x)=S_\rhd(x_1)\rhd S(x_2),\quad\forall x\in H.
\end{equation}
\end{itemize}
\end{remark}

\begin{exam}
Any Hopf algebra $H$ has at least the following {\bf trivial} post-Hopf algebra structure,
\begin{eqnarray*}
x\rhd y &=& \vep(x)y,\quad\forall x,y\in H.
\end{eqnarray*}
\end{exam}

In the sequel, we study the relation between post-Hopf algebras and post-Lie algebras.

\begin{theorem}
Let $(H,\rhd)$ be a  post-Hopf algebra. Then its subspace $P(H)$ of primitive elements is a post-Lie algebra.
\end{theorem}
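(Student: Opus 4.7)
The plan is to equip $P(H)$ with the commutator bracket $[x,y]_{P(H)}:=x\cdot y-y\cdot x$ inherited from $H$ and the restriction of $\rhd$, and to verify the post-Lie axioms \eqref{Post-L-1}--\eqref{Post-L-2} directly from the defining identities \eqref{Post-2} and \eqref{Post-4} of a post-Hopf algebra, together with the auxiliary relations \eqref{Post-1} and \eqref{Post-3}.

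The first step is to check that $\rhd$ actually restricts to an operation on $P(H)$. For $x,y\in P(H)$ one has $\vep(x)=\vep(y)=0$ and $\Delta(x)=x\ot1+1\ot x$, $\Delta(y)=y\ot1+1\ot y$. Since $\rhd:H\ot H\to H$ is a coalgebra homomorphism,
\begin{eqnarray*}
\Delta(x\rhd y) &=& (x\rhd y)\ot(1\rhd 1)+(x\rhd 1)\ot(1\rhd y)\\
&& {}+(1\rhd y)\ot(x\rhd 1)+(1\rhd 1)\ot(x\rhd y).
\end{eqnarray*}
Using \eqref{Post-1} to kill the two middle terms and \eqref{Post-3} to simplify the outer ones yields $\Delta(x\rhd y)=(x\rhd y)\ot1+1\ot(x\rhd y)$, so $x\rhd y\in P(H)$.

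For the weak left Leibniz rule \eqref{Post-L-1}, I would apply \eqref{Post-2} with $x\in P(H)$, so that $\Delta(x)=x\ot1+1\ot x$ turns the general identity into
$$x\rhd(yz)=(x\rhd y)z+y(x\rhd z),$$
again using $1\rhd w=w$. Subtracting the analogous expansion of $x\rhd(zy)$ gives precisely $x\rhd[y,z]_{P(H)}=[x\rhd y,z]_{P(H)}+[y,x\rhd z]_{P(H)}$.

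The post-Lie associator identity \eqref{Post-L-2} comes from \eqref{Post-4}: for $x,y\in P(H)$, expanding $\Delta(x)$ and $\Delta(y)$ shows
$$x\rhd(y\rhd z)=\bigl(x\cdot y+x\rhd y\bigr)\rhd z,\qquad y\rhd(x\rhd z)=\bigl(y\cdot x+y\rhd x\bigr)\rhd z,$$
and subtracting yields $([x,y]_{P(H)}+x\rhd y-y\rhd x)\rhd z=x\rhd(y\rhd z)-y\rhd(x\rhd z)$. There is no real obstacle; the only point to be careful about is to check at the start that $\rhd$ descends to $P(H)$, so that all three expressions in each axiom indeed live in $P(H)$ and the computations take place there.
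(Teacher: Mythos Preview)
Your proposal is correct and follows essentially the same route as the paper's proof: first use that $\rhd$ is a coalgebra map together with \eqref{Post-1} and \eqref{Post-3} to see that $P(H)$ is closed under $\rhd$, then specialise \eqref{Post-2} and \eqref{Post-4} to primitive $x$ to obtain the derivation identity $x\rhd(yz)=(x\rhd y)z+y(x\rhd z)$ and the associator identity $x\rhd(y\rhd z)=(x\cdot y+x\rhd y)\rhd z$, from which \eqref{Post-L-1} and \eqref{Post-L-2} follow by antisymmetrisation. One tiny remark: in your handling of \eqref{Post-4} only $\Delta(x)$ needs to be expanded, not $\Delta(y)$, but the formula you write down is correct regardless.
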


\begin{proof}
Since $\rhd$ is a coalgebra homomorphism, for all $x,y\in P(H)$, we have
\begin{eqnarray*}
\Delta(x\rhd y)&=&(x_1\rhd y_1)\otimes (x_2\rhd y_2)\\
               &=&(1\rhd 1)\otimes (x\rhd y)+(1\rhd y)\otimes (x\rhd 1)+(x\rhd 1)\otimes (1\rhd y)+(x\rhd y)\otimes (1\rhd 1)\\
               &\stackrel{\eqref{Post-1},\,\eqref{Post-3}}{=}&1\otimes (x\rhd y)+(x\rhd y)\otimes 1.
\end{eqnarray*}
Thus, we obtain a linear map $\rhd:P(H)\otimes P(H)\to P(H)$. By \eqref{Post-2}, for all $x,y\in P(H)$, we have
\begin{eqnarray*}
x\rhd (y\cdot z)&=&(1\rhd y)\cdot(x\rhd z)+(x\rhd y)\cdot(1\rhd z)\\
                &\stackrel{\eqref{Post-3}}{=}&y\cdot(x\rhd z)+(x\rhd y)\cdot z.
\end{eqnarray*}
Thus, we have
\begin{eqnarray*}
x\rhd [y,z]&=&x\rhd (y\cdot z)-x\rhd (z\cdot y)\\
           &=&y\cdot(x\rhd z)+(x\rhd y)\cdot z-z\cdot(x\rhd y)-(x\rhd z)\cdot y\\
           &=&[x\rhd y,z]+[y,x\rhd z].
\end{eqnarray*}
By \eqref{Post-4}, we have
\begin{eqnarray*}
x\rhd (y\rhd z)&=&\big(1\cdot(x\rhd y)\big)\rhd z+\big(x\cdot(1\rhd y)\big)\rhd z=(x\rhd y)\rhd z+\big(x\cdot y\big)\rhd z.
\end{eqnarray*}
Thus, we have
\begin{eqnarray*}
[x,y]\rhd z&=&(x\cdot y)\rhd z-(y\cdot x)\rhd z\\
           &=&x\rhd (y\rhd z)-(x\rhd y)\rhd z-y\rhd (x\rhd z)+(y\rhd x)\rhd z.
\end{eqnarray*}
Therefore,  $(P(H),[\cdot,\cdot],\rhd)$ is a post-Lie algebra.
\end{proof}

In \cite{ELM,OG} the authors studied the universal enveloping algebra of a pre-Lie algebra and also of a post-Lie algebra. By \cite[Proposition~3.1, Theorem~ 3.4]{ELM}, the binary product  $\rhd$ in a post-Lie algebra $(\g,[\cdot,\cdot]_\g,\rhd)$ can be extended to its universal enveloping algebra and induces a subadjacent Hopf algebra structure isomorphic to the universal enveloping algebra $U(\g_\rhd)$ of the subadjacent Lie algebra $\g_\rhd$.

We summarize their result in the setting of post-Hopf algebras as follows.
We do not claim any originality (see   \cite{ELM,OG}
for details).

\begin{theorem}\label{th:post-uea}
Let $(\g,[\cdot,\cdot]_\g,\rhd)$ be a post-Lie algebra with its subadjacent Lie algebra $\g_\rhd$. Then  $(U(\g),{\bar \rhd})$ is a post-Hopf algebra, where ${\bar \rhd}$ is the extension of $\rhd$  determined by
$$1{\bar \rhd} u = u,\quad x_1\cdots x_r{\bar \rhd} u = x_1{\bar \rhd}(x_2\cdots x_r{\bar \rhd} u)-(x_1{\bar \rhd} x_2\cdots x_r){\bar \rhd} u$$
for all $x_1,\dots,x_r\in \g$ and $u\in U(\g)$ with $r\geq1$.

Moreover, the subadjacent Hopf algebra $U(\g)_{\bar \rhd}$ is isomorphic to the universal enveloping algebra $U(\g_\rhd)$ of the subadjacent Lie algebra $\g_\rhd$.
\end{theorem}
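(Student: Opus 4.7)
The plan is to treat this essentially as a repackaging of the Oudom--Guin / Ebrahimi-Fard--Lundervold--Munthe-Kaas construction into the post-Hopf language, so most of the work is already in the literature and the main thing to do is verify that their output fits our Definition~\ref{defi:pH} (in particular the convolution invertibility of $\alpha_{\bar\rhd}$, which is a new axiom relative to \cite{ELM,OG}). First I would extend $\rhd$ to $\bar\rhd\colon U(\g)\otimes U(\g)\to U(\g)$ using the recursive formula in the statement, together with $(x_1\cdots x_r)\bar\rhd(uv)=((x_1\cdots x_r)_{(1)}\bar\rhd u)\cdot((x_1\cdots x_r)_{(2)}\bar\rhd v)$ inherited from the coproduct of $U(\g)$, and check that $\bar\rhd$ descends through the ideal defining $U(\g)$ (i.e., is compatible with $xy-yx-[x,y]_\g$) by induction on the total length, using the post-Lie identities \eqref{Post-L-1}--\eqref{Post-L-2} as the base case.

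Next I would verify that $\bar\rhd$ is a coalgebra homomorphism. Since the elements of $\g$ are primitive and the recursive formula together with multiplicativity in the second slot respects primitivity (each sub-term in the recursion is of the right coalgebraic shape), a straightforward induction on $r$ plus the cocommutativity of $U(\g)$ gives $\Delta(u\bar\rhd v)=(u_{(1)}\bar\rhd v_{(1)})\otimes(u_{(2)}\bar\rhd v_{(2)})$ and $\vep(u\bar\rhd v)=\vep(u)\vep(v)$. The two post-Hopf compatibilities \eqref{Post-2} and \eqref{Post-4} are then checked inductively: \eqref{Post-2} is already built into the extension on the left slot, while \eqref{Post-4} reduces (on primitive elements and by cocommutativity) to the post-Lie identity \eqref{Post-L-2}, and the inductive step from length $r$ to $r+1$ is purely formal once \eqref{Post-4} is known at length~$r$.

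For the convolution invertibility I would define $\beta_{\bar\rhd,u}$ recursively on a PBW basis, or more slickly observe that once the subadjacent product $*_{\bar\rhd}$ is in place and $(U(\g),*_{\bar\rhd},1,\Delta,\vep)$ is a bialgebra, the composition formula $\alpha_{\bar\rhd,u*_{\bar\rhd}v}=\alpha_{\bar\rhd,u}\alpha_{\bar\rhd,v}$ (which is exactly \eqref{Post-4}) turns $\alpha_{\bar\rhd}$ into an algebra map from $(U(\g),*_{\bar\rhd})$ to $\End(U(\g))$. A convolution inverse then exists iff an antipode for $*_{\bar\rhd}$ exists; so one can bootstrap from the fact, proved in \cite{ELM}, that $*_{\bar\rhd}$ admits an antipode (equivalently, $U(\g)_{\bar\rhd}$ is a Hopf algebra), which gives $\beta_{\bar\rhd,u}=\alpha_{\bar\rhd,S_{\bar\rhd}(u)}$.

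Finally, for the isomorphism $U(\g)_{\bar\rhd}\cong U(\g_\rhd)$, I would use the universal property of $U(\g_\rhd)$: the inclusion $\iota\colon\g\hookrightarrow U(\g)_{\bar\rhd}$ lands in the primitives, and by construction $\iota(x)*_{\bar\rhd}\iota(y)-\iota(y)*_{\bar\rhd}\iota(x)=xy-yx+x\rhd y-y\rhd x=[x,y]_{\g_\rhd}$, so $\iota$ is a Lie homomorphism $\g_\rhd\to P(U(\g)_{\bar\rhd})$ and induces an algebra map $\Phi\colon U(\g_\rhd)\to U(\g)_{\bar\rhd}$. Filtering both sides by the canonical PBW filtration and comparing associated graded algebras (both are $\Sym(\g)$) shows $\Phi$ is an isomorphism of Hopf algebras. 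The main obstacle is the first step: verifying that the recursion descends to $U(\g)$ and satisfies \eqref{Post-4}, which is the content of the Oudom--Guin lemma and is the only place where the post-Lie axioms are genuinely used; everything else is formal consequences of cocommutativity and induction on PBW length.
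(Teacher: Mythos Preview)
Your proposal is correct and matches the paper's approach exactly: the paper gives no proof at all, stating explicitly ``We do not claim any originality (see \cite{ELM,OG} for details)'' and citing \cite[Proposition~3.1, Theorem~3.4]{ELM} for both the extension of $\rhd$ and the identification of the subadjacent Hopf algebra with $U(\g_\rhd)$. You have in fact gone further than the paper by sketching how to verify the convolution-invertibility axiom \eqref{Post-con}, which is the one genuinely new condition in Definition~\ref{defi:pH} not present in \cite{ELM,OG}; your observation that $\beta_{\bar\rhd,u}=\alpha_{\bar\rhd,S_{\bar\rhd}(u)}$ works once the antipode $S_{\bar\rhd}$ is supplied by \cite{ELM} is correct and avoids any circularity with Theorem~\ref{thm:subHopf}.
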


In a recent work \cite{Foissy}, Foissy extended any magma operation  on a vector space $V$, i.e. an arbitrary  bilinear map $\cast:V \otimes V\to V$,
to the coshuffle Hopf algebra $(TV,\cdot,\Delta^{\co})$ as follows:
\begin{eqnarray*}\label{post-brace-algebra-1}
1\rhd a&=&a,\\
x\rhd a&=&x\cast a,\\
(x\otimes x_1) \rhd a&=&x\cast (x_1\rhd a)-(x\cast x_1)\rhd a,\\
&\vdots&\\
(x\otimes x_1\otimes\cdots \otimes x_n) \rhd a&=&x\cast \big((x_1\otimes\cdots \otimes x_n)\rhd a\big)\\
&&-\sum_{i=1}^{n}\big(x_1\otimes\cdots \otimes x_{i-1}\otimes(x\cast x_i) \otimes x_{i+1}\otimes\cdots\otimes x_n\big)\rhd a,
\end{eqnarray*}
and
\begin{eqnarray*}\label{post-brace-algebra-2}
1\rhd 1&=&1,\\
x\rhd 1&=&0,\\
X \rhd (a_1\otimes\cdots\otimes a_m)&=&(X_1\rhd a_1)\otimes\cdots\otimes(X_m\rhd a_m),
\end{eqnarray*}
where $x,x_1,\dots,x_n,a,a_1,\dots,a_m\in V$ and $X\in TV,~{\Delta^{\co}}^{(m-1)}X=X_1\otimes\cdots\otimes X_m$.

According to the discussion in \cite{Foissy}, it is straightforward to obtain the following result.

\begin{theorem}\label{magma-to-post}
Let $(V,\cast)$ be a magma algebra. Then $(TV,\cdot,\Delta^{\co},\rhd)$ is a post-Hopf algebra.
\end{theorem}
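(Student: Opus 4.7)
The plan is to verify, in order, each of the four conditions in Definition \ref{defi:pH}, following the computations laid out in \cite{Foissy}. The first block of recursive formulas specifies $X\rhd a$ for $X\in TV$ and $a\in V$ by induction on the length of $X$, and the second block extends the action to arbitrary pure tensor targets via the coshuffle comultiplication; together these uniquely determine a bilinear map $\rhd:TV\otimes TV\to TV$.

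First I would check that $\rhd$ is a coalgebra homomorphism. The second block directly encodes the compatibility between $\rhd$ and $\Delta^{\co}$ on the target by splitting $\Delta^{\co}(X)$ through the tensor slots of $a_1\otimes\cdots\otimes a_m$, while counit compatibility follows from $x\rhd 1=0$ for $x\in V$ together with the multiplicativity of $\vep$. Compatibility on the source is then propagated by induction on the length of $X$, using the first block of formulas. Next, (Post-2) follows essentially tautologically from the same block: since the product in $TV$ is concatenation, writing $y=a_1\otimes\cdots\otimes a_p$ and $z=b_1\otimes\cdots\otimes b_q$ and unpacking $\Delta^{\co,(p+q-1)}(X)$ as $\Delta^{\co,(p-1)}(X_1)\otimes\Delta^{\co,(q-1)}(X_2)$ matches the two sides.

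The heart of the argument is (Post-4), $x\rhd(y\rhd z)=(x_1\cdot(x_2\rhd y))\rhd z$. The first block is precisely engineered so that, for $x\in V$, $y=x_1\otimes\cdots\otimes x_n\in TV$, and $z=a\in V$, rewriting the defining recursion yields
\[
x\rhd(y\rhd a)=(x\otimes y)\rhd a+\sum_{i=1}^{n}\bigl(x_1\otimes\cdots\otimes(x\cast x_i)\otimes\cdots\otimes x_n\bigr)\rhd a.
\]
Using $\Delta^{\co}(x)=1\otimes x+x\otimes 1$ and identifying the summation as $(x\rhd y)\rhd a$ via the second block, this becomes $(x\cdot y+x\rhd y)\rhd a=(x_1\cdot(x_2\rhd y))\rhd a$. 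The general case of arbitrary $x,y,z\in TV$ then follows by a double induction on the length of $x$ and on the length of $z$, using (Post-2) and the coalgebra homomorphism property established above to reduce to this base case.

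Finally, convolution invertibility of $\alpha_\rhd\in\Hom(TV,\End(TV))$ follows from the standard fact that for a graded connected coalgebra $C$ and an algebra $A$, any $f\in\Hom(C,A)$ with $f(1_C)$ invertible in $A$ admits a two-sided convolution inverse; here $TV$ is graded connected and $\alpha_{\rhd,1}=\id_{TV}$ is invertible in $\End(TV)$. The main obstacle is the bookkeeping in the inductive proof of (Post-4), which has to reconcile the magma product $\cast$, the concatenation product, and the coshuffle coproduct simultaneously; this is precisely the computation carried out in \cite{Foissy}, and it is what is meant by calling the remaining verification straightforward.
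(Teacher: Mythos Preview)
Your proposal is correct and matches the paper's approach: the paper does not give a proof at all but simply defers to the computations in \cite{Foissy}, and your outline is precisely a roadmap through those computations, including the observation that the recursion for $(x\otimes y)\rhd a$ encodes (Post-4) for primitive $x$ and the use of graded connectedness of $TV$ to secure convolution invertibility. Your write-up is in fact more detailed than the paper's one-line citation, so there is nothing to correct.
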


\begin{exam}\label{free-post-lie}
Let $\huaO$ be the set of isomorphism classes of ordered rooted trees, which is denoted  by
\[
	\huaO= \Big\{\begin{array}{c}
		\scalebox{0.6}{\ab}, \scalebox{0.6}{\aabb},
		\scalebox{0.6}{\aababb}, \scalebox{0.6}{\aaabbb},\scalebox{0.6}{\aabababb},
		\scalebox{0.6}{\aaabbabb},
		\scalebox{0.6}{\aabaabbb}, \scalebox{0.6}{\aaababbb}, \scalebox{0.6}{\aaaabbbb},\scalebox{0.6}{\aababababb},\scalebox{0.6}{\aaabbababb},\scalebox{0.6}{\aabaabbabb},\scalebox{0.6}{\aababaabbb},\scalebox{0.6}{\aaababbabb},\scalebox{0.6}{\aabaababbb},\scalebox{0.6}{\aaabbaabbb},\scalebox{0.6}{\aaabababbb},\scalebox{0.6}{\aaaabbbabb},\scalebox{0.6}{\aabaaabbbb},\scalebox{0.6}{\aaaababbbb},\scalebox{0.6}{\aaaabbabbb},\scalebox{0.6}{\aaabaabbbb},\scalebox{0.6}{\aaaaabbbbb},\ldots
 			\end{array}
			\Big\}.
\]
Let $\bk\{\huaO\}$ be the free $\bk$-vector space generated by  $\huaO$. The {\bf left grafting operator} $\curvearrowright:\bk\{\huaO\}\otimes \bk\{\huaO\}\to \bk\{\huaO\}$ is defined by
\begin{eqnarray*}
\tau\curvearrowright \omega=\sum_{s\in {\rm Nodes}(\omega)}\tau\circ_{s}\omega,\quad \forall \tau,\omega\in \huaO,
\end{eqnarray*}
where $\tau\circ_{s}\omega$ is the ordered rooted tree resulting from attaching the root of $\tau$ to  the node $s$ of the tree $\omega$ from the left. For example, we have
\begin{eqnarray*}
\scalebox{0.6}{\aabb}\curvearrowright \scalebox{0.6}{\aabababb}=\scalebox{0.6}{\aaabbabababb}+\scalebox{0.6}{\aaaabbbababb}+\scalebox{0.6}{\aabaaabbbabb}+\scalebox{0.6}{\aababaaabbbb},\qquad
\scalebox{0.6}{\aababb}\curvearrowright\scalebox{0.6}{\aaabbabb}=\scalebox{0.6}{\aaababbaabbabb}+\scalebox{0.6}{\aaaababbabbabb}+\scalebox{0.6}{\aaaaababbbbabb}+\scalebox{0.6}{\aaabbaaababbbb}.
\end{eqnarray*}
It is obvious that $(\bk\{\huaO\},\curvearrowright)$ is a magma algebra. By Theorem \ref{magma-to-post}, $(T\bk\{\huaO\},\cdot,\Delta^{\co},\rhd)$ is a post-Hopf algebra,
where the underlying coshuffle Hopf algebra $(T\bk\{\huaO\},\cdot,\Delta^{\co})$ has the linear basis consisting of all ordered rooted forests and its antipode $S$ is given by
$$S(\tau_1 \tau_2 \cdots \tau_m)=(-1)^m \tau_m \tau_{m-1} \cdots \tau_1,\quad\forall \tau_1,\tau_2,\dots,\tau_m\in\huaO.$$
Moreover, it is the universal enveloping algebra of the free post-Lie algebra on one generator $\{\scalebox{0.6}{\ab}\}$. See \cite{Foissy,ML} for more details about free post-Lie algebras and their universal enveloping algebras.

Let $B^+:T\bk\{\huaO\}\to \bk\{\huaO\}$ be the linear map
  producing an ordered tree $\tau$ from any ordered rooted forest $\tau_1\cdots\tau_m$ by grafting the $m$ trees $\tau_1,\dots,\tau_m$ on a new root $\scalebox{0.6}{\ab}$ in order.
For example, we have
\begin{eqnarray*}
B^+(\scalebox{0.6}{\aabb\,\,\,\aababb})=\scalebox{0.6}{\aaabbaababbb}.
\end{eqnarray*}
Let $B^-:\bk\{\huaO\}\to T\bk\{\huaO\}$  be the linear map
 producing an ordered forest from any ordered rooted tree $\tau$ by removing its root.
For example, we have
\begin{eqnarray*}
B^-(\scalebox{0.6}{\aababaaabbbb})=\scalebox{0.6}{\ab\,\,\ab\,\,\aaabbb}.
\end{eqnarray*}
Moreover, the operation $B^-$ extends to $T\bk\{\huaO\}$ by
\begin{eqnarray*}
B^-(\tau_1\cdots\tau_m)=B^-(\tau_1)\cdots B^-(\tau_m),\quad\forall \tau_1,\dots,\tau_m\in\huaO.
\end{eqnarray*}

Note that the subadjacent Hopf algebra $(T\bk\{\huaO\},{*_\rhd},\Delta^{\co},S_\rhd)$ is isomorphic to the Grossman-Larson Hopf algebra of ordered rooted trees defined in \cite{GL}. Using the left grafting operation, the multiplication $*_\rhd$ is given by
\begin{eqnarray*}
\huaX *_\rhd \huaY&=&B^-(\huaX\rhd B^+(\huaY))
\end{eqnarray*}
for all ordered rooted forests $\huaX,\huaY$, and the antipode $S_\rhd$ can be recursively defined by
$$S_\rhd(1) = 1,\quad
S_\rhd(\huaX) \stackrel{\eqref{Post-3},\,\eqref{post-rbb-2'}}{=}
S(\huaX)+(\Id-\mu\vep)(S_\rhd(\huaX_1))\rhd S(\huaX_2),$$
where $\mu$ is the unit map and $\vep$ is the counit map.
\end{exam}

\begin{exam}
We classify all post-Hopf algebra structures on the smallest noncommutative and non-cocommutative Hopf algebra, namely, Sweedler's 4-dimensional Hopf algebra
\[H_4=\bk\langle 1,g,x,gx\,|\, g^2=1,x^2=0,gx=-xg\rangle,\]
with its coalgebra structure and its antipode given by
\[\Delta(g)=g\otimes g,\quad \Delta(x)=x\otimes 1+g\otimes x,\quad \vep(g)=1,\quad \vep(x)=0,\quad S(g)=g,\quad S(x)=-gx.\]
Further $G(H_4)=\{1,g\}$, $P_{1,g}(H_4)=\bk x$ and $P_{g,1}(H_4)=\bk gx$.

Let $(H_4,\rhd)$ be a post-Hopf algebra structure on $H_4$. Then
\begin{eqnarray*}
\Delta(g\rhd g) &=& (g\rhd g)\otimes(g\rhd g),\\
\Delta(g\rhd x) &=& (g\rhd x)\otimes(g\rhd 1)+(g\rhd g)\otimes(g\rhd x) \stackrel{\eqref{Post-1}}{=}   (g\rhd x)\otimes 1 + (g\rhd g)\otimes(g\rhd x).
\end{eqnarray*}
Namely, $g\rhd g\in G(H_4)$ and $g\rhd x\in P_{1,g\rhd g}(H_4)$. Since $g\in G(H_4)$ implies that $\alpha_{\rhd,g}$ is invertible by Eq.~\eqref{Post-con}, we know that $g\rhd g=g$ and $g\rhd x\in P_{1,g}(H_4)\setminus\{0\}$. Also,
\begin{eqnarray*}
g\rhd(g\rhd x) \stackrel{\eqref{Post-1}}{=} (g(g\rhd g)) \rhd x = g^2\rhd x= 1\rhd x \stackrel{\eqref{Post-3}}{=} x.
\end{eqnarray*}
Therefore, $g\rhd x=x$ or $-x$. On the other hand,
\begin{eqnarray*}
\Delta(x\rhd g) &=& (x\rhd g)\otimes (1\rhd g) + (g\rhd g)\otimes (x\rhd g) \stackrel{\eqref{Post-3}}{=}  (x\rhd g)\otimes g + g\otimes (x\rhd g).
\end{eqnarray*}
Then $x\rhd g\in P_{g,g}(H_4)$, and thus $x\rhd g=0$. So
\begin{eqnarray*}
\Delta(x\rhd x) &=& (x\rhd x)\otimes (1\rhd 1) + (g\rhd x)\otimes (x\rhd 1)
+ (x\rhd g)\otimes (1\rhd x) + (g\rhd g)\otimes (x\rhd x)\\ &\stackrel{\eqref{Post-1},\eqref{Post-3}}{=}&
(x\rhd x)\otimes 1 + g\otimes (x\rhd x).
\end{eqnarray*}
That is, $x\rhd x\in P_{1,g}(H_4)$, and we can set $x\rhd x=a x$ for some $a\in \bk$.
Then
\begin{eqnarray*}
a(g\rhd x) &=& x\rhd(g\rhd x) \stackrel{\eqref{Post-4}}{=} (x(1\rhd g)+g(x\rhd g)) \rhd x = xg\rhd x = -gx\rhd x\\
 &=& g\rhd(x\rhd x) \stackrel{\eqref{Post-4}}{=} (g(g\rhd x)) \rhd x.
\end{eqnarray*}
It implies that $g\rhd x=-x$ unless $a=0$.

In summary, one can easily check that there is
the post-Hopf algebra structure $(H_4,\rhd_a)$ for any $a\in\bk$ illustrated as below, 
such that $\alpha_{\rhd_a}$ has the convolution inverse $\alpha_{\rhd_{-a}}$.
\begin{center}
\begin{tabular}{|c|c|c|c|c|}
\hline
$\rhd_a$ & $1$ & $g$ & $x$ & $gx$\\
\hline
$1$ & $1$ & $g$ & $x$ & $gx$\\
\hline
$g$ & $1$ & $g$ & $-x$ & $-gx$\\
\hline
$x$ & $0$ & $0$ & $ax$ & $agx$\\
\hline
$gx$ & $0$ & $0$ & $ax$ & $agx$\\
\hline
\end{tabular}
\end{center}
Moreover, if $a\neq 0$, there is the post-Hopf algebra isomorphism from $(H_4,\rhd_a)$ to $(H_4,\rhd_1)$ mapping $g$ to $g$ and $x$ to $ax$. Hence, the Sweedler 4-dimensional Hopf algebra has three non-isomorphic post-Hopf algebra structures
 $(H_4,\vep\otimes\id)$, $(H_4,\rhd_0)$ and $(H_4,\rhd_1)$.

\delete{
It is interesting to see that the {\it non-cocommutative} post-Hopf algebra $(H_4,\rhd_a)$ also has the subadjacent algebra $H_{4,\rhd_a}$ with its multiplication $*_{\rhd_a}$ defined as in Eq.~\eqref{post-rbb-1}, which becomes commutative as shown below but not compatible with the original comultiplication $\Delta$.
\begin{center}
\begin{tabular}{|c|c|c|c|c|}
\hline
$*_{\rhd_a}$ & $1$ & $g$ & $x$ & $gx$\\
\hline
$1$ & $1$ & $g$ & $x$ & $gx$\\
\hline
$g$ & $g$ & $1$ & $-gx$ & $-x$\\
\hline
$x$ & $x$ & $-gx$ & $agx$ & $ax$\\
\hline
$gx$ & $gx$ & $-x$ & $ax$ & $agx$\\
\hline
\end{tabular}
\end{center}
}
\end{exam}

\subsection{Post-Hopf algebras and Hopf braces}
In this subsection, we establish the relation between Hopf braces and post-Hopf algebras.

In \cite{AGV}, Angiono, Galindo and Vendramin introduced the notion of Hopf braces, generalizing Rump's braces and their nonabelian generalizations, skew left braces \cite{GV,JKVV,Rump1,Ru}, which are stemmed from group theory.
All these algebraic objects have deep relations with the  Yang-Baxter equation.

\begin{defn}[{\cite[Definition~1.1]{AGV}}]
Let $(A,\Delta,\vep)$ be a coalgebra. A {\bf Hopf brace} over $A$ consist of two Hopf algebra structures $(A,\cdot,1,\Delta,\vep,S)$ and $(A,\circ,1,\Delta,\vep,\huaS)$ satisfying the following compatibility condition,
\begin{equation}\label{eq:brace}
a\circ(b\cdot c)=(a_1\circ b)\cdot S(a_2)\cdot(a_3\circ c),\quad\forall a,b,c\in A.
\end{equation}
\end{defn}
We will simply denote a Hopf brace by $(A,\cdot,\circ)$.

\begin{theorem}
Let $(H,\rhd)$ be a cocommutative post-Hopf algebra. Then the Hopf algebra $(H,\cdot,1,\Delta,\vep,S)$  and  the subadjacent Hopf algebra $(H,*_\rhd,1,\Delta,\vep,S_\rhd)$ form a Hopf brace. Conversely, any cocommutative Hopf brace $(H,\cdot,\circ)$ gives a post-Hopf algebra  $(H,\rhd)$ with $\rhd$ defined by $$x\rhd y=S(x_1)\cdot(x_2\circ y),\quad \forall x,y\in H.$$
\end{theorem}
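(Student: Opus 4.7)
The forward direction is immediate. By Theorem \ref{thm:subHopf}, $H_\rhd = (H,*_\rhd,1,\Delta,\vep,S_\rhd)$ is a Hopf algebra sharing its coalgebra with $(H,\cdot,1,\Delta,\vep,S)$, so only the brace compatibility \eqref{eq:brace} remains to be verified. Using the definition $a*_\rhd b = a_1\cdot(a_2\rhd b)$ together with \eqref{Post-2} and the antipode axiom $S(x_3)x_4 = \vep(x_3)1$, both sides of \eqref{eq:brace} collapse to $x_1\cdot(x_2\rhd y)\cdot(x_3\rhd z)$.

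For the converse, given a cocommutative Hopf brace $(H,\cdot,\circ)$ with antipodes $S$ and $\huaS$, I would define $x\rhd y := S(x_1)\cdot(x_2\circ y)$ and verify each clause of Definition \ref{defi:pH}. First, $\rhd$ is a coalgebra morphism, since in the cocommutative setting $S$ is a coalgebra morphism and $\cdot,\circ$ are bialgebra multiplications and hence coalgebra morphisms, so $\rhd$ is a composition of coalgebra morphisms. Second, the associated subadjacent product coincides with $\circ$:
\[ x *_\rhd y = x_1\cdot(x_2\rhd y) = x_1\cdot S(x_2)\cdot(x_3\circ y) = \vep(x_1)(x_2\circ y) = x\circ y. \]
Third, axiom \eqref{Post-2} is a direct consequence of \eqref{eq:brace}: expanding $x\rhd(y\cdot z) = S(x_1)\cdot(x_2\circ(y\cdot z))$ via \eqref{eq:brace} yields exactly $(x_1\rhd y)\cdot(x_2\rhd z)$.

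The main step is \eqref{Post-4}, which, after invoking the identity $x_1\cdot(x_2\rhd y) = x\circ y$, reduces to $x\rhd(y\rhd z) = (x\circ y)\rhd z$, i.e.\ that $\lambda_x := \alpha_{\rhd,x}$ defines a left action of $(H,\circ)$. Unfolding the right-hand side using coassociativity and associativity of $\circ$ yields $S(x_1\circ y_1)\cdot(x_2\circ y_2\circ z)$, while unfolding the left-hand side and applying \eqref{eq:brace} inside $x_2\circ(S(y_1)\cdot(y_2\circ z))$ produces $S(x_1)\cdot(x_2\circ S(y_1))\cdot S(x_3)\cdot(x_4\circ y_2\circ z)$. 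The equality of these two expressions is a standard Hopf-brace identity (cf.~\cite{AGV}), whose proof uses cocommutativity to rearrange Sweedler indices and re-applies the brace axiom; this is the main obstacle.

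Finally, for convolution invertibility of $\alpha_\rhd$, setting $\beta_{\rhd,x}(y) := \huaS(x)\rhd y$ and applying \eqref{Post-4} together with the subadjacent-product identity gives
\[ \alpha_{\rhd,x_1}\beta_{\rhd,x_2}(y) = (x_1\circ\huaS(x_2))\rhd y = \vep(x)(1\rhd y) = \vep(x)y, \]
by the antipode axiom for $\circ$ and \eqref{Post-3}; the opposite convolution identity is symmetric, and this completes the verification.
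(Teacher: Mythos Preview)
Your forward direction matches the paper's proof exactly: the paper also applies \eqref{Post-2} and then inserts $S(x_3)\cdot x_4$ to move between the two sides of the brace identity, which is precisely your ``both sides collapse to $x_1\cdot(x_2\rhd y)\cdot(x_3\rhd z)$'' argument.

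For the converse, the paper gives no details at all, only the sentence ``it is straightforward but tedious to check.'' Your outline is correct and considerably more informative. Two small remarks. First, your appeal to \eqref{Post-3} at the end is formally circular: in the paper that equation is a \emph{consequence} of the post-Hopf axioms you are in the process of verifying. However, in the present setting the identity $1\rhd y=S(1)\cdot(1\circ y)=y$ is immediate from the definition of $\rhd$, so simply replace the citation by this one-line computation. Second, you rightly isolate the crux of \eqref{Post-4} as the equality $\lambda_{x\circ y}=\lambda_x\lambda_y$ for $\lambda_x(y)=S(x_1)\cdot(x_2\circ y)$, and defer to \cite{AGV}. This is legitimate (and more than the paper offers); concretely, the needed input is that in any cocommutative Hopf brace the map $x\mapsto\lambda_x$ makes $H$ an $(H,\circ)$-module, which is established in \cite{AGV}. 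Once that is in hand, your convolution-inverse $\beta_{\rhd,x}=\alpha_{\rhd,\huaS(x)}$ works as stated.
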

\begin{proof}
Let $(H,\rhd)$ be a cocommutative post-Hopf algebra. We only need to show that the multiplications  $\cdot$ and $*_\rhd$ satisfy the compatibility  condition \eqref{eq:brace}, which follows from
\begin{align*}
x*_\rhd(y\cdot z)&=x_1\cdot (x_2\rhd(y\cdot z))\\
&=x_1\cdot((x_2\rhd y)\cdot (x_3\rhd z))\\
&=x_1\cdot(x_2\rhd y)\cdot S(x_3)\cdot x_4\cdot (x_5\rhd z)\\
&=(x_1*_\rhd y)\cdot S(x_2)\cdot (x_3*_\rhd z),
\end{align*}
for any $x,y,z\in H$.

Conversely, it is straightforward but tedious to check that a cocommutative Hopf brace $(H,\cdot,\circ)$ induces a post-Hopf algebra  $(H,\rhd)$.
\end{proof}


\subsection{Pre-Hopf algebras}

A post-Lie algebra $(\g,[\cdot,\cdot]_\g,\triangleright )$ reduces to a pre-Lie algebra if the Lie bracket $  [\cdot,\cdot]_\g$ is abelian. More precisely,  a {\bf pre-Lie algebra} $(\g, \rhd)$ is a vector space $\g$ equipped with a binary product $\rhd:\g\otimes\g\to\g$ such that
\begin{eqnarray}
(x\rhd y)\rhd z-x\rhd (y\rhd z)=(y\rhd x)\rhd z-y\rhd (x\rhd z),\quad \forall x,y,z\in \g.
\end{eqnarray}
From this perspective, we introduce the notion of   pre-Hopf algebras as   special post-Hopf algebras.

\begin{defn}
  A post-Hopf algebra $(H,\rhd)$ is called a {\bf pre-Hopf algebra} if $H$ is a commutative Hopf algebra.
\end{defn}

The above properties for post-Hopf algebras are still valid for pre-Hopf algebras.

\begin{coro}
 Let $(H,\rhd)$ be a cocommutative pre-Hopf algebra. Then
$$H_\rhd\coloneqq(H,*_\rhd,1,\Delta,\vep,S_\rhd)$$
is a Hopf algebra, which is called the {\bf subadjacent Hopf algebra}, where the multiplication $*_\rhd$ and the antipode $S_\rhd$ are given by \eqref{post-rbb-1} and \eqref{post-rbb-2} respectively.

 Moreover, $H$ is a left $H_\rhd$-module bialgebra via the action $\rhd$.
\end{coro}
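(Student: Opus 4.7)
The plan is to derive this corollary as an immediate specialization of Theorem \ref{thm:subHopf}. By Definition, a pre-Hopf algebra is nothing but a post-Hopf algebra $(H,\rhd)$ whose underlying Hopf algebra happens to be commutative. Consequently, a cocommutative pre-Hopf algebra is, in particular, a cocommutative post-Hopf algebra, and the entire hypothesis of Theorem \ref{thm:subHopf} is automatically fulfilled.

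First I would invoke Theorem \ref{thm:subHopf} directly to obtain that $H_\rhd = (H,*_\rhd,1,\Delta,\vep,S_\rhd)$, with $*_\rhd$ and $S_\rhd$ defined by Eqs.~\eqref{post-rbb-1} and \eqref{post-rbb-2}, is a Hopf algebra. The antipode formula is well defined because the convolution inverse $\beta_\rhd$ of $\alpha_\rhd$ is part of the data of any post-Hopf algebra by Eq.~\eqref{Post-con}. Next, I would invoke the same theorem to conclude that $(H,\cdot,1,\Delta,\vep,S)$ is a left $H_\rhd$-module bialgebra via the action $\rhd$, i.e.\ that $(x*_\rhd y)\rhd z=x\rhd(y\rhd z)$, $x\rhd 1=\vep(x)1$, and $\Delta(x\rhd y)=(x_1\rhd y_1)\otimes(x_2\rhd y_2)$.

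There is no genuine obstacle to the proof: the commutativity of the multiplication $\cdot$ on $H$ plays no role in the construction of the subadjacent Hopf algebra or in the verification of the module bialgebra axioms carried out in the proof of Theorem \ref{thm:subHopf}; only the cocommutativity of $\Delta$ is essential there, and this hypothesis is retained in the present corollary. Hence no new calculation is required, and the corollary follows verbatim from Theorem \ref{thm:subHopf}.
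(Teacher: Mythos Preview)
Your proposal is correct and matches the paper's approach exactly: the paper states this corollary without proof, simply noting that ``the above properties for post-Hopf algebras are still valid for pre-Hopf algebras,'' i.e.\ it is an immediate specialization of Theorem~\ref{thm:subHopf}. Your observation that commutativity of $\cdot$ plays no role in that theorem's proof is precisely why the corollary follows verbatim.
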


\begin{coro}
Let $(H,\rhd)$ be a  pre-Hopf algebra. Then its subspace $P(H)$ of primitive elements is a pre-Lie algebra.
\end{coro}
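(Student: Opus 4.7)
The plan is to leverage the preceding theorem that says every post-Hopf algebra $(H,\rhd)$ induces a post-Lie algebra structure $(P(H),[\cdot,\cdot],\rhd)$, and then observe that the extra commutativity hypothesis in a pre-Hopf algebra forces the Lie bracket on $P(H)$ to be trivial, collapsing the post-Lie axioms to the pre-Lie axiom.

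First I would recall that for any Hopf algebra $H$, the subspace $P(H)$ inherits the Lie bracket $[x,y]=x\cdot y-y\cdot x$, and that the restriction $\rhd:P(H)\otimes P(H)\to P(H)$ together with this bracket was shown to satisfy \eqref{Post-L-1}--\eqref{Post-L-2}. Now in a pre-Hopf algebra the underlying Hopf algebra $(H,\cdot,1,\Delta,\vep,S)$ is commutative, so $x\cdot y=y\cdot x$ for all $x,y\in H$ and in particular $[x,y]=0$ for all $x,y\in P(H)$.

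With $[\cdot,\cdot]\equiv 0$ on $P(H)$, equation \eqref{Post-L-1} is automatically satisfied (both sides vanish), while \eqref{Post-L-2} becomes
\[
(x\rhd y-y\rhd x)\rhd z=x\rhd(y\rhd z)-y\rhd(x\rhd z),\quad \forall x,y,z\in P(H),
\]
which is exactly the defining identity of a pre-Lie algebra. Hence $(P(H),\rhd)$ is a pre-Lie algebra, as required.

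There is no real obstacle here; the only point worth emphasizing is that commutativity of $H$ is being used solely to kill the commutator bracket on $P(H)$, not to impose any new constraint on $\rhd$ itself. In fact, the result is a direct corollary of the post-Lie theorem together with the trivial observation that a post-Lie algebra with abelian Lie bracket is precisely a pre-Lie algebra, so I would simply present the argument as a short specialization rather than redoing any of the verification of \eqref{Post-L-2}.
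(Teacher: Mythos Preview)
Your proposal is correct and is exactly the intended argument: the paper states this corollary without proof, relying on the preceding theorem that $P(H)$ is a post-Lie algebra together with the remark that a post-Lie algebra with abelian bracket is a pre-Lie algebra, and commutativity of $H$ makes the commutator bracket on $P(H)$ vanish. There is nothing to add.
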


Recall that a pre-Lie algebra $(\g, \rhd)$ also gives rise to a subadjacent Lie algebra $\g_\rhd$ in which the Lie bracket is defined by
$$
[x,y]_{\g_\rhd}=x\rhd y-y\rhd x,\quad \forall x,y\in \g.
$$

\begin{coro}
Let $(\g, \rhd)$ be a pre-Lie algebra with its subadjacent Lie algebra $\g_\rhd$. Then the product $\rhd$  can be extended to the one ${\bar \rhd}$ on the symmetric algebra $\Sym(\g)$, making it a pre-Hopf algebra.
Moreover, the subadjacent Hopf algebra $\Sym(\g)_{\bar \rhd}$ is isomorphic to the universal enveloping algebra $U(\g_\rhd)$ of the subadjacent Lie algebra $\g_\rhd$.
\end{coro}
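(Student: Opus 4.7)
The plan is to deduce this statement as a direct specialization of Theorem \ref{th:post-uea}. A pre-Lie algebra $(\g,\rhd)$ is precisely a post-Lie algebra $(\g,[\cdot,\cdot]_\g,\rhd)$ whose Lie bracket $[\cdot,\cdot]_\g$ is abelian (i.e.\ identically zero). For such an abelian Lie algebra, the universal enveloping algebra coincides with the symmetric algebra $\Sym(\g)$ as a Hopf algebra, with $\g$ sitting inside as the space of primitive elements. Thus the symmetric algebra is the correct object to carry the extended operation.

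First I would invoke Theorem \ref{th:post-uea} applied to $(\g,0,\rhd)$. This yields a post-Hopf algebra structure $(U(\g),\bar\rhd)=(\Sym(\g),\bar\rhd)$, where $\bar\rhd$ is the unique extension of $\rhd$ determined by the recursion
\begin{eqnarray*}
1\,\bar\rhd\, u &=& u,\\
x_1\cdots x_r\,\bar\rhd\, u &=& x_1\,\bar\rhd\,(x_2\cdots x_r\,\bar\rhd\, u)-(x_1\,\bar\rhd\, x_2\cdots x_r)\,\bar\rhd\, u,
\end{eqnarray*}
for $x_1,\dots,x_r\in\g$ and $u\in\Sym(\g)$. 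Since $\Sym(\g)$ is commutative, $(\Sym(\g),\bar\rhd)$ is by definition a pre-Hopf algebra.

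Next, the second part of Theorem \ref{th:post-uea} gives the Hopf algebra isomorphism $\Sym(\g)_{\bar\rhd}\cong U(\g_\rhd)$ directly, where $\g_\rhd$ here is the subadjacent Lie algebra of the pre-Lie algebra $(\g,\rhd)$ (whose bracket $[x,y]_{\g_\rhd}=x\rhd y-y\rhd x$ is exactly the specialization of the post-Lie subadjacent bracket when $[\cdot,\cdot]_\g=0$). There is no additional obstacle beyond checking that the abelian case is covered by the cited theorem; the only mild point worth remarking is the identification $U(\g)=\Sym(\g)$ when $\g$ is abelian, which is standard. Hence the whole statement reduces to citing Theorem \ref{th:post-uea} in the abelian case, and the proof can be given in essentially one line.
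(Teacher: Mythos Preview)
Your proposal is correct and matches the paper's approach: the corollary is stated without proof, being an immediate specialization of Theorem~\ref{th:post-uea} to the abelian case $[\cdot,\cdot]_\g=0$, where $U(\g)=\Sym(\g)$. There is nothing to add.
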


\begin{exam}\label{free-pre-lie}
Let $\huaT$ be the set of isomorphism classes of rooted trees, which is denoted  by
\[
	\huaT= \Big\{\begin{array}{c}
		\scalebox{0.6}{\ab}, \scalebox{0.6}{\aabb},
		\scalebox{0.6}{\aababb}, \scalebox{0.6}{\aaabbb},\scalebox{0.6}{\aabababb},
		\scalebox{0.6}{\aaabbabb}=
		\scalebox{0.6}{\aabaabbb}, \scalebox{0.6}{\aaababbb}, \scalebox{0.6}{\aaaabbbb},\scalebox{0.6}{\aababababb},\scalebox{0.6}{\aaabbababb}=\scalebox{0.6}{\aabaabbabb}=\scalebox{0.6}{\aababaabbb},\scalebox{0.6}{\aaababbabb}=\scalebox{0.6}{\aabaababbb},\scalebox{0.6}{\aaabbaabbb},\scalebox{0.6}{\aaabababbb},\scalebox{0.6}{\aaaabbbabb}=\scalebox{0.6}{\aabaaabbbb},\scalebox{0.6}{\aaaababbbb},\scalebox{0.6}{\aaaabbabbb}=\scalebox{0.6}{\aaabaabbbb},\scalebox{0.6}{\aaaaabbbbb},\ldots
 			\end{array}
			\Big\}.
\]
Let $\bk\{\huaT\}$ be the free $\bk$-vector space generated by  $\huaT$. The grafting operator $\curvearrowright:\bk\{\huaT\}\otimes \bk\{\huaT\}\to \bk\{\huaT\}$ is defined by
\begin{eqnarray*}
\tau\curvearrowright \omega=\sum_{s\in{\rm Nodes}(\omega)}\tau\circ_{s}\omega,\quad \forall \tau,\omega\in \huaT,
\end{eqnarray*}
where $\tau\circ_{s}\omega$ is the rooted tree resulting from attaching the root of $\tau$ to the node $s$ of the tree $\omega$. For example, we have
\begin{eqnarray*}
\scalebox{0.6}{\aabb}\curvearrowright \scalebox{0.6}{\aabababb}=\scalebox{0.6}{\aaabbabababb}+\scalebox{0.6}{\aaaabbbababb}+\scalebox{0.6}{\aabaaabbbabb}+\scalebox{0.6}{\aababaaabbbb}=\scalebox{0.6}{\aaabbabababb}+{3}\scalebox{0.6}{\aaaabbbababb},\qquad
\scalebox{0.6}{\aababb}\curvearrowright\scalebox{0.6}{\aaabbabb}=\scalebox{0.6}{\aaababbaabbabb}+\scalebox{0.6}{\aaaababbabbabb}+\scalebox{0.6}{\aaaaababbbbabb}+\scalebox{0.6}{\aaabbaaababbbb}.
\end{eqnarray*}
Moreover, Chapoton and Livernet \cite{CL} have shown that $(\bk\{\huaT\},\curvearrowright)$ is the free pre-Lie algebra generated by $\{\scalebox{0.6}{\ab}\}$.
By Theorem \ref{magma-to-post}, we deduce that  $(T\bk\{\huaT\},\cdot,\Delta^{\co},\rhd)$ is a post-Hopf algebra. Since $(\bk\{\huaT\},\curvearrowright)$ is a pre-Lie algebra, the post-Hopf algebra structure reduces to the symmetric algebra $S\bk\{\huaT\}$. Thus, we deduce that $(S\bk\{\huaT\},\cdot,\Delta^{\co},\rhd)$ is a pre-Hopf algebra. Furthermore, it is the universal enveloping algebra of the free pre-Lie algebra $(\bk\{\huaT\},\curvearrowright)$, and its subadjacent Hopf algebra
 $(S\bk\{\huaT\},*_\rhd,\Delta^{\co},S_\rhd)$ is dual to the Connes-Kreimer Hopf algebra of rooted trees.

\end{exam}

\emptycomment{
\begin{exam}
Consider the $2$-dimensional complex pre-Lie algebra $(A_{\hbar},\cdot)$ defined with respect to a basis $\{e_1,e_2\}$  by
\begin{eqnarray*}
e_1\cdot e_1=e_1\cdot e_2=0,\quad e_2\cdot e_1=-e_1,\quad e_2\cdot e_2=\hbar e_2,\quad \hbar\in \mathbb C.
\end{eqnarray*}
Note that $\hbar=-1$, $(A_{-1},\cdot)$ is an associative algebra.

\end{exam}
}

\section{Relative Rota-Baxter operators on  Hopf algebras}\label{sec:rb}

In this section, first we recall relative Rota-Baxter operators on Lie algebras and groups, and Rota-Baxter operators on cocommutative Hopf algebras. Then we introduce a more general notion of relative Rota-Baxter operators of weight 1 on  cocommutative  Hopf algebras with respect to module bialgebras. We establish the relation between the category of  relative Rota-Baxter operators of weight 1 on cocommutative  Hopf algebras and  the category of post-Hopf algebras.

Let  $\phi: \g\rightarrow\Der(\h)$ be an action of a Lie
algebra $(\g, [\cdot,\cdot]_{\g})$ on a Lie algebra $ (\h,
[\cdot,\cdot]_{\h})$. A linear map $T: \h\rightarrow\g$ is called a {\bf
  relative Rota-Baxter operator (of weight $1$)}  on $\g$ with respect
to $(\h;\phi)$  if
\begin{equation} \label{eq:B}
[T(u), T(v)]_\g=T\Big(\phi(T(u))v-\phi(T(v))u+[u,v]_\h\Big),\quad \forall u, v\in\h.
\end{equation}

\medskip
Let  $\Phi: \huaH\rightarrow \Aut(\huaK)$ be an action of a group $\huaH$ on a group $\huaK$. A map $T: \huaK\rightarrow \huaH$ is called a {\bf relative Rota-Baxter operator (of weight $1$)} if
\begin{equation}\label{rRBg}
T(h)\cdot_{\huaH}T(k)=T(h\cdot_{\huaK}\Phi(T(h))k), \quad \forall h, k\in \huaK.
\end{equation}

Given any Hopf algebra $(H,\Delta,\varepsilon,S)$, define the     adjoint action  of $H$ on itself by
$\ad_x y=x_1yS(x_2).$
A {\bf Rota-Baxter operator
(of weight 1)} on a   cocommutative  Hopf algebra $H$ was defined by Goncharov in \cite{Go}, which is a coalgebra homomorphism $B$ satisfying
\begin{equation}\mlabel{eq:rb-Hopf}
B(x)B(y)=B\left(x_1\ad_{B(x_2)}y\right)=B\left(x_1 B(x_2)y S(B(x_3))\right),\quad \forall\,x,y\in H.
\end{equation}

In the sequel, all the (relative) Rota-Baxter operators under consideration are of weight 1, so we will not emphasize it anymore.

Now we generalize the above adjoint action to arbitrary actions and introduce the notion of relative Rota-Baxter operators on   Hopf algebras.

\begin{defn}
Let $H$ and $K$ be two Hopf algebras such that $K$ is a    left $H$-module bialgebra via an action $\rightharpoonup$. A coalgebra homomorphism $T:K\to H$ is called a {\bf relative Rota-Baxter operator}   with respect to the left $H$-module bialgebra $(K,\rightharpoonup)$ if the following
equality holds:
\begin{eqnarray}\mlabel{eq:rrb-Hopf}
T(a)T(b)=T\big(a_1(T(a_2)\rightharpoonup b)\big),\quad \forall  a,b\in K.
\end{eqnarray}
A {\bf homomorphism} between two relative Rota-Baxter operators
$T:K\to H$ and $T':K'\to H'$ is a pair of Hopf algebra homomorphisms $f:H\to H'$ and $g:K\to K'$ such that
\begin{eqnarray}\mlabel{eq:rrb-homo}
fT=T'g,\quad g(x\rightharpoonup a)=f(x)\rightharpoonup g(a) ,\quad \forall\,x\in H,\,a\in K.
\end{eqnarray}
\end{defn}

It is obvious that relative Rota-Baxter operators on Hopf algebras and  homomorphisms between them form a category, which is denoted by  $\rRB$. We denote by $\cocrRB$ the subcategory of $\rRB$ consisting of relative Rota-Baxter operators    with respect to cocommutative  left  module bialgebras and homomorphisms between them.

A cocommutative post-Hopf algebra naturally gives rise to a relative Rota-Baxter operator.

\begin{prop}\label{prop:PHtoRB}
Let $(H,\rhd)$ be a cocommutative post-Hopf algebra and $H_\rhd$ the subadjacent Hopf algebra. Then
the identity map $\id_H:H\to H_\rhd$ is a relative Rota-Baxter operator   with respect to the left $H_\rhd$-module bialgebra $(H,\rhd)$.

Moreover, if $g:H\to H'$ is a post-Hopf algebra homomorphism from $(H,\rhd)$ to $(H',\rhd')$, then $(g,g)$ is a homomorphism from the relative Rota-Baxter operator $\id_H:H\to H_\rhd$ to $\id_{H'}:H'\to H'_{\rhd'}$. Consequently, we obtain a functor $\Upsilon:\cocPH\to \cocrRB$ from the category of cocommutative post-Hopf algebras to the category of relative Rota-Baxter operators with respect to cocommutative  left  module bialgebras.
\end{prop}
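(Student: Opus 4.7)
The plan is to derive both assertions directly from Theorem~\ref{thm:subHopf} together with the post-Hopf axioms, with essentially no new computation needed.

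For the first assertion, the identity map $\id_H:H\to H_\rhd$ is a coalgebra homomorphism tautologically, since Theorem~\ref{thm:subHopf} equips $H_\rhd$ with the \emph{same} coalgebra structure $(\Delta,\vep)$ as $H$. Likewise, Theorem~\ref{thm:subHopf} already asserts that $(H,\cdot,1,\Delta,\vep,S)$ is a left $H_\rhd$-module bialgebra via $\rhd$, so there is nothing to check there. The only substantive point is the relative Rota-Baxter identity \eqref{eq:rrb-Hopf}: for $T=\id_H$ and $\rightharpoonup=\rhd$ it reads
$$a *_\rhd b = a_1\cdot(a_2\rhd b),\quad\forall\,a,b\in H,$$
which is precisely the definition \eqref{post-rbb-1} of the subadjacent multiplication. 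Hence $\id_H$ is a relative Rota-Baxter operator.

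For the functoriality statement, given a post-Hopf algebra homomorphism $g:(H,\rhd)\to(H',\rhd')$, I first need to promote $g$ to a Hopf algebra homomorphism $H_\rhd\to H'_{\rhd'}$. Using that $g$ already intertwines $\cdot$, $\Delta$, $\vep$ and $\rhd$, the computation
$$g(x *_\rhd y)=g(x_1)\cdot\bigl(g(x_2)\rhd' g(y)\bigr)=g(x)_1\cdot\bigl(g(x)_2\rhd' g(y)\bigr)=g(x)*_{\rhd'}g(y)$$
shows $g$ is an algebra map for the subadjacent products, and compatibility with $S_\rhd$ and $S_{\rhd'}$ then follows by uniqueness of the antipode on the bialgebra $H_\rhd$. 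With this in hand, the two conditions in \eqref{eq:rrb-homo} reduce to $g\circ\id_H=\id_{H'}\circ g$ (trivial) and $g(x\rhd y)=g(x)\rhd' g(y)$ (the defining property of a post-Hopf homomorphism). Functoriality of $\Upsilon$ is then immediate, since composition and identities are preserved on the nose.

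The only mild obstacle is bookkeeping: one must keep track of which of the two Hopf-algebra structures, $(H,\cdot)$ or $(H,*_\rhd)$, occupies each slot in \eqref{eq:rrb-Hopf} and \eqref{eq:rrb-homo}. Beyond this, the proposition is a direct translation of Theorem~\ref{thm:subHopf} and Definition~\ref{defi:pH}.
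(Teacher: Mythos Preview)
Your proof is correct and follows essentially the same approach as the paper: both verify the relative Rota-Baxter identity by observing it is literally the definition of $*_\rhd$, and both establish that $g$ is a Hopf algebra map $H_\rhd\to H'_{\rhd'}$ via the same one-line computation $g(x*_\rhd y)=g(x_1)\cdot'(g(x_2)\rhd' g(y))=g(x)*_{\rhd'}g(y)$. Your version is slightly more explicit in noting why $\id_H$ is a coalgebra map and in invoking uniqueness of the antipode, but these are exactly the points the paper leaves tacit.
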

\begin{proof}
For any $x,y,z\in H$,  we have
$$\id_H(x)*_\rhd\id_H(y)=x*_\rhd y=x_1\cdot (x_2\rhd y)=\id_H(x_1\cdot (\id_H(x_2)\rhd y)),$$
so $\id_H:H\to H_\rhd$ is a relative Rota-Baxter operator   with respect to the left $H_\rhd$-module bialgebra $(H,\rhd)$.

Let  $g:H\to H'$ be a post-Hopf algebra homomorphism from $(H,\rhd)$ to $(H',\rhd')$. Then $(g,g)$ obviously satisfy Eq.~\eqref{eq:rrb-homo}. Since $g$ is a coalgebra homomorphism and
\begin{eqnarray*}
g(x *_\rhd y)&=&g(x_1\cdot (x_2\rhd y))=g(x_1)\cdot' (g(x_2)\rhd' g(y))=g(x) *_{\rhd'} g( y),
\end{eqnarray*}
we deduce  that $g$ is a homomorphism from the Hopf algebra $H_\rhd$ to $H'_{\rhd'}$. Therefore, $(g,g)$ is a homomorphism from the relative Rota-Baxter operator $\id_H:H\to H_\rhd$ to $\id_{H'}:H'\to H'_{\rhd'}$. It is straightforward to check that this is indeed a functor.
\end{proof}

It is well-known that a relative Rota-Baxter operator $T:\h\to \g$ on a Lie algebra $\g$ with respect to an action $(\h;\phi)$ endows $\h$ with the following post-Lie algebra structure $\rhd_T$,
\begin{equation}\label{eq:ind-post-L}
u\rhd_T v = \phi(T(u))v,\quad\forall u,v\in \h.
\end{equation}

\begin{theorem}\label{th:post-action}
Let $T:K\to H$ be a relative Rota-Baxter operator   with respect to a left $H$-module bialgebra $(K,\rightharpoonup)$. Then there exists a post-Hopf algebra structure $\rhd_T:K\otimes K\to K$ on $K$ given by
\begin{eqnarray}\label{eq:post-action}
a\rhd_T b=T(a)\rightharpoonup b.
\end{eqnarray}

 Let $T:K\to H$ and $T':K'\to H'$ be two relative Rota-Baxter operators and $(f,g)$   a homomorphism between them. Then $g$ is a  homomorphism from the post-Hopf algebra $(K,\rhd_T)$ to  $(K',\rhd_{T'})$. Consequently, we obtain a functor $\Xi:\rRB\to \PH$ from the category of relative Rota-Baxter operators on Hopf algebras to the category of post-Hopf algebras.

Moreover, the functor $\Xi|_{\cocrRB}$ is right adjoint to the functor $\Upsilon$ given in Proposition \ref{prop:PHtoRB}.
\end{theorem}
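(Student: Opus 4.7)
The plan is to verify that $(K,\rhd_T)$ satisfies all three ingredients of Definition~\ref{defi:pH}, then establish functoriality of $\Xi$, and finally set up the hom-set bijection for the adjunction.

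First, I confirm that $\rhd_T$ is a coalgebra homomorphism: by hypothesis $T$ is a coalgebra homomorphism, and the action map $\rightharpoonup:H\otimes K\to K$ is itself a coalgebra homomorphism since $K$ is a left $H$-module coalgebra, so the composition $a\otimes b\mapsto T(a)\rightharpoonup b$ is too. Axiom~\eqref{Post-2} follows from a short expansion using that $K$ is a left $H$-module algebra and $T$ preserves comultiplication:
$$a\rhd_T(b\cdot c)=T(a)\rightharpoonup(b\cdot c)=(T(a)_1\rightharpoonup b)\cdot(T(a)_2\rightharpoonup c)=(a_1\rhd_T b)\cdot(a_2\rhd_T c).$$
Axiom~\eqref{Post-4} is the decisive one where the relative Rota-Baxter identity~\eqref{eq:rrb-Hopf} enters:
$$a\rhd_T(b\rhd_T c)=(T(a)T(b))\rightharpoonup c=T\bigl(a_1(T(a_2)\rightharpoonup b)\bigr)\rightharpoonup c=(a_1\cdot(a_2\rhd_T b))\rhd_T c.$$
For convolution invertibility of $\alpha_{\rhd_T}$ in $\Hom(K,\End(K))$, I propose $\beta_{\rhd_T,a}(b):=S(T(a))\rightharpoonup b$ and verify
$$\alpha_{\rhd_T,a_1}\beta_{\rhd_T,a_2}(b)=\bigl(T(a_1)S(T(a_2))\bigr)\rightharpoonup b=\vep(T(a))\,b=\vep(a)\,b,$$
using that $T$ is a coalgebra homomorphism, the antipode identity in $H$, and $1_H\rightharpoonup b=b$; the other convolution relation is symmetric, and uniqueness of convolution inverses is standard.

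Next, functoriality of $\Xi$: given $(f,g):T\to T'$ in $\rRB$, the computation
$$g(a\rhd_T b)=g(T(a)\rightharpoonup b)=f(T(a))\rightharpoonup g(b)=T'(g(a))\rightharpoonup g(b)=g(a)\rhd_{T'} g(b)$$
uses both equations of~\eqref{eq:rrb-homo} and shows that the Hopf algebra homomorphism $g$ is a post-Hopf algebra homomorphism $(K,\rhd_T)\to(K',\rhd_{T'})$. Compatibility with composition and identities is immediate, and cocommutativity of $K$ in $\cocrRB$ forces $(K,\rhd_T)\in\cocPH$, so $\Xi|_{\cocrRB}$ does land in $\cocPH$.

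Finally, for the adjunction, I would exhibit the natural bijection
$$\Hom_{\cocrRB}\bigl(\Upsilon(H,\rhd),\,T\bigr)\;\xrightarrow{\ \sim\ }\;\Hom_{\cocPH}\bigl((H,\rhd),\,\Xi(T)\bigr),\qquad (f,g)\longmapsto g.$$
A homomorphism $(f,g):(\id_H:H\to H_\rhd)\to(T:K\to H')$ consists of Hopf algebra homomorphisms $f:H_\rhd\to H'$ and $g:H\to K$, and condition~\eqref{eq:rrb-homo} becomes $f=Tg$ together with $g(x\rhd a)=T(g(x))\rightharpoonup g(a)=g(x)\rhd_T g(a)$. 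Thus $g$ is exactly a post-Hopf algebra homomorphism $(H,\rhd)\to\Xi(T)$, and conversely any such $g$ produces the pair $(Tg,g)$; the one nontrivial check is that $f:=Tg$ respects $*_\rhd$, which is again~\eqref{eq:rrb-Hopf} applied in the target:
$$Tg(x*_\rhd y)=T\bigl(g(x_1)\cdot(T(g(x_2))\rightharpoonup g(y))\bigr)=T(g(x))T(g(y))=f(x)f(y).$$
Naturality in both variables is a routine diagram chase. The main obstacle is essentially organisational: tracking which Hopf algebra structure ($\cdot$ on $H$ versus $*_\rhd$ on $H_\rhd$) is in force at each step, and invoking cocommutativity precisely where needed so that $H_\rhd$ is a Hopf algebra and $\Xi|_{\cocrRB}$ factors through $\cocPH$.
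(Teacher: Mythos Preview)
Your proof is correct and follows the paper's overall line: verify the coalgebra-homomorphism and module-algebra axioms for $\rhd_T$, invoke~\eqref{eq:rrb-Hopf} for axiom~\eqref{Post-4}, check convolution invertibility, then handle functoriality and the adjunction via the assignment $(f,g)\mapsto g$ with inverse $g\mapsto(Tg,g)$.

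The one place where you differ from the paper is the construction of the convolution inverse. You set $\beta_{\rhd_T,a}(b)=S_H(T(a))\rightharpoonup b$ directly and verify $\alpha_{\rhd_T,a_1}\beta_{\rhd_T,a_2}=\vep(a)\id$ in one line using only the antipode identity in $H$ and that $T$ is a coalgebra map. The paper instead introduces $S_T(a)=S_H(T(a_1))\rightharpoonup S_K(a_2)$, proves the intermediate identity $T(S_T(a))=S_H(T(a))$ (its equation~\eqref{eq:rrb-anti}), and then takes $\beta_{\rhd_T,a}=\alpha_{\rhd_T,S_T(a)}$; since $T(S_T(a))=S_H(T(a))$, this is the same map as yours. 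Your route is shorter and entirely sufficient for the theorem as stated; the paper's detour has the payoff that $S_T$ and the relation $T\circ S_T=S_H\circ T$ are established here and then reused immediately in Corollary~\ref{prop:rrb-des} (antipode of the descendent Hopf algebra $K_T$) and in Theorem~\ref{mat-rrb-twist}.
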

\begin{proof}
  Since $T$ is a coalgebra homomorphism and $\rightharpoonup$ is the left module bialgebra action,  we have
\begin{eqnarray*}
\Delta_K(a\rhd_T b)&=&\Delta_K(T(a)\rightharpoonup b)=(T(a_1)\rightharpoonup b_1)\otimes(T(a_2)\rightharpoonup b_2)= (a_1\rhd_T b_1)\otimes(a_2\rhd_T b_2),\\
\vep_K(a\rhd_T b)&=&\vep_K(T(a)\rightharpoonup b)=\vep_H(T(a))\vep_K(b)=\vep_K(a)\vep_K(b),
\end{eqnarray*}
which implies that $\rhd_T$ is a coalgebra homomorphism. Similarly, we have
\begin{eqnarray*}
a\rhd_T(bc)& =&T(a)\rightharpoonup (bc)=(T(a_1)\rightharpoonup b)(T(a_2)\rightharpoonup c)=(a_1\rhd_T b)(a_2\rhd_T c).
\end{eqnarray*}
Then by \eqref{eq:rrb-Hopf}, we obtain
\begin{eqnarray*}
(a_1(a_2 \rhd_T b))\rhd_T c
&=&T(a_1(T(a_2)\rightharpoonup b))\rightharpoonup c=(T(a)T(b))\rightharpoonup c\\
&=&T(a)\rightharpoonup(T(b)\rightharpoonup c)=a\rhd_T(b\rhd_T c).
\end{eqnarray*}
Define linear map $S_T:K\to K$ by  \begin{eqnarray}
\label{rrb-des2}S_T(a)&=&S_H(T(a_1))\rightharpoonup S_K(a_2).
\end{eqnarray}
Then for all $a\in K$, we have
\begin{eqnarray}
\nonumber T(S_T(a)) &=& \vep_H(T(a_1))T(S_T(a_2))\\
\nonumber&=& S_H(T(a_1))T(a_2)T(S_T(a_3))\\
\nonumber&\stackrel{\eqref{eq:rrb-Hopf}}{=}& S_H(T(a_1))T(a_2(T(a_3)\rightharpoonup S_T(a_4)))\\
\nonumber&=& S_H(T(a_1))T(a_2(T(a_3)\rightharpoonup (S_H(T(a_4))\rightharpoonup S_K(a_5))))\\
\nonumber&=& S_H(T(a_1))T(a_2(T(a_3)S_H(T(a_4))\rightharpoonup S_K(a_5)))\\
\nonumber&=& S_H(T(a_1))T(a_2S_K(a_3))\\
\nonumber&=& S_H(T(a_1))T(\vep_K(a_2)1)\\
\label{eq:rrb-anti}&=&S_H(T(a)).
\end{eqnarray}
For all $a\in K$, define $\beta_{\rhd_T,a}\in \End(K)$ by $\beta_{\rhd_T,a}\coloneqq \alpha_{\rhd_T,S_T(a)}.$ That is, $$\beta_{\rhd_T,a}b=\alpha_{\rhd_T,S_T(a)}b=S_T(a)\rhd_Tb.$$
Then we have
\begin{eqnarray*}
\alpha_{\rhd_T,a_1}\beta_{\rhd_T,a_2}b & =&T(a_1)\rightharpoonup (T(S_T(a_2))\rightharpoonup b)\\
&=& T(a_1)T(S_T(a_2))\rightharpoonup b\\
&\stackrel{\eqref{eq:rrb-Hopf}}{=}& T(a_1(T(a_2)\rightharpoonup S_T(a_3)))\rightharpoonup b\\
&=& T(a_1(T(a_2)\rightharpoonup (S_H(T(a_3))\rightharpoonup S_K(a_4))))\rightharpoonup b\\
&=& T(a_1(T(a_2)S_H(T(a_3))\rightharpoonup S_K(a_4)))\rightharpoonup b\\
&=& T(a_1S_K(a_2))\rightharpoonup b\\
&=& T(\vep_K(a)1)\rightharpoonup b=\vep_K(a)b,\\[.5em]
\beta_{\rhd_T,a_1}\alpha_{\rhd_T,a_2}b & =&T(S_T(a_1))\rightharpoonup (T(a_2)\rightharpoonup b)\\
&=& T(S_T(a_1))T(a_2)\rightharpoonup b\\
&\stackrel{\eqref{eq:rrb-anti}}{=}& S_H(T(a_1))T(a_2)\rightharpoonup b\\
&=& \vep_H(T(a))b=\vep_K(a)b.
\end{eqnarray*}
Therefore, $\alpha_{\rhd_T}$ is convolution invertible. Hence, $(K,\rhd_T)$ is a post-Hopf algebra.

Let $(f,g)$ be a homomorphism from the relative Rota-Baxter operator $T$ to $T'$. Then we have
\begin{eqnarray*}
  g(a\rhd_T b)=g(T(a)\rightharpoonup b)=f(T(a))\rightharpoonup g(b)=T'(g(a))\rightharpoonup g(b)=g(a) \rhd _{T'} g(b),
\end{eqnarray*}
which implies that $g$ is a homomorphism from the  post-Hopf algebra  $(K,\rhd_T)$ to $(K',\rhd_{T'})$.  It is straightforward to see that this is indeed a functor.

Next we prove that $\Xi|_{\cocrRB}:\cocrRB\to\cocPH$ is right adjoint to $\Upsilon:\cocPH\to \cocrRB$.
Namely,
$$\Hom_\cocrRB(\id:H'\to H'_{\rhd'},T:K\to H)\simeq\Hom_\cocPH((H',\rhd'),(K,\rhd_T)),$$
where $T:K\to H$ is a relative Rota-Baxter operator on a Hopf algebra $H$ with respect to a cocommutative module bialgebra  $(K,\rightharpoonup)$ and $(H',\rhd')$ is a cocommutative post-Hopf algebra.

Let $g:(H',\rhd')\to (K,\rhd_T)$ be a post-Hopf algebra homomorphism. Let $f=Tg$, which  is obviously  a coalgebra homomorphism.  For all $x,y\in H'$, we have
\begin{align*}
f(x*_{\rhd'}y)&=T(g(x_1(x_2\rhd'y)))=T(g(x_1)(g(x_2)\rhd_T g(y)))\\
&=T(g(x_1)(T(g(x_2))\rightharpoonup g(y)))=T(g(x))T(g(y))=f(x)f(y),
\end{align*}
which implies that $f=T g:H'_{\rhd'}\to H$ is a Hopf algebra homomorphism.

 Moreover, it is straightforward to obtain
$$g(x\rhd' y)=g(x)\rhd_T g(y)=T(g(x))\rightharpoonup g(y)=f(x)\rightharpoonup g(y).$$
Hence, $(f,g)$ is a homomorphism between the relative Rota-Baxter operators $\id:H'\to H'_{\rhd'}$ and $T:K\to H$.

Conversely, if $(f,g)$ is a homomorphism between the relative Rota-Baxter operators $\id:H'\to H'_{\rhd'}$ and $T:K\to H$, we  have $f=Tg$ and $g:(H',\rhd')\to (K,\rhd_T)$ is a post-Hopf algebra homomorphism.
\end{proof}

By Theorem \ref{th:post-action} and Theorem \ref{thm:subHopf}, we immediately get the following result.

\begin{coro}\label{prop:rrb-des}
Let $T:K\to H$ be a relative Rota-Baxter operator
with respect to a cocommutative $H$-module bialgebra $(K,\rightharpoonup)$. Then $(K,*_T,1,\Delta,\vep,S_T)$ is a Hopf algebra, which is called the {\bf descendent Hopf algebra} and denoted by $K_T$, where  the antipode $S_T$ is given by \eqref{rrb-des2} and   the multiplication $*_T$ is given by
\begin{eqnarray}
\label{rrb-des1}a *_T b&=&a_1(T(a_2)\rightharpoonup b).
\end{eqnarray}
Moreover,  $T:K_T\to H$ is a Hopf algebra homomorphism.
\end{coro}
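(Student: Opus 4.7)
The plan is to derive this corollary essentially by combining Theorem~\ref{th:post-action} with Theorem~\ref{thm:subHopf}. First, I would apply Theorem~\ref{th:post-action} to endow $K$ with the post-Hopf algebra structure $a\rhd_T b = T(a)\rightharpoonup b$. Since $(K,\rightharpoonup)$ is a cocommutative module bialgebra, $K$ is cocommutative as a coalgebra, so $(K,\rhd_T)$ is a cocommutative post-Hopf algebra and Theorem~\ref{thm:subHopf} directly supplies the subadjacent Hopf algebra $K_{\rhd_T} = (K,*_{\rhd_T},1,\Delta,\vep,S_{\rhd_T})$.

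It then remains to identify this subadjacent Hopf algebra with the proposed descendent Hopf algebra $K_T$. For the multiplication, a direct substitution gives
$$a *_{\rhd_T} b = a_1(a_2\rhd_T b) = a_1(T(a_2)\rightharpoonup b),$$
which matches~\eqref{rrb-des1}. For the antipode, I would use the identity $\beta_{\rhd_T,a} = \alpha_{\rhd_T,S_T(a)}$ recorded inside the proof of Theorem~\ref{th:post-action}, together with~\eqref{eq:rrb-anti} asserting $T(S_T(a)) = S_H(T(a))$, so that formula~\eqref{post-rbb-2} from Theorem~\ref{thm:subHopf} unfolds as
$$S_{\rhd_T}(a) = \beta_{\rhd_T,a_1}(S_K(a_2)) = T(S_T(a_1))\rightharpoonup S_K(a_2) = S_H(T(a_1))\rightharpoonup S_K(a_2) = S_T(a),$$
confirming that $S_T$ defined in~\eqref{rrb-des2} agrees with $S_{\rhd_T}$.

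Finally, I would verify that $T:K_T\to H$ is a Hopf algebra homomorphism. It is a coalgebra homomorphism by hypothesis; multiplicativity $T(a*_T b) = T(a)T(b)$ is exactly the defining relation~\eqref{eq:rrb-Hopf} of a relative Rota-Baxter operator; compatibility with the antipode is~\eqref{eq:rrb-anti}; and unit preservation follows by setting $a=b=1$ in~\eqref{eq:rrb-Hopf}, which gives $T(1)T(1)=T(1)$, and then invoking the fact that $T(1)$ is group-like (hence invertible in $H$) since $T$ is a coalgebra homomorphism. There is no real obstacle here: everything reduces cleanly to the two preceding theorems, and the only minor verification is the identification of the two antipodes, which is handled by the convolution inverse relation $\beta_{\rhd_T,a}=\alpha_{\rhd_T,S_T(a)}$ already noted in the proof of Theorem~\ref{th:post-action}.
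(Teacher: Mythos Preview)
Your proposal is correct and follows exactly the paper's approach: the paper simply states that the corollary follows immediately from Theorem~\ref{th:post-action} and Theorem~\ref{thm:subHopf}, and you have spelled out the identifications (of $*_{\rhd_T}$ with $*_T$ and of $S_{\rhd_T}$ with $S_T$) and the verification that $T$ is a Hopf algebra homomorphism in more detail than the paper itself provides.
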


\section{Matched pairs of Hopf algebras and solutions of the Yang-Baxter equation}\label{sec:mp}

In this section, we show that a relative Rota-Baxter operator  on cocommutative Hopf algebras naturally gives rise to a matched pair of Hopf algebras. As applications, we construct solutions of the Yang-Baxter equation  using post-Hopf algebras and relative Rota-Baxter operators   on cocommutative Hopf algebras.

First we recall the smash product  and matched pairs of Hopf algebras.
Let $H$ and $K$ be two Hopf algebras such that $K$ is a cocommutative  $H$-module bialgebra via an action $\rightharpoonup$.
There is the following {\bf smash product} on $K\otimes H$,
$$(a\# x)(a'\# x')=a(x_1\rightharpoonup a')\# x_2x'$$
for any $x,x'\in H,\,a,a'\in K$, where $a\otimes x\in K\otimes H$ is rewritten as $a\# x$ to emphasize this smash product. We denote such a smash product algebra by $K\rtimes H$.
In particular, if $H$ is also cocommutative, then $K\rtimes H$ becomes a cocommutative Hopf algebra with the usual tensor product comultiplication and the antipode defined by $S(a\# x)=(S_H(x_1)\rightharpoonup S_K(a))\#S_H(x_2)$.
\begin{defn}
A {\bf matched pair of Hopf algebras} is a 4-tuple $(H,K,\rightharpoonup,\leftharpoonup)$, where $H$ and $K$
are Hopf algebras, $\rightharpoonup:H\otimes K\to K$ and $\leftharpoonup:H\otimes K\to H$ are linear maps such that $K$ is a left $H$-module coalgebra and $H$ is a right $K$-module coalgebra  and the following compatibility
conditions hold:
\begin{eqnarray}
\label{Mat-1}x\rightharpoonup(ab)&=&(x_1\rightharpoonup a_1)((x_2\leftharpoonup a_2)\rightharpoonup b)\\
\label{Mat-2}x\rightharpoonup 1_K&=&\varepsilon_H(x)1_K\\
\label{Mat-3}(xy)\leftharpoonup a&=&(x\leftharpoonup(y_1\rightharpoonup a_1))(y_2\leftharpoonup a_2)\\
\label{Mat-4}1_H\leftharpoonup a&=&\varepsilon_K(a)1_H\\
\label{Mat-5} (x_1\leftharpoonup a_1) \otimes (x_2\rightharpoonup a_2) &=&
(x_2\leftharpoonup a_2) \otimes (x_1\rightharpoonup a_1)
\end{eqnarray}
for all $x,y\in H$ and $a,b\in K$.
\end{defn}

Let $(H,K,\rightharpoonup,\leftharpoonup)$ be a matched pair of Hopf algebras. The  {\bf double crossproduct} $K\bowtie H$ of $K$ and $H$ is the $\bk$-vector space $K\otimes H$ with the unit
$1_K\otimes 1_H$, such that its product,  coproduct, counit and antipode are given by
\begin{eqnarray}
(a\otimes x)(b\otimes y)&=&a(x_1\rightharpoonup b_1)\otimes (x_2\leftharpoonup b_2)y,\\
\Delta(a\otimes x)&=&a_1\otimes x_1\otimes a_2\otimes x_2,\\
\varepsilon(a\otimes x)&=&\varepsilon_K(a)\varepsilon_H(x),\\
S(a\otimes x)&=&(S_H(x_2)\rightharpoonup S_K(a_2))\otimes(S_H(x_1)\leftharpoonup S_K(a_1)),
\end{eqnarray}
for all $a,b\in K$ and $x,y\in H$. See \cite{Maj1} for further details of the double crossproducts.

 By \cite[Proposition~21.6]{Maj1}, we have
\begin{theorem}\label{Prop:mp-Hopf}
With above notations, $(H,K,\rightharpoonup,\leftharpoonup)$ is a matched pair of Hopf algebras if and only if there exist a Hopf algebra $A$ and injective Hopf algebra homomorphisms $i_K:K\to A$, $i_H:H\to A$ such that
the map
$$\xi:K\otimes H\to A,\,a\otimes x\mapsto i_K(a)i_H(x)$$
is a linear isomorphism.
\end{theorem}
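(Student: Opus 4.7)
The plan is to establish both implications of this biconditional. For the ``only if'' direction, I would take $A$ to be the double crossproduct $K\bowtie H$ introduced just before the statement, together with the canonical embeddings $i_K:K\to A$, $a\mapsto a\otimes 1_H$ and $i_H:H\to A$, $x\mapsto 1_K\otimes x$. Using \eqref{Mat-2} and \eqref{Mat-4}, the double crossproduct multiplication simplifies on the images of $i_K$ and $i_H$, and one verifies that each embedding preserves products; since the comultiplication of $K\bowtie H$ is the tensor product coproduct and the unit is $1_K\otimes 1_H$, both embeddings are also coalgebra homomorphisms, hence Hopf algebra homomorphisms (antipode preservation being automatic). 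The map $\xi$ then reduces to
\begin{equation*}
\xi(a\otimes x)=i_K(a)\,i_H(x)=(a\otimes 1_H)(1_K\otimes x)=a\otimes x,
\end{equation*}
so it is literally the identity on the underlying vector space, and injectivity of $i_K,i_H$ is immediate.

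For the ``if'' direction, suppose $A$ and the embeddings $i_K,i_H$ are given with $\xi$ a linear isomorphism. Since every product $i_H(x)i_K(a)$ lies in $A$ and $\xi$ is bijective, one can define linear maps $\rightharpoonup:H\otimes K\to K$ and $\leftharpoonup:H\otimes K\to H$ implicitly by
\begin{equation*}
i_H(x)i_K(a)=i_K(x_1\rightharpoonup a_1)\,i_H(x_2\leftharpoonup a_2),\quad\forall x\in H,\ a\in K,
\end{equation*}
i.e.\ $\xi^{-1}(i_H(x)i_K(a))=(x_1\rightharpoonup a_1)\otimes(x_2\leftharpoonup a_2)$. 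That $K$ becomes a left $H$-module coalgebra, $H$ a right $K$-module coalgebra, and that the matched pair compatibilities \eqref{Mat-1}--\eqref{Mat-5} hold, are extracted from identities in $A$: the associativities $i_H(xy)i_K(a)=i_H(x)\big(i_H(y)i_K(a)\big)$ and $\big(i_H(x)i_K(a)\big)i_K(b)=i_H(x)\big(i_K(a)i_K(b)\big)$ yield \eqref{Mat-3} and \eqref{Mat-1} after reducing both sides to canonical ``$K$-then-$H$'' form via the defining relation and applying $\xi^{-1}$; the unit identities $i_H(1_H)=1_A=i_K(1_K)$ give \eqref{Mat-2} and \eqref{Mat-4}; and \eqref{Mat-5} is obtained by applying $\Delta_A$ to the defining relation and matching the two sides under $\xi^{-1}\otimes\xi^{-1}$, using that $i_K,i_H$ are coalgebra homomorphisms.

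The main obstacle is the bookkeeping in the reverse direction: to derive each axiom one must repeatedly apply the defining identity to rewrite products of the shape $i_K(\cdot)i_H(\cdot)i_K(\cdot)i_H(\cdot)$ in canonical order, and then invoke the injectivity of $\xi$ (i.e.\ the uniqueness of the expansion $\sum i_K(b)i_H(y)$) to conclude. The subtlest point is \eqref{Mat-5}, where a single application of $\Delta_A$ couples the two actions through the diagonal; here one implicitly uses that $\xi$ intertwines $\Delta_A$ with the tensor product coproduct on $K\otimes H$, a fact that itself must be verified from $i_K,i_H$ being coalgebra homomorphisms and the cocommutativity of $A$ being unnecessary. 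Once this consistency is established, \eqref{Mat-5} drops out of comparing the two ways of comultiplying $i_H(x)i_K(a)$.
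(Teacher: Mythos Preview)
Your proposal is correct and follows the standard argument. In the paper itself there is no proof: the theorem is quoted directly from Majid~\cite[Proposition~21.6]{Maj1}, so there is nothing to compare against beyond noting that your outline is precisely the classical factorization argument underlying that reference (and indeed matches a proof the authors wrote out but left commented out in the source). One small point: for axiom~\eqref{Mat-5} you correctly observe that cocommutativity is not needed; the cleanest way to phrase the derivation is to first check that the ``twistor'' $\Theta=\xi^{-1}\circ m_A\circ(i_H\otimes i_K)$ is a coalgebra map, then recover $\rightharpoonup=(\id\otimes\vep_H)\Theta$ and $\leftharpoonup=(\vep_K\otimes\id)\Theta$ and read off~\eqref{Mat-5} by applying $\vep_K\otimes\id_H\otimes\id_K\otimes\vep_H$ to both $\Delta$ and $\Delta^{\rm op}$ of $\Theta(x\otimes a)$---both collapse to the same flip, which is exactly the mechanism you describe.
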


\emptycomment{
\begin{proof}
If $(H,K,\rightharpoonup,\leftharpoonup)$ is a matched pair of Hopf algebras, we can take $A=K\bowtie H$, $i_K,\,i_H$ the natural inclusions of $K,H$ in $A$
respectively. Then $\xi$ is the identity isomorphism of $K\otimes H$.

Conversely, note that $\xi$ is clearly a coalgebra isomorphism.
First define the {\it twistor}
$$\Theta:H\otimes K\to K\otimes H,\,x\otimes a\mapsto\xi^{-1}(i_H(x)i_K(a)),\quad \forall a\in K,\,x\in H.$$
Then $(\vep_K\otimes\vep_H)\Theta(x\otimes a)=\vep_H(x)\vep_K(a)$, and
$$
\Delta\Theta(x\otimes a)=\Delta\xi^{-1}(i_H(x)i_K(a))=\xi^{-1}(i_H(x_1)i_K(a_1))\otimes
\xi^{-1}(i_H(x_2)i_K(a_2))=(\Theta\otimes \Theta)\Delta(x\otimes a),$$
$\Theta$ is also a coalgebra homomorphism. Now we define linear maps
\begin{align*}
&\rightharpoonup:H\otimes K\to K,\,x\otimes a\mapsto (\id\otimes\vep_H)\Theta(x\otimes a),\\
&\leftharpoonup:H\otimes K\to H,\,x\otimes a\mapsto (\vep_K\otimes\id)\Theta(x\otimes a).
\end{align*}
In particular,
\begin{align*}
\Theta(x\otimes a)&=(\id\otimes\vep_H\otimes \vep_K\otimes\id)\Delta\Theta(x\otimes a)\\
&=(\id\otimes\vep_H)\Theta(x_1\otimes a_1)\otimes(\vep_K\otimes\id)\Theta(x_2\otimes a_2)\\
&=(x_1\rightharpoonup a_1) \otimes (x_2\leftharpoonup a_2),
\end{align*}
or equivalently, $i_H(x)i_K(a)=i_K(x_1\rightharpoonup a_1)i_H(x_2\leftharpoonup a_2)$.
Also, $\rightharpoonup,\leftharpoonup$ are coalgebra homomorphisms, as
\begin{align*}
\Delta_K(x\rightharpoonup a)&=(\Delta_K\otimes\vep_H)\Theta(x\otimes a)=
(\id\otimes\vep_K\otimes\id\otimes\vep_H)\Delta\Theta(x\otimes a)\\
&=(\id\otimes\vep_K)\Theta(x_1\otimes a_1)\otimes(\id\otimes\vep)\Theta(x_2\otimes a_2)=(x_1\rightharpoonup a_1)\otimes(x_2\rightharpoonup a_2),\\
\Delta_H(x\leftharpoonup a)&=(\vep_K\otimes\Delta_H)\Theta(x\otimes a)=
(\vep\otimes\id\otimes\vep\otimes\id)\Delta\Theta(x\otimes a)\\
&=(\vep_K\otimes\id)\Theta(x_1\otimes a_1)\otimes(\vep_K\otimes\id)\Theta(x_2\otimes a_2)=(x_1\leftharpoonup a_1)\otimes(x_2\leftharpoonup a_2),\\
\vep_K(x\rightharpoonup a)&=\vep_H(x\leftharpoonup a)=(\vep_K\otimes\vep_H)\Theta(x\otimes a)=\vep_H(x)\vep_K(a),
\end{align*}
and
\begin{align*}
&x\rightharpoonup 1_K=(\id\otimes\vep_H)\Theta(x\otimes 1_K)=\vep_H(x)1_K,\\
&1_H\leftharpoonup a=(\vep_K\otimes\id)\Theta(1_H\otimes a)=\vep_K(a)1_H,
\end{align*}
which are Eqs.~\eqref{Mat-2} and \eqref{Mat-4}. Furthermore, we have
\begin{align*}
((x_1y_1)\rightharpoonup a_1)\otimes((x_2y_2)\leftharpoonup a_2)
&=\Theta(xy\otimes a)\\
&=\xi^{-1}(i_H(xy)i_K(a))=\xi^{-1}(i_H(x)i_H(y)i_K(a))\\
&=\xi^{-1}(i_H(x)i_K(y_1\rightharpoonup a_1)i_H(y_2\leftharpoonup a_2))\\
&=\xi^{-1}(i_K(x_1\rightharpoonup(y_1\rightharpoonup a_1))
i_H(x_2\leftharpoonup(y_2\rightharpoonup a_2))i_H(y_3\leftharpoonup a_3))\\
&=\xi^{-1}(i_K(x_1\rightharpoonup(y_1\rightharpoonup a_1))
i_H((x_2\leftharpoonup(y_2\rightharpoonup a_2))(y_3\leftharpoonup a_3)))\\
&=(x_1\rightharpoonup(y_1\rightharpoonup a_1))\otimes (x_2\leftharpoonup(y_2\rightharpoonup a_2))(y_3\leftharpoonup a_3).
\end{align*}
Applying $\id\otimes\vep_H$ (resp. $\vep_K\otimes \id$) to both hand sides, we see that $\rightharpoonup$ is a left action (resp. Eq.~\eqref{Mat-3} holds), and thus $K$ is a left $H$-module coalgebra via $\rightharpoonup$.
By symmetry, we also know that $H$ is a right $K$-module coalgebra via $\leftharpoonup$ and Eq.~\eqref{Mat-1} holds.

On the other hand,
\begin{align*}
(x_1\leftharpoonup a_1) \otimes (x_2\rightharpoonup a_2)
&=
(\vep_K\otimes\id)\Theta(x_1\otimes a_1)\otimes(\id\otimes\vep_H)\Theta(x_2\otimes a_2)\\
&=(\vep_K\otimes\id\otimes\id\otimes\vep_H)\Delta\Theta(x\otimes a)\\
&=(\vep_K\otimes\id\otimes\id\otimes\vep_H)\Delta^{\rm op}\Theta(x\otimes a)\\
&=(\vep_K\otimes\id)\Theta(x_2\otimes a_2)\otimes(\id\otimes\vep_H)\Theta(x_1\otimes a_1)\\
&=(x_2\leftharpoonup a_2) \otimes (x_1\rightharpoonup a_1),
\end{align*}
which is Eq.~\eqref{Mat-5}.
\end{proof}
}

Let $T:K\to H$ be a relative Rota-Baxter operator
with respect to a cocommutative $H$-module bialgebra $(K,\rightharpoonup)$.
Define a linear map $\leftharpoonup:H\otimes K\to H$ by
\begin{eqnarray}
\label{eq:mp-raction}
 x\leftharpoonup a&=&S_H\big(T(x_1\rightharpoonup a_1)\big)x_2T(a_2).
\end{eqnarray}
\begin{theorem}\label{mat-rrb-twist}
With the above notations, if $H$ is also cocommutative, then it is a right $K_T$-module coalgebra via the action $\leftharpoonup$ given in Eq.~\eqref{eq:mp-raction}. Moreover, the $4$-tuple $(H,K_T,\rightharpoonup,\leftharpoonup)$ is a matched pair of cocommutative Hopf algebras.
\end{theorem}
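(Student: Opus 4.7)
The plan is to apply Majid's characterization in Theorem \ref{Prop:mp-Hopf}, realizing the asserted matched pair as coming from a bicrossed factorization of the smash product Hopf algebra $K \rtimes H$.

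Concretely, I would take $A = K \rtimes H$, which is a cocommutative Hopf algebra since $H$ is now assumed to be cocommutative as well. Define
\[
i_H : H \to A,\quad i_H(x) = 1 \# x, \qquad i_{K_T} : K_T \to A,\quad i_{K_T}(a) = a_1 \# T(a_2).
\]
The map $i_H$ is evidently a Hopf algebra embedding. For $i_{K_T}$, the coalgebra homomorphism property follows from cocommutativity of $K$ combined with $T$ being a coalgebra map, reducing to an interchange of middle factors in $\Delta^{(3)}(a)$. The essential step is multiplicativity: expanding $i_{K_T}(a)\,i_{K_T}(b)$ in the smash product yields a term $T(a_3) T(b_2)$, which the relative Rota-Baxter equation \eqref{eq:rrb-Hopf} rewrites as $T\bigl(a_3(T(a_4) \rightharpoonup b_2)\bigr)$; a second use of cocommutativity of $K$ then matches the result with $i_{K_T}(a *_T b)$ via the formula \eqref{rrb-des1}. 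Both embeddings are injective because $(\mathrm{id} \otimes \varepsilon) \circ i_H$ and $(\mathrm{id} \otimes \varepsilon) \circ i_{K_T}$ are the identity on $H$ and $K_T$ respectively (using $\varepsilon_H \circ T = \varepsilon_K$).

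Next I would verify that the product map
\[
\xi : K \otimes H \to A, \quad a \otimes x \mapsto i_{K_T}(a)\,i_H(x) = a_1 \# T(a_2) x,
\]
is a linear isomorphism with inverse $\xi^{-1}(a \# y) = a_1 \otimes S_H(T(a_2))\,y$; both composites collapse by applying the antipode axiom to $\Delta(T(a)) = T(a_1) \otimes T(a_2)$. Theorem \ref{Prop:mp-Hopf} then produces a matched pair whose two actions are read off the \emph{twistor}
\[
\Theta : H \otimes K_T \to K_T \otimes H,\quad \Theta(x \otimes a) = \xi^{-1}\bigl( i_H(x)\,i_{K_T}(a) \bigr).
\]
Direct computation gives $i_H(x)\,i_{K_T}(a) = (x_1 \rightharpoonup a_1) \# x_2 T(a_2)$ in the smash product, so
\[
\Theta(x \otimes a) = (x_1 \rightharpoonup a_1) \otimes S_H\bigl(T(x_2 \rightharpoonup a_2)\bigr)\, x_3\, T(a_3).
\]
Applying $\mathrm{id} \otimes \varepsilon_H$ recovers the original action $\rightharpoonup$ of $H$ on $K$ as the left component of the matched pair, while applying $\varepsilon_K \otimes \mathrm{id}$ collapses to exactly the formula \eqref{eq:mp-raction} for $x \leftharpoonup a$ as the right component. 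Therefore the matched pair produced by Majid's theorem is precisely $(H, K_T, \rightharpoonup, \leftharpoonup)$; in particular $H$ is a right $K_T$-module coalgebra via $\leftharpoonup$, as asserted.

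The decisive obstacle is the multiplicativity check for $i_{K_T}$: this is where the relative Rota-Baxter identity enters, and where cocommutativity of $K$ is indispensable for the Sweedler reindexings to close up correctly. Everything else reduces to standard bialgebraic bookkeeping once $i_{K_T}$ is recognized as a Hopf algebra embedding $K_T \hookrightarrow K \rtimes H$.
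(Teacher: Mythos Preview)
Your proof is correct and follows essentially the same strategy as the paper: both construct the same linear isomorphism $\xi=\Phi_T:K\otimes H\to K\rtimes H$, $a\otimes x\mapsto a_1\#T(a_2)x$, and then invoke Majid's factorization criterion (Theorem~\ref{Prop:mp-Hopf}) to produce the matched pair, reading off the actions from the twistor. The only organizational difference is that the paper transports the smash product structure back along $\Phi_T$ and checks that the simple inclusions $a\mapsto a\bowtie 1$, $x\mapsto 1\bowtie x$ are Hopf algebra maps, whereas you work directly in $K\rtimes H$ and verify that $i_{K_T}(a)=a_1\#T(a_2)$ is a Hopf algebra embedding; your multiplicativity check for $i_{K_T}$ is exactly the graph characterization that the paper records later as Theorem~\ref{th:rrb-gr}.
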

\begin{proof}
We define a linear map $\Phi_T:K\otimes H\to K\otimes H$ as following:
\begin{eqnarray}
\label{twist-Hopf}\Phi_T(a\otimes x)=a_1\otimes T(a_2)x,~\forall x\in H,\,a\in K.
\end{eqnarray}
Since $T$ is a coalgebra homomorphism, the linear map $\Phi_T$ is invertible. Moreover, we have
\begin{eqnarray}
\Phi^{-1}_T(a\otimes x)=a_1\otimes S_H(T(a_2))x,~\forall x\in H,\,a\in K.
\end{eqnarray}
Transfer the smash product Hopf algebra structure $K\rtimes H$ to   $K\otimes H$ via the linear isomorphism  $\Phi_T:K\otimes H\to K\rtimes H$, we obtain a Hopf algebra   $(K\otimes H,\cdot_T,1_T,\Delta_T,\varepsilon_T,\mathfrak S_T)$. Denote elements in $K\otimes H$  by $a\bowtie x,~b\bowtie y$ for  $x,y\in H,\,a,b\in K$, by the cocommutativity of $K$, we have
\begin{eqnarray*}
(a\bowtie x)\cdot_T(b\bowtie y)&=&\Phi^{-1}_T\big(\Phi_T(a\bowtie x)\Phi_T(b\bowtie y)\big)\\
&=&\Phi_T^{-1}\big(a_1(T(a_2)x_1\rightharpoonup b_1)\# T(a_3)x_2T(b_2)y\big)\\
                                                   &=&a_1(T(a_2)x_1\rightharpoonup b_1)\bowtie S_H\big(T(a_3(T(a_4)x_2\rightharpoonup b_2))\big)T(a_5)x_3T(b_3)y\\
                                                   &=&a_1(T(a_2)\rightharpoonup(x_1\rightharpoonup b_1))\bowtie S_H\big(T(a_3(T(a_4)\rightharpoonup(x_2\rightharpoonup b_2)))\big)T(a_5)x_3T(b_3)y\\
                                                   &\stackrel{\eqref{eq:rrb-Hopf}}{=}&a_1*_T(x_1\rightharpoonup b_1)\bowtie S_H\big(T(a_2)T(x_2\rightharpoonup b_2)\big)T(a_3)x_3T(b_3)y\\
                                                   &=&a_1*_T(x_1\rightharpoonup b_1)\bowtie S_H\big(T(x_2\rightharpoonup b_2)\big)S_H(T(a_2))T(a_3)x_3T(b_3)y\\
                                                   &=&a*_T(x_1\rightharpoonup b_1)\bowtie S_H\big(T(x_2\rightharpoonup b_2)\big)x_3T(b_3)y\\
                                                   &=&a\ast_T(x_1\rightharpoonup b_1)\bowtie (x_2\leftharpoonup b_2)y,\\[.4em]
\Delta_T(a\bowtie x)&=&(\Phi^{-1}_T\otimes \Phi^{-1}_T)(\Delta\Phi_T(a\bowtie x))\\
                      &=&(\Phi^{-1}_T\otimes \Phi^{-1}_T)\Delta(a_1\# T(a_2)x)\\
                      &=&(\Phi^{-1}_T\otimes \Phi^{-1}_T)\big((a_1\#
                       T(a_3)x_1)\otimes (a_2\# T(a_4)x_2)\big)\\
                      &=&(a_1\bowtie S_H(T(a_2))T(a_5)x_1)\otimes (a_3\bowtie S_H(T(a_4))T(a_6)x_2)\\
                      &=&(a_1\bowtie S_H(T(a_2))T(a_3)x_1)\otimes (a_4\bowtie S_H(T(a_5))T(a_6)x_2)\\
                      &=&(\varepsilon_K(a_2)a_1\bowtie x_1)\otimes (\varepsilon_K(a_4)a_3\bowtie x_2)\\
                      &=&(a_1\bowtie x_1)\otimes (a_2\bowtie x_2),\\[.4em]
\mathfrak S_T(a\bowtie x) &=& \Phi^{-1}_T(S_\rtimes\Phi_T(a\bowtie x))\\
 &=& \Phi^{-1}_T(S_\rtimes(a_1\# T(a_2)x))\\
 &=& \Phi^{-1}_T((S_H(T(a_1)x_1)\rightharpoonup S_K(a_2))\#S_H(T(a_3)x_2))\\
 &=& \Phi^{-1}_T((S_H(x_1)\rightharpoonup(S_H(T(a_1))\rightharpoonup S_K(a_2)))\#S_H(T(a_3)x_2))\\
 &\stackrel{\eqref{rrb-des2}}{=}& \Phi^{-1}_T((S_H(x_1)\rightharpoonup S_T(a_1))\#S_H(x_2)S_H(T(a_2))\\
 &=& (S_H(x_1)\rightharpoonup S_T(a_1))\bowtie S_H(T(S_H(x_2)\rightharpoonup S_T(a_2)))S_H(x_3)S_H(T(a_3))\\
 &=& (S_H(x_1)\rightharpoonup S_T(a_1))\bowtie S_H(T(S_H(x_2)\rightharpoonup S_T(a_2)))S_H(x_3)T(S_T(a_3))\\
 &\stackrel{\eqref{eq:mp-raction}}{=}&  (S_H(x_1)\rightharpoonup S_T(a_1))\bowtie (S_H(x_2)\leftharpoonup S_T(a_2)).
\end{eqnarray*}
Moreover, it is obvious that $1_T=1\bowtie 1$  and $$\varepsilon_T(a\bowtie x)=\varepsilon_K(a_1)\varepsilon_H(T(a_2)x)=\varepsilon_K(a)\varepsilon_H(x).$$
Define linear maps $\frki_{K_T}:K_T\to K\otimes H$ and $\frki_{H}:H\to K\otimes H$ by
$$
\frki_{K_T}(a)=a\bowtie 1,\quad
\frki_{H}(x)=1\bowtie x.
$$
Then it is obvious  that $\frki_{K_T}$ and $\frki_{H}$ are injective Hopf algebra homomorphisms, and
\begin{eqnarray*}
(a\bowtie 1)\cdot_{T}(1\bowtie x)&=&a\ast_T(1\rightharpoonup 1)\bowtie (1\leftharpoonup 1)x=a_1(T(a_2)\rightharpoonup 1)\bowtie x\\
&=&a_1\vep_H(T(a_2))\bowtie x=a\bowtie x.
\end{eqnarray*}
Therefore, we obtain that $(K\otimes H,\cdot_T,1_T,\Delta_T,\varepsilon_T,\mathfrak S_T)$ is a Hopf algebra that can be factorized into Hopf algebras $K_T$ and $H$. Thus, we deduce that $H$ is a right $K_T$-module coalgebra via the action $\leftharpoonup$ and $K_T$ is a left $H$-module coalgebra via the action $\rightharpoonup $ and the 4-tuples $(H,K_T,\rightharpoonup,\leftharpoonup)$ is a matched pair of Hopf algebras by Theorem~\mref{Prop:mp-Hopf}. Moreover, the Hopf algebra $(K\otimes H,\cdot_T,1_T,\Delta_T,\varepsilon_T,\mathfrak S_T)$ is exactly the double crossproduct $K_T\bowtie H.$
\end{proof}

 Conversely, let $H$ and $K$ be two cocommutative Hopf algebras such that $K$ is an $H$-module bialgebra via an action $\rightharpoonup$. Let $T:K\to H$ be a coalgebra homomorphism, and $(K\otimes H,\cdot_T,1_T,\Delta_T,\varepsilon_T,\mathfrak S_T)$   the Hopf algebra   obtained from the smash product $K\rtimes H$ via the linear isomorphism $\Phi_T$ given in \eqref{twist-Hopf}.
\begin{prop}
If $K\otimes 1$ is a subalgebra of the Hopf algebra $(K\otimes H,\cdot_T,1_T,\Delta_T,\varepsilon_T,\mathfrak S_T)$,
then $T$ is a relative Rota-Baxter operator
with respect to the $H$-module bialgebra $(K,\rightharpoonup)$.
\end{prop}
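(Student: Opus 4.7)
The plan is to exploit the fact that $\Phi_T$ is by construction an algebra isomorphism from $(K\otimes H,\cdot_T)$ to the smash product $K\rtimes H$, so that the hypothesis ``$K\otimes 1$ is a subalgebra'' can be translated to an identity inside $K\rtimes H$ from which the Rota-Baxter equation can be read off by applying the counits.

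First I would compute the product $\Phi_T(a\bowtie 1)\cdot \Phi_T(b\bowtie 1)$ inside $K\rtimes H$. Since $\Phi_T(a\bowtie 1)=a_1\# T(a_2)$ and similarly for $b$, the definition of the smash product together with the fact that $T$ is a coalgebra homomorphism yields
\[
\Phi_T(a\bowtie 1)\cdot \Phi_T(b\bowtie 1)=a_1(T(a_2)\rightharpoonup b_1)\# T(a_3)T(b_2).
\]

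Next I would invoke the hypothesis. Since $K\otimes 1$ is closed under $\cdot_T$, there exists a unique $c\in K$ with $(a\bowtie 1)\cdot_T(b\bowtie 1)=c\bowtie 1$, and applying $\Phi_T$ gives the identity
\[
c_1\otimes T(c_2)=a_1(T(a_2)\rightharpoonup b_1)\otimes T(a_3)T(b_2)\quad\text{in }K\otimes H.
\]
Applying $\id\otimes\varepsilon_H$ to both sides and using $\varepsilon_H\circ T=\varepsilon_K$, the right-hand side collapses to $a_1(T(a_2)\rightharpoonup b)$, so $c=a_1(T(a_2)\rightharpoonup b)$. Applying $\varepsilon_K\otimes\id$ to both sides instead, the right-hand side collapses to $T(a)T(b)$, so $T(c)=T(a)T(b)$.

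Combining the two extractions yields
\[
T(a)T(b)=T\bigl(a_1(T(a_2)\rightharpoonup b)\bigr),
\]
which is precisely Eq.~\eqref{eq:rrb-Hopf}, so $T$ is a relative Rota-Baxter operator with respect to $(K,\rightharpoonup)$. The argument is essentially mechanical once one transports $\cdot_T$ back along $\Phi_T$; the only subtlety is bookkeeping with Sweedler indices, which is handled cleanly by the coalgebra homomorphism property of $T$, and I do not anticipate any genuine obstacle.
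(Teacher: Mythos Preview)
Your proof is correct and follows essentially the same approach as the paper: both transport the product $(a\bowtie 1)\cdot_T(b\bowtie 1)$ through the isomorphism $\Phi_T$ and then extract the Rota-Baxter identity by applying two slice maps. The only cosmetic difference is that the paper pulls back via $\Phi_T^{-1}$ (introducing an $S_H$ term) and then applies $m_H(T\otimes\id)$ and $T\otimes\vep_H$, whereas you push forward via $\Phi_T$ and apply $\id\otimes\vep_H$ and $\vep_K\otimes\id$; your route avoids the antipode bookkeeping and is a bit cleaner.
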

\begin{proof}
Since $K\otimes 1$ is a subalgebra of $(K\otimes H,\cdot_T,1_T,\Delta_T,\varepsilon_T,\mathfrak S_T)$, for any $a,b\in K$ we have
\begin{eqnarray*}
(a\bowtie 1)\cdot_T(b\bowtie 1)&=&\Phi^{-1}_T\big(\Phi_T(a\bowtie 1)\Phi_T(b\bowtie 1)\big)\\
&=&\Phi^{-1}_T\big(a_1(T(a_2)\rightharpoonup b_1)\# T(a_3)T(b_2)\big)\\
                                                   &=&a_1(T(a_2)\rightharpoonup b_1)\bowtie S_H\big(T(a_3(T(a_4)\rightharpoonup b_2))\big)T(a_5)T(b_3)\in K\otimes 1.
\end{eqnarray*}
Applying $m_H(T\otimes\id)$ and $T\otimes\vep_H$ to it respectively, we obtain that
\begin{eqnarray*}
T(a)T(b)&=&T(a_1(T(a_2)\rightharpoonup b_1))S_H\big(T(a_3(T(a_4)\rightharpoonup b_2))\big)T(a_5)T(b_3)\\
&=&T(a_1(T(a_2)\rightharpoonup b_1))\vep_H(S_H\big(T(a_3(T(a_4)\rightharpoonup b_2))\big)T(a_5)T(b_3))\\
&=&T(a_1(T(a_2)\rightharpoonup b_1)).
\end{eqnarray*}
Namely,  \eqref{eq:rrb-Hopf} holds, and $T$ is a relative Rota-Baxter operator.
\end{proof}


Let  $(H,\rhd)$ be a post-Hopf algebra and $H_\rhd\coloneqq(H,*_\rhd,1,\Delta,\vep,S_\rhd)$ the subadjacent Hopf algebra given in Theorem \ref{thm:subHopf}. By Proposition \ref{prop:PHtoRB}, the identity map $\id:H\to H_\rhd$ is a relative Rota-Baxter operator. By Theorem   \ref{mat-rrb-twist}, we have

\begin{coro}\label{cor:post-mp}
Let  $(H,\rhd)$ be a cocommutative post-Hopf algebra. Then the $4$-tuple $(H_\rhd,H_\rhd,\rhd,\lhd)$ is a matched pair of cocommutative Hopf algebras, where    $\lhd$ is given by
\begin{eqnarray}
a\lhd b&=&S_\rhd\big(a_1\rhd b_1\big)\ast_\rhd a_2 \ast_\rhd b_2.
\end{eqnarray}
Moreover, we have  the compatibility condition
\begin{eqnarray}\label{braiding-condition}
a\ast_\rhd b=(a_1\rhd b_1)\ast_\rhd(a_2 \lhd b_2).
\end{eqnarray}
\end{coro}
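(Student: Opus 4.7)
The plan is to deduce the matched pair structure directly from Theorem \ref{mat-rrb-twist}, applied to the canonical relative Rota-Baxter operator $\id_H : H \to H_\rhd$ furnished by Proposition \ref{prop:PHtoRB}. Since $(H,\rhd)$ is cocommutative and $H_\rhd$ shares its coalgebra structure with $H$, both sides of the prospective matched pair are cocommutative, so the cocommutativity hypothesis of Theorem \ref{mat-rrb-twist} is met.

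First I would identify the descendant Hopf algebra $K_T$ associated to $T=\id_H$ with $H_\rhd$ itself. By Corollary \ref{prop:rrb-des} its multiplication is $a*_T b = a_1\cdot(a_2\rhd b) = a*_\rhd b$, and its antipode is $S_T(a)=S_\rhd(a_1)\rhd S(a_2)$, which coincides with $S_\rhd$ via the identity $\beta_{\rhd,a}=\alpha_{\rhd,S_\rhd(a)}$ recorded after Theorem \ref{thm:subHopf} (equivalently Eq.~\eqref{post-rbb-2'}). Hence $K_T=H_\rhd$, and Theorem \ref{mat-rrb-twist} immediately yields a matched pair $(H_\rhd,H_\rhd,\rhd,\leftharpoonup)$ whose right action is
\begin{align*}
x\leftharpoonup a = S_\rhd(x_1\rhd a_1)*_\rhd x_2*_\rhd a_2,
\end{align*}
precisely the formula for $\lhd$ after relabeling.

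It remains to verify the compatibility \eqref{braiding-condition}. I would expand $\lhd$ and use coassociativity to write
\begin{align*}
(a_1\rhd b_1)*_\rhd (a_2\lhd b_2) = (a_1\rhd b_1)*_\rhd S_\rhd(a_2\rhd b_2)*_\rhd a_3*_\rhd b_3,
\end{align*}
with $\Delta^{(2)}a=a_1\otimes a_2\otimes a_3$ organized as $(\Delta\otimes\id)\Delta(a)$, so that $a_1\otimes a_2=\Delta(a_{(1)})$ and $a_3=a_{(2)}$ (similarly for $b$). Since $\rhd$ is a coalgebra homomorphism, $(a_1\rhd b_1)\otimes(a_2\rhd b_2)$ is the coproduct of the single element $a_{(1)}\rhd b_{(1)}$, so the antipode axiom in $H_\rhd$ collapses the first two factors to $\vep(a_{(1)})\vep(b_{(1)})1$. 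What remains is $a_{(2)}*_\rhd b_{(2)}=a*_\rhd b$, which is \eqref{braiding-condition}.

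The only real subtlety is the Sweedler bookkeeping in this last step, where one must recognize that the coproduct of $a_{(1)}\rhd b_{(1)}$ is exactly $(a_1\rhd b_1)\otimes(a_2\rhd b_2)$ in order to invoke the antipode identity cleanly; the remainder of the proof is a direct translation of the general matched-pair machinery of Theorem \ref{mat-rrb-twist} to the specific Rota-Baxter operator $\id_H$.
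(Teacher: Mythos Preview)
Your proposal is correct and follows essentially the same approach as the paper: deduce the matched pair from Theorem~\ref{mat-rrb-twist} applied to the relative Rota-Baxter operator $\id_H:H\to H_\rhd$ of Proposition~\ref{prop:PHtoRB}, and then verify the compatibility condition~\eqref{braiding-condition} by expanding $\lhd$ and collapsing via the antipode axiom in $H_\rhd$. Your explicit identification of the descendent Hopf algebra $K_T$ with $H_\rhd$ is slightly more detailed than the paper (which leaves this implicit in the lead-in to the corollary), but the substance is the same.
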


\begin{proof}
  We only  need  to check the stated compatibility condition, which follows from
\begin{align*}
(a_1\rhd b_1)\ast_\rhd(a_2 \lhd b_2)
&=(a_1\rhd b_1)\ast_\rhd\big(S_\rhd\big(a_2\rhd b_2\big)\ast_\rhd a_3 \ast_\rhd b_4\big)\\
&=\big((a_1\rhd b_1)\ast_\rhd S_\rhd\big(a_2\rhd b_2\big)\big)\ast_\rhd a_3 \ast_\rhd b_4\\
&=\vep(a_1\rhd b_1)a_2 \ast_\rhd b_2\\
&=\vep(a_1)\vep(b_1)
a_2 \ast_\rhd b_2\\
&=a \ast_\rhd b.\qedhere
\end{align*}
\end{proof}

At the end of this section, we show that post-Hopf algebras and relative Rota-Baxter operators   on cocommutative Hopf algebras give rise to solutions of the Yang-Baxter equation.

\begin{defn}
A  solution of the Yang-Baxter equation  on a vector space $V$
is an invertible linear endomorphism $R : V \otimes  V\to  V \otimes V$ such that
$$(R \otimes \id)(\id \otimes R)(R \otimes \id) = (\id \otimes R)(R \otimes \id)(\id \otimes R).$$
\end{defn}

\emptycomment{
A set-theoretical solution to the Yang-Baxter equation (YBE) on a set X
is a bijective map $S : X \times  X\to  X \times X$ such that
$$(S \times \id)(\id \times S)(S \times \id) = (\id \times S)(S \times \id)(\id \times S).$$
}

\begin{theorem}\label{thm:post-YBE}
Let  $(H,\rhd)$ be a cocommutative post-Hopf algebra. Then $R:H\otimes H\to H\otimes H$ defined by
$$
R(x\otimes y)= (x_1\rhd y_1 )\otimes  (x_2\lhd y_2 ),
$$
where $\lhd$ is defined by \eqref{eq:mp-raction}, is a coalgebra isomorphism and a solution of the Yang-Baxter equation on the vector space $H$.
\end{theorem}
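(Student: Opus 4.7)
The plan is to interpret $R$ as the canonical braiding associated to the matched pair of cocommutative Hopf algebras $(H_\rhd,H_\rhd,\rhd,\lhd)$ from Corollary~\mref{cor:post-mp}: a direct comparison with the proof of Theorem~\mref{mat-rrb-twist} shows that $R(x\otimes y)$ is precisely the product $(1\bowtie x)(y\bowtie 1)$ computed inside the double crossproduct $H_\rhd\bowtie H_\rhd$, so that all three assertions become manifestations of the matched-pair formalism.

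For the coalgebra homomorphism property, note that both $\rhd$ and $\lhd$ are coalgebra homomorphisms (the former by Definition~\mref{defi:pH}, the latter from the matched-pair structure of Theorem~\mref{mat-rrb-twist}). Expanding $\Delta_{H\otimes H}\bigl(R(x\otimes y)\bigr)$ via the Sweedler rule yields
\[
(x_{1\,(1)}\rhd y_{1\,(1)})\otimes(x_{2\,(1)}\lhd y_{2\,(1)})\otimes(x_{1\,(2)}\rhd y_{1\,(2)})\otimes(x_{2\,(2)}\lhd y_{2\,(2)}),
\]
which by cocommutativity of $H$ may be reindexed so that the inner pair of legs is interchanged, matching $(R\otimes R)\Delta_{H\otimes H}(x\otimes y)$ exactly; the counit compatibility is immediate from $\vep(a\rhd b)=\vep(a)\vep(b)=\vep(a\lhd b)$. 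For invertibility of $R$, one can write down an explicit two-sided inverse using the antipodes $S$ of $H$ and $S_\rhd$ of $H_\rhd$; conceptually, $R$ is the composition of the factorization isomorphism $\xi^{-1}$ of Theorem~\mref{Prop:mp-Hopf} with the ``reversed'' factorization $(x,y)\mapsto(1\bowtie x)(y\bowtie 1)$, which is also a linear bijection since the double crossproduct admits both factorizations in the cocommutative case.

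Finally, for the Yang-Baxter equation
\[
(R\otimes\id)(\id\otimes R)(R\otimes\id) = (\id\otimes R)(R\otimes\id)(\id\otimes R),
\]
I would expand both sides on a generic element $x\otimes y\otimes z$ in Sweedler form, applying $R$ one factor at a time. Each intermediate step produces compound terms of the form $(a\lhd b)\rhd c$ or $a\rhd(b\rhd c)$, which are reconciled by the matched-pair axioms \meqref{Mat-1} (distributivity of $\rhd$ over the $*_\rhd$-product), \meqref{Mat-3} (dual distributivity for $\lhd$), and \meqref{Mat-5} (exchange of mixed Sweedler legs), together with cocommutativity of $H$. The main obstacle will be the combinatorial bookkeeping of the many Sweedler indices as they propagate through the three compositions; the crucial simplification comes from cocommutativity, which allows arbitrary permutation of Sweedler legs at each intermediate stage, so that after applying \meqref{Mat-1}, \meqref{Mat-3} and \meqref{Mat-5} both sides collapse to a common expression indexed symmetrically by the comultiplication legs of $x$, $y$ and $z$.
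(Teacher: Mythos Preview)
Your interpretation of $R$ as the twistor of the matched pair $(H_\rhd,H_\rhd,\rhd,\lhd)$ is exactly the paper's starting point, and your arguments for the coalgebra-homomorphism and invertibility parts coincide with the paper's (which phrases them via the factorization map $\Psi=\xi^{-1}\circ m_{H_\rhd\bowtie H_\rhd}\circ(\frki_{H_\rhd^r}\otimes\frki_{H_\rhd^l})$ and cocommutativity of the double crossproduct).

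Where you diverge is the verification of the Yang--Baxter equation itself. You propose a direct Sweedler expansion of both sides of $(R\otimes\id)(\id\otimes R)(R\otimes\id)=(\id\otimes R)(R\otimes\id)(\id\otimes R)$, reconciling the two via the matched-pair axioms \meqref{Mat-1}, \meqref{Mat-3}, \meqref{Mat-5}. This works, but the paper avoids the bookkeeping you anticipate by a different route: it first observes that the compatibility \meqref{braiding-condition} from Corollary~\mref{cor:post-mp} says precisely $m_{H_\rhd}=m_{H_\rhd}\circ R$, and that together with the equations $\Psi\circ(m_{H_\rhd}\otimes\Id)=(\Id\otimes m_{H_\rhd})\circ(\Psi\otimes\Id)\circ(\Id\otimes\Psi)$ (and its mirror), $\Psi(1\otimes x)=x\otimes1$, $\Psi(x\otimes1)=1\otimes x$, this makes $R=\Psi$ a \emph{braiding operator} on the Hopf algebra $H_\rhd$ in the sense of \cite{GGV}. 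The paper then simply cites \cite[Theorem~4.11]{GGV} to conclude the Yang--Baxter equation. So your approach is more self-contained but laborious; the paper's is shorter because the combinatorial step you plan is already encapsulated in the cited result, and the only new ingredient the paper has to supply is the braiding condition \meqref{braiding-condition}.
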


\begin{proof}
Denote by $H_\rhd^l$ and $H_\rhd^r$ two copies of the Hopf algebra $H_\rhd$. By Corollary \ref{cor:post-mp},   $(H_\rhd^l,H_\rhd^r,\rhd,\lhd)$ is a matched pair of cocommutative Hopf algebras.  Thus, $A=H_\rhd^l\bowtie H_\rhd^r$ is a Hopf  algebra that factorized into Hopf algebras $H_\rhd^l$ and $H_\rhd^r$. By Theorem \ref{Prop:mp-Hopf}, there is a coalgebra isomorphism $$\xi:H_\rhd^l\otimes H_\rhd^r\stackrel{\frki_{H_\rhd^l}\otimes \frki_{H_\rhd^r}}{\longrightarrow}A\otimes A\stackrel{m_{H_\rhd^l\bowtie H_\rhd^r}}{\longrightarrow} A=H_\rhd^l\bowtie H_\rhd^r.$$ We consider the coalgebra homomorphism 
 $$\Psi=\xi^{-1}\circ m_{H_\rhd^l\bowtie H_\rhd^r}\circ (\frki_{H_\rhd^r}\otimes \frki_{H_\rhd^l}):H_\rhd^r\otimes H_\rhd^l\to H_\rhd^l\otimes H_\rhd^r.$$
Since $H_\rhd^l\bowtie H_\rhd^r$ is a cocommutative Hopf algebra, we deduce that $\Psi$ is a coalgebra isomorphism. Moreover, $\Psi$ satisfies the following equations:
\begin{eqnarray*}
\Psi\circ(m_{H_\rhd}\otimes \Id)&=&(\Id\otimes m_{H_\rhd})\circ (\Psi\otimes \Id)\circ(\Id\otimes \Psi),\\
\Psi\circ(\Id\otimes m_{H_\rhd})&=&(m_{H_\rhd}\otimes\Id)\circ (\Id\otimes \Psi)\circ(\Psi\otimes \Id),\\
\Psi(1\otimes x)&=&x\otimes 1,\\
\Psi(x\otimes 1)&=&1\otimes x.
\end{eqnarray*}
For all $x,y\in H$, we have
\begin{eqnarray*}
\Psi(x\otimes y)=(x_1\rhd y_1)\otimes(x_2\lhd y_2)=R(x\otimes y).
\end{eqnarray*}
By \eqref{braiding-condition}, we have $m_{H_\rhd}=m_{H_\rhd}\circ \Psi$. Thus, we deduce that $R=\Psi$ is a braiding operator on the cocommutative Hopf algebra $H_\rhd\coloneqq(H,*_\rhd,1,\Delta,\vep,S_\rhd)$. By \cite[Theorem 4.11]{GGV}, we obtain that $R$ is a solution of the Yang-Baxter equation on the vector space $H$.
\end{proof}

\begin{exam}
Consider the post-Hopf algebra $(T\bk\{\huaO\},\Delta^{\co},\rhd)$ given in Example \ref{free-post-lie}. Then $R:T\bk\{\huaO\}\otimes T\bk\{\huaO\}\to T\bk\{\huaO\}\otimes T\bk\{\huaO\}$ defined by
$$
R(\huaX\otimes \huaY)=\Big(\huaX_1\rhd \huaY_1\Big)\otimes \Big(\huaX_2\lhd \huaY_2\Big),\quad \huaX,\huaY\in T\bk\{\huaO\}
$$
is a coalgebra isomorphism and a solution of the Yang-Baxter equation on the vector space $T\bk\{\huaO\}$. More precisely, we have
 \begin{eqnarray*}
R(\huaX\otimes \huaY)=\Big(\huaX_1\rhd\huaY_1\Big)\otimes B^-\Big(S_{\rhd}\big(\huaX_2\rhd\huaY_2\big)\rhd\big(\huaX_3\rhd B^+(\huaY_3)\big)\Big),
\end{eqnarray*}
where ${\Delta^{\co}}^{(2)}\huaX=\huaX_1\otimes\huaX_2\otimes \huaX_3$ and ${\Delta^{\co}}^{(2)}\huaY=\huaY_1\otimes\huaY_2\otimes \huaY_3$.
\end{exam}

\begin{exam}
Consider the pre-Hopf algebra  $(S\bk\{\huaT\},\Delta^{\co},\rhd)$ given in Example \ref{free-pre-lie}. Then $R:S\bk\{\huaT\}\otimes S\bk\{\huaT\}\to S\bk\{\huaT\}\otimes S\bk\{\huaT\}$ defined by
$$
R(\huaX\otimes \huaY)=\Big(\huaX_1\rhd \huaY_1\Big)\otimes \Big(\huaX_2\lhd \huaY_2\Big),\quad \huaX,\huaY\in S\bk\{\huaT\}
$$
is a coalgebra isomorphism and a solution of the Yang-Baxter equation on the vector space $S\bk\{\huaT\}$. More precisely, for forests $\huaX,\huaY\in S\bk\{\huaT\}$, we have
\begin{eqnarray*}
R(\huaX\otimes \huaY)=\Big(\huaX_1\rhd\huaY_1\Big)\otimes B^-\Big(S_{\rhd}\big(\huaX_2\rhd\huaY_2\big)\rhd\big(\huaX_3\rhd B^+(\huaY_3)\big)\Big),
\end{eqnarray*}
where ${\Delta^{\co}}^{(2)}\huaX=\huaX_1\otimes\huaX_2\otimes \huaX_3$ and ${\Delta^{\co}}^{(2)}\huaY=\huaY_1\otimes\huaY_2\otimes \huaY_3$.

\end{exam}

Let $T:K\to H$ be a relative Rota-Baxter operator on   $H$ with respect to a commutative $H$-module bialgebra $(K,\rightharpoonup)$. By Theorem \ref{th:post-action}, $(K,\rhd_T)$ is a commutative post-Hopf algebra.  By Corollary~\ref{prop:rrb-des}, there is a descendent Hopf algebra $K_T=(K,*_T,\Delta,\vep,S_T)$, such that $K$ is a $K_T$-module bialgebra via the action $\rhd_T$ defined in  \eqref{eq:post-action}. By Corollary \ref{cor:post-mp}, we have

\begin{coro}
The $4$-tuples $(K_T,K_T,\rhd_T,\lhd_T)$ is a matched pair of cocommutative Hopf algebras, here  $\rhd_T$ is  given by \eqref{eq:post-action} and $\lhd_T$ is given by
\begin{eqnarray}
\label{RRB-2}a\lhd_T b&=&S_T\big(a_1\rhd_T b_1\big)\ast_T a_2 \ast_T b_2.
\end{eqnarray}
Moreover, we have  the compatibility condition
\begin{eqnarray}
a\ast_T b=(a_1\rhd_T b_1)\ast_T(a_2 \lhd_T b_2).
\end{eqnarray}
\end{coro}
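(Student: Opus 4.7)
My plan is to recognize this corollary as a direct specialization of Corollary~\ref{cor:post-mp} applied to the post-Hopf algebra induced by $T$, after identifying the descendent Hopf algebra with a subadjacent Hopf algebra.

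First, I would invoke Theorem~\ref{th:post-action} to obtain the (cocommutative) post-Hopf algebra structure $(K,\rhd_T)$ with $a\rhd_T b = T(a)\rightharpoonup b$. The next step, which is really the only non-bookkeeping check, is to verify the identification $K_T = K_{\rhd_T}$ between the descendent Hopf algebra of Corollary~\ref{prop:rrb-des} and the subadjacent Hopf algebra of Theorem~\ref{thm:subHopf} associated with $(K,\rhd_T)$. For the multiplication this is immediate:
\begin{eqnarray*}
a*_{\rhd_T} b \;=\; a_1\cdot(a_2\rhd_T b) \;=\; a_1\bigl(T(a_2)\rightharpoonup b\bigr) \;=\; a*_T b.
\end{eqnarray*}
For the antipodes, one observes that the formula $S_{\rhd_T}(a)=\beta_{\rhd_T,a_1}(S_K(a_2))$ from \eqref{post-rbb-2}, together with the identification $\beta_{\rhd_T,a}b = S_H(T(a))\rightharpoonup b$ established inside the proof of Theorem~\ref{th:post-action}, reproduces precisely $S_T(a)=S_H(T(a_1))\rightharpoonup S_K(a_2)$ from \eqref{rrb-des2}. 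Thus $K_T$ and $K_{\rhd_T}$ agree as Hopf algebras.

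Now I simply apply Corollary~\ref{cor:post-mp} to $(K,\rhd_T)$: it yields the matched pair $(K_{\rhd_T},K_{\rhd_T},\rhd_T,\lhd)$ of cocommutative Hopf algebras with
\begin{eqnarray*}
a\lhd b \;=\; S_{\rhd_T}(a_1\rhd_T b_1)*_{\rhd_T} a_2*_{\rhd_T} b_2,
\end{eqnarray*}
and by the identification above this is exactly the map $\lhd_T$ defined in \eqref{RRB-2}. The compatibility condition $a*_T b = (a_1\rhd_T b_1)*_T(a_2\lhd_T b_2)$ is then just the specialization of the compatibility \eqref{braiding-condition} in Corollary~\ref{cor:post-mp}, obtained by rewriting $*_{\rhd_T}$ as $*_T$ and $\lhd$ as $\lhd_T$.

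The only potential pitfall, and the step I would write out most carefully, is the antipode identification $S_T = S_{\rhd_T}$, since the two formulas look syntactically different; everything else is a transfer of notation through the functor $\Xi$ from Theorem~\ref{th:post-action}. No new computation with the matched-pair axioms \eqref{Mat-1}--\eqref{Mat-5} is needed, as those have already been verified in Theorem~\ref{mat-rrb-twist} and Corollary~\ref{cor:post-mp}.
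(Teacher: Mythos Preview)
Your proposal is correct and follows exactly the paper's approach: the paper derives this corollary directly from Corollary~\ref{cor:post-mp} applied to the post-Hopf algebra $(K,\rhd_T)$ obtained via Theorem~\ref{th:post-action}, with the identification $K_T=K_{\rhd_T}$ left implicit. Your explicit check that $S_T=S_{\rhd_T}$ (via $\beta_{\rhd_T,a}=\alpha_{\rhd_T,S_T(a)}$ and $T(S_T(a))=S_H(T(a))$) is a welcome clarification of a step the paper does not spell out.
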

\emptycomment{\begin{proof}
It still needs to check the stated compatibility condition, which follows from
\begin{align*}
(a_1\rhd_T b_1)\ast_T(a_2 \lhd_T b_2)
&=(a_1\rhd_T b_1)\ast_T\big(S_T\big(a_2\rhd_T b_2\big)\ast_T a_3 \ast_T b_4\big)\\
&=\big((a_1\rhd_T b_1)\ast_T S_T\big(a_2\rhd_T b_2\big)\big)\ast_T a_3 \ast_T b_4\\
&=\vep_K(a_1\rhd_T b_1)a_2 \ast_T b_2\\
&=\vep_K(a_1)\vep_K(b_1)
a_2 \ast_T b_2\\
&=a \ast_T b.\qedhere
\end{align*}
\end{proof}}

By Theorem \ref{thm:post-YBE}, we have

\begin{coro}
Let $T:K\to H$ be a relative Rota-Baxter operator  with respect to a commutative $H$-module bialgebra  $(K,\rightharpoonup)$. Then $R:K\otimes K\to K\otimes K$ defined by
$$
R(a\otimes b)=\Big(a_1\rhd_T b_1\Big)\otimes \Big(a_2\lhd_T b_2\Big)
$$
is a coalgebra isomorphism and a solution of the Yang-Baxter equation on the vector space $K$, where $\rhd_T$ and $\lhd_T$ are defined by \eqref{eq:post-action} and \eqref{RRB-2} respectively.
\end{coro}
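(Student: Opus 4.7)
The plan is to reduce this corollary to Theorem \ref{thm:post-YBE} applied to the post-Hopf algebra $(K,\rhd_T)$ produced by $T$ via Theorem \ref{th:post-action}. The functorial passage $\Xi:\rRB\to\PH$ already does most of the work; what remains is to match the two pieces of notation introduced in the corollary ($*_T$, $S_T$, $\lhd_T$) with the subadjacent data ($*_{\rhd_T}$, $S_{\rhd_T}$, $\lhd$) produced by Theorem \ref{thm:post-YBE}.

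First, by Theorem \ref{th:post-action}, the relative Rota-Baxter operator $T$ endows $K$ with the post-Hopf algebra structure $a\rhd_T b=T(a)\rightharpoonup b$. Under the cocommutativity hypothesis carried over from the previous paragraph (needed for Corollary \ref{prop:rrb-des} and for Theorem \ref{thm:subHopf}), Theorem \ref{thm:subHopf} produces the subadjacent Hopf algebra $K_{\rhd_T}=(K,*_{\rhd_T},1,\Delta,\vep,S_{\rhd_T})$. A one-line unraveling of the definitions,
$$a*_{\rhd_T} b = a_1\cdot(a_2\rhd_T b) = a_1(T(a_2)\rightharpoonup b) = a*_T b,$$
shows that the multiplication $*_{\rhd_T}$ from \eqref{post-rbb-1} coincides with $*_T$ from \eqref{rrb-des1}. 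Since the coalgebra, unit, and multiplication of $K_{\rhd_T}$ and $K_T$ agree, the uniqueness of the antipode of a bialgebra forces $S_{\rhd_T}=S_T$; alternatively, one can verify this directly by plugging the identity $\beta_{\rhd_T,a}b=T(S_T(a))\rightharpoonup b$ into \eqref{post-rbb-2} and comparing with \eqref{rrb-des2}. Thus $K_{\rhd_T}=K_T$ as Hopf algebras.

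Second, under this identification, the right action from Theorem \ref{thm:post-YBE} applied to $(K,\rhd_T)$, namely
$$a\lhd b = S_{\rhd_T}(a_1\rhd_T b_1)*_{\rhd_T} a_2 *_{\rhd_T} b_2,$$
becomes exactly the operation $\lhd_T$ defined in \eqref{RRB-2}. Invoking Theorem \ref{thm:post-YBE} for the cocommutative post-Hopf algebra $(K,\rhd_T)$ now delivers the desired conclusion: the map $R(a\otimes b) = (a_1\rhd_T b_1)\otimes(a_2\lhd_T b_2)$ is a coalgebra isomorphism and a solution of the Yang-Baxter equation on $K$. No real obstacle arises, since the corollary is essentially a translation of Theorem \ref{thm:post-YBE} through the equivalence between post-Hopf structures and relative Rota-Baxter operators; the only point requiring care is the bookkeeping identification $K_{\rhd_T}=K_T$ so that the two definitions of $\lhd$ visibly coincide.
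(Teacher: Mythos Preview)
Your proposal is correct and follows exactly the paper's approach: the paper states the corollary immediately after the phrase ``By Theorem~\ref{thm:post-YBE}, we have,'' so its proof is nothing more than the application of Theorem~\ref{thm:post-YBE} to the cocommutative post-Hopf algebra $(K,\rhd_T)$ produced by Theorem~\ref{th:post-action}. Your additional bookkeeping---checking that $*_{\rhd_T}=*_T$, $S_{\rhd_T}=S_T$, and hence that the $\lhd$ of Corollary~\ref{cor:post-mp} coincides with $\lhd_T$ of \eqref{RRB-2}---is exactly what the paper tacitly assumes (and has already recorded in the paragraph preceding the corollary), so you have simply made the implicit identification explicit.
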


\section{Equivalent characterizations of relative Rota-Baxter operators}\label{sec:gr}
In this section, we give some alternative characterizations of relative Rota-Baxter operators using relative Rota-Baxter operators on the Lie algebra of primitive elements, graphs and  module bialgebra structures.

\subsection{Restrictions and extensions of relative Rota-Baxter operators}Let $K$ be a  cocommutative $H$-module bialgebra via an action $\rightharpoonup$.  It is obvious that via the restrictions of the action $\rightharpoonup$, we obtain   actions of $G(H)$ on $G(K)$ and of $P(H)$ on $P(K)$, for which we use the same notations.
As expected, a  relative Rota-Baxter operator with respect to a cocommutative $H$-module bialgebra $(K,\rightharpoonup)$  will naturally induces a  relative Rota-Baxter operator on the group $G(H)$ and on the Lie algebra $P(H)$ respectively.

\begin{theorem}\label{thm:restriction}
Let $T:K\to H$ be a relative Rota-Baxter operator with respect to a cocommutative $H$-module bialgebra $(K,\rightharpoonup)$.
\begin{enumerate}[{\rm(i)}]
\item $T|_{G(K)}$ is a relative Rota-Baxter operator on the group $G(H)$ with respect to the action $(G(K),\rightharpoonup)$;
\item $T|_{P(K)}$ is a relative Rota-Baxter operator   on the Lie algebra $P(H)$ with respect to the action $(P(K),\rightharpoonup)$.
\end{enumerate}
\end{theorem}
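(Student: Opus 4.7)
The plan is to reduce the Hopf algebra relative Rota-Baxter identity $T(a)T(b)=T(a_1(T(a_2)\rightharpoonup b))$ to the group-like and primitive settings by substituting the respective comultiplications. Before doing that I would establish a few preliminary facts. Since $T$ is a coalgebra homomorphism it sends $G(K)$ into $G(H)$ and $P(K)$ into $P(H)$; since $(K,\rightharpoonup)$ is an $H$-module bialgebra, the action restricts to an action $G(H)\times G(K)\to G(K)$ and to an action $P(H)\otimes P(K)\to P(K)$ (comultiplication of $x\rightharpoonup a$ unpacks into terms that are group-like or primitive in the two cases). Finally, setting $a=b=1$ in the relative Rota-Baxter identity gives $T(1)^2=T(1)\varepsilon_K(1)=T(1)$, and since $T(1)\in G(H)$ is invertible in the group $G(H)$, this forces $T(1)=1$.

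For part (i), given $a,b\in G(K)$, we have $\Delta(a)=a\otimes a$, so the relative Rota-Baxter identity collapses to
\[
T(a)T(b)=T\bigl(a\cdot(T(a)\rightharpoonup b)\bigr).
\]
Group-likeness of $a$ and $T(a)\rightharpoonup b$ (hence of their product) together with $T(1)=1$ shows that $T|_{G(K)}$ maps into $G(H)$ and satisfies \eqref{rRBg} for the action $\Phi=\rightharpoonup$.

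For part (ii), given $x,y\in P(K)$, we use $\Delta(x)=x\otimes 1+1\otimes x$ and the identity $T(1)=1$ together with $1\rightharpoonup y=y$ to compute
\[
T(x)T(y)=T\bigl(x\cdot(T(1)\rightharpoonup y)+1\cdot(T(x)\rightharpoonup y)\bigr)
=T(xy)+T\bigl(T(x)\rightharpoonup y\bigr).
\]
Antisymmetrizing in $x,y$ yields
\[
[T(x),T(y)]_{P(H)}=T\bigl([x,y]_{P(K)}+T(x)\rightharpoonup y-T(y)\rightharpoonup x\bigr),
\]
which is precisely \eqref{eq:B} for $\phi=\rightharpoonup$, $\g=P(H)$, $\h=P(K)$. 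The argument closes provided the right-hand side indeed lives in $P(K)$, which follows since $P(K)$ is closed under $[\cdot,\cdot]$ and invariant under $P(H)\rightharpoonup-$ by the module bialgebra restriction noted above.

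The argument contains no real obstacle; the only subtlety worth flagging is the normalization $T(1)=1$, which is not part of the definition of a relative Rota-Baxter operator but is forced by the defining identity, and is exactly what is needed to kill the spurious $T(1)\rightharpoonup y$ terms in the primitive computation so that the commutator identity reduces cleanly.
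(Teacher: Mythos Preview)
Your proof is correct and follows essentially the same route as the paper: verify that $T$ preserves group-likes and primitives because it is a coalgebra map, then specialize the identity $T(a)T(b)=T(a_1(T(a_2)\rightharpoonup b))$ to $a,b\in G(K)$ and to $a,b\in P(K)$, antisymmetrizing in the second case. The one place where you are actually more careful than the paper is in deriving $T(1)=1$: the paper tacitly uses $T(1)\rightharpoonup b=b$ when it writes $T(a)T(b)=T(ab)+T(T(a)\rightharpoonup b)$ for primitive $a$, whereas you isolate this as a separate step and justify it from the relative Rota-Baxter identity together with $T(1)\in G(H)$.
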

\begin{proof}
Since $T$ is a coalgebra homomorphism, it follows  that $T|_{G(K)}$ is a map from $G(K)$ to $G(H)$, and $T|_{P(K)}$ is a map from $P(K)$ to $P(H)$.

 For any $a,b\in G(K)$, we have
$$T(a)T(b)=T\big(a(T(a)\rightharpoonup b)\big),$$
 which implies that $T|_{G(K)}$ is a  relative Rota-Baxter operator on the group $G(H)$ with respect to the action $(G(K),\rightharpoonup)$.

For any $a,b\in P(K)$, we have
$T(a)T(b)=T(ab)+T\big(T(a)\rightharpoonup b\big),$
and thus
$$[T(a),T(b)]=T\big(T(a)\rightharpoonup b\big)
-T\big(T(b)\rightharpoonup a\big)+T([a,b]).$$
 Hence,   $T|_{P(K)}$ is a relative Rota-Baxter operator on  the Lie algebra $P(H)$ with respect to the action $(P(K),\rightharpoonup)$.
\end{proof}

Let $\phi:\g\to\Der(\h)$ be an action of a Lie algebra $(\g,[\cdot,\cdot]_\g)$ on  $(\mathcal{\h},[\cdot,\cdot]_\h)$. Then $\phi$ can be extended to a module bialgebra action $\bar\phi:U(\g)\to \End(T_\bk(\h))$ by 
$$
\bar\phi(x)(1)=0,\,\bar\phi(x)(y_1\cdots y_r)=\sum_{i=1}^r y_1\cdots y_{i-1} \phi(x)(y_i) y_{i+1} \cdots y_r,
 $$
 where $T_\bk(\h)$ is the tensor $\bk$-algebra of $\h$, $x\in\g$ and $y_1,\dots, y_r\in\h,\,r\geq1$. As $\g$ acts on $\h$ by derivations, it induces a module bialgebra action $\bar\phi$ of $U(\g)$ on $U(\h)$.

The following extension theorem of relative Rota-Baxter operators from Lie algebras to their universal enveloping algebras generalizes the case of Rota-Baxter operators given in \cite[Theorem~2]{Go}.

\begin{theorem}\mlabel{prop:rrb-uea}
Any relative Rota-Baxter operator $T:\h\to \g$ on a Lie algebra $\g$ with respect to an action $(\h;\phi)$ can be extended to a unique relative Rota-Baxter operator $\bar T:U(\h)\to U(\g)$ with respect to the extended $U(\g)$-module bialgebra  $ ( U(\h),\bar\phi)$  by
$$
\bar T(y_1\cdots y_n)=\big(T(y_1)\bar T-\bar T\bar\phi(T(y_1))\big)\cdots\big(T(y_n)\bar T-\bar T\bar\phi(T(y_n))\big)(1),
\quad\forall y_1,\dots,y_n\in\h,\,n\geq1,$$
where those $T(y_k)$'s left to $\bar T$   are interpreted as the left multiplication by them.

Furthermore, the post-Hopf algebra $(U(\h),\rhd_{\bar T})$ induced by the relative Rota-Baxter operator $\bar T:U(\h)\to U(\g)$ as in Theorem~\ref{th:post-action} coincides with the extended post-Hopf algebra $(U(\h),\bar\rhd_T)$ from $(\h,\rhd_T)$ given in Theorem~\ref{th:post-uea}. Namely, we have the following diagram
\begin{equation}
\small{
 \xymatrix{(\h,\rhd_T) \ar[rr]^{\text{extension}} &  &  (U(\h),\rhd_{\bar T})    \\
\h\stackrel{T}{\to} \g \ar[rr]^{\text{extension}}\ar[u]^{\text{Rota-Baxter action}} & &U(\h)\stackrel{\bar{T}}{\to}U(\g)\ar[u]_{\text{Rota-Baxter action}}.}
}
\end{equation}
\end{theorem}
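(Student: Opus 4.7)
The plan is to verify the claimed extension via the iterated recursion
\[
\bar T(1)=1,\qquad \bar T(y\cdot a)=T(y)\bar T(a)-\bar T(\bar\phi(T(y))a),\quad y\in\h,\;a\in U(\h),
\]
which is the unwinding of the operator formula in the statement (applying the rightmost factor $(T(y_n)\bar T-\bar T\bar\phi(T(y_n)))$ to $1$ yields $T(y_n)=\bar T(y_n)$, and iterating from right to left reproduces $\bar T(y_1\cdots y_n)$). I then have to check: (i) the recursion descends from the tensor algebra $T_\bk(\h)$ to the quotient $U(\h)$; (ii) $\bar T$ is a coalgebra homomorphism; (iii) $\bar T$ satisfies the relative Rota-Baxter identity \eqref{eq:rrb-Hopf}; (iv) $\bar T$ is the unique such extension; and (v) the two induced post-Hopf structures on $U(\h)$ coincide.

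Uniqueness (iv) is the quickest: specializing \eqref{eq:rrb-Hopf} to $a=y\in P(U(\h))=\h$ (so $\Delta(y)=y\otimes 1+1\otimes y$) gives exactly the above recursion, which together with $\bar T(1)=1$ forces $\bar T$ on all of $U(\h)$ since $U(\h)$ is generated by $\h\cup\{1\}$ as an algebra. For (i), I unfold both $\bar T(yz\cdot a)$ and $\bar T(zy\cdot a)$ via the recursion twice and reduce their difference to $\bar T([y,z]_\h\cdot a)$. The collapse relies on two identities: the relative Rota-Baxter equation \eqref{eq:B} for $T$, rewritten as $T([y,z]_\h)=[T(y),T(z)]_\g-T(\phi(T(y))z)+T(\phi(T(z))y)$, together with the representation identity $[\bar\phi(T(y)),\bar\phi(T(z))]=\bar\phi([T(y),T(z)]_\g)$ from $\bar\phi$ being an algebra homomorphism.

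Once (i) is in hand, (ii) is an induction on word length, using that each $\bar\phi(T(y))$ is both a derivation and a coderivation of $U(\h)$ (it extends the derivation $\phi(T(y))$ of $\h$, which sends primitives to primitives). For (iii) I induct on the length of $a$: the base case $a=1$ is immediate, and for $a=y\cdot a'$ I expand
\[
\bar T(y\cdot a')\bar T(b)=T(y)\bar T(a')\bar T(b)-\bar T(\bar\phi(T(y))a')\bar T(b)
\]
and match it term-by-term against $\bar T((y\cdot a')_1(\bar\phi(\bar T((y\cdot a')_2))b))$ after applying the inductive hypothesis both to $a'$ and to the shorter element $\bar\phi(T(y))a'$, and invoking the Leibniz rule for the derivation $\bar\phi(T(y))$ together with the algebra-homomorphism property $\bar\phi(T(y))\bar\phi(\bar T(a'_2))=\bar\phi(T(y)\bar T(a'_2))$.

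For (v), Theorem~\ref{th:post-action} yields $a\rhd_{\bar T}b=\bar\phi(\bar T(a))b$, so I must show this equals $a{\bar \rhd}_T b$ in the sense of Theorem~\ref{th:post-uea}. For $y\in\h$, both $y{\bar \rhd}_T(-)$ and $\bar\phi(\bar T(y))(-)=\bar\phi(T(y))(-)$ are derivations of $U(\h)$ restricting to $\phi(T(y))$ on $\h$, hence they coincide. For longer words I use the recursion defining ${\bar \rhd}_T$ in Theorem~\ref{th:post-uea} together with the recursion for $\bar T$: substituting the inductive identity $a{\bar \rhd}_T u=\bar\phi(\bar T(a))u$ into $y_1\cdots y_r{\bar \rhd}_T u=y_1{\bar \rhd}_T(y_2\cdots y_r{\bar \rhd}_T u)-(y_1{\bar \rhd}_T y_2\cdots y_r){\bar \rhd}_T u$ collapses the right-hand side to $\bar\phi(\bar T(y_1\cdots y_r))u$. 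The main obstacle is step (i): although the computation is entirely mechanical, it requires carefully pairing all four nested terms from the double unfolding, and is precisely where the Rota-Baxter identity \eqref{eq:B} enters essentially; the remaining steps are standard inductions.
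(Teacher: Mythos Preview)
Your proposal is correct and follows essentially the same route as the paper: define $\bar T$ by the recursion $\bar T(1)=1$, $\bar T(yu)=T(y)\bar T(u)-\bar T(\bar\phi(T(y))u)$ on the tensor algebra, check it kills the defining ideal of $U(\h)$, prove the relative Rota-Baxter identity by induction on the length of the left factor, read off uniqueness from the recursion, and compare the two post-Hopf products via their matching recursive definitions. Your step~(iii) is exactly the paper's displayed induction, and your step~(v) is the paper's closing verification. You are in fact slightly more thorough than the paper: you explicitly isolate step~(ii), the coalgebra-homomorphism property of $\bar T$, which the paper leaves unmentioned, and for step~(i) you indicate the mechanism (the identity \eqref{eq:B} plus $[\bar\phi(T(y)),\bar\phi(T(z))]=\bar\phi([T(y),T(z)]_\g)$) where the paper simply asserts $\bar T(J_\h)=0$ as ``straightforward''. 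One small point worth making explicit when you write it up: your unfolding in~(i) handles generators $r_{yz}b$ of $J_\h$ on the right; to reach arbitrary $a\,r_{yz}\,b$ you also need that each $\bar\phi(T(y'))$ preserves $J_\h$ (immediate since it acts by derivations on both the associative and the Lie product), so the recursion propagates the vanishing through left multiplication.
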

\begin{proof}
Let $J_\h=(yz-zy-[y,z]_\h\,|\,y,z\in\h)$ be the  ideal of $T_\bk(\h)$ such that $U(\h)\simeq T_\bk(\h)/J_\h$.
We recursively define a linear map $\bar T:T_\bk(\h)\to U(\g)$ by
\begin{equation}\label{eq:rec}
\bar T(1)=1,\quad \bar T(yu)=T(y)\bar T(u)-\bar T\big(\bar\phi(T(y))u\big),\quad\forall y\in \h,\,u\in \h^{\otimes n},\,n\geq0.
\end{equation}
Then it is straightforward to deduce that $\bar T(J_\h)=0$ and we have the induced linear map $\bar T:U(\h)\to U(\g)$.

Next we  prove that $\bar T:U(\h)\to U(\g)$ is a relative Rota-Baxter operator. Namely,
$$\bar T(u)\bar T(v)=\bar T\big(u_1\bar\phi(\bar T(u_2))v\big)$$
for any $u\in U(\h)_m,v\in U(\h)_n$. It can be done by induction on $m$. The case when $m=1$ is due to the recursive definition \eqref{eq:rec} of $\bar T$. For $yu\in U(\h)_{m+1}$, since $\bar\phi$ is a module bialgebra action, we have
\begin{align*}
\bar T(yu)\bar T(v)&=T(y)\bar T(u)\bar T(v)-\bar T\big(\bar\phi(T(y))u\big)\bar T(v)\\
&=T(y)\bar T\big(u_1\bar\phi(\bar T(u_2))v\big)
-\bar T\big(\big(\bar\phi(T(y))u_1\big)\big(\bar\phi(\bar T(u_2))v\big)\big)
-\bar T\big(u_1\big(\bar\phi\big(\bar T(\bar\phi(T(y))u_2)\big)v\big)\big)\\
&=\bar T\big(yu_1\bar\phi(\bar T(u_2))v\big)+\bar T\big(\bar\phi(T(y))\big(u_1\bar\phi(\bar T(u_2))v\big)\big)\\
&\quad-\bar T\big(\big(\bar\phi(T(y))u_1\big)\big(\bar\phi(\bar T(u_2))v\big)\big)
-\bar T\big(u_1\big(\bar\phi\big(\bar T(\bar\phi(T(y))u_2)\big)v\big)\big)\\
&=\bar T\big(yu_1\bar\phi(\bar T(u_2))v\big)+
\bar T\big(u_1\bar\phi\big(T(y)\bar T(u_2)\big)v\big)
-\bar T\big(u_1\big(\bar\phi\big(\bar T(\bar\phi(T(y))u_2)\big)v\big)\big)\\
&=\bar T\big(yu_1\bar\phi(\bar T(u_2))v\big)+
\bar T\big(u_1\bar\phi\big(\bar T(yu_2)\big)v\big)\\
&
=\bar T\big((yu)_1\bar\phi(\bar T((yu)_2))v\big),
\end{align*}
which implies that $\bar T:U(\h)\to U(\g)$ is a relative Rota-Baxter operator.
The above procedure also  implies that the extension   from $T:\h\to\g$ to $\bar T:U(\h)\to U(\g)$ is unique.

By  \eqref{eq:ind-post-L}, the induced post-Lie product $\rhd_T$ on $\h$ is given by
$$y \rhd_T z=\phi(T(y))z,\quad \forall y,z\in \h.$$
Then by Theorem~\ref{th:post-uea}, the extended post-Hopf product $\bar\rhd_T$
on $U(\h)$ is recursively defined by
\begin{align*}
&y\bar\rhd_T 1=0,\quad y\bar\rhd_T zv = (y\rhd_T z)v+z(y\bar\rhd_T v),\\
&1\bar\rhd_T v=v,\quad yu\bar\rhd_T v = y\bar\rhd_T(u \bar\rhd_T v)-(y\bar\rhd_T u)\bar\rhd_T v,
\end{align*}
for any $y,z\in\h,\,u,v\in U(\h)$.
On the other hand, by \eqref{eq:post-action}, we know that
$$u\rhd_{\bar T}v =\bar\phi(\bar T(u))v,\quad \forall u,v\in U(\h).$$
In particular, $y\rhd_{\bar T} 1=0$, $1\rhd_{\bar T}v=v$ and
\begin{align*}
y\rhd_{\bar T}zv&=\bar\phi(T(y))(zv)=(\phi(T(y))z)v+z\bar\phi(T(y))v\\
&=(y\rhd_T z)v +
z(y\rhd_{\bar T}v),\\
yu\rhd_{\bar T} v &= \bar\phi(\bar T(yu))v=\bar\phi\big(T(y)\bar T(u)-\bar T\big(\bar\phi(T(y))u\big)\big)v\\
&=\bar\phi(T(y))\big(\bar\phi(\bar T(u))v\big)-\bar\phi\big(\bar T\big(\bar\phi(T(y))u\big)\big)v\\
&=y\rhd_{\bar T} (u\rhd_{\bar T} v)-(y\rhd_{\bar T} u)\rhd_{\bar T} v.
\end{align*}
Therefore, the two post-Hopf products on $U(\h)$ coincide, and we get the desired diagram.
\end{proof}

\subsection{Graph characterization}

Now we use  graphs to characterize relative Rota-Baxter operators
  on  Hopf algebras.
\begin{defn}
Given any coalgebra homomorphism $f:K\to H$, we define the {\bf graph} of $f$, which is denoted by $\Gr_f$, as the subspace $\im((\id\otimes f)\Delta_K)$ of $K\otimes H$, namely,
$$\Gr_f=\{a_1\otimes f(a_2)\,|\,a\in K\}.$$
\end{defn}

\begin{theorem}\mlabel{th:rrb-gr}
A coalgebra homomorphism $T:K\to H$ is a relative Rota-Baxter operator  with respect to a cocommutative $H$-module bialgebra $(K,\rightharpoonup)$ if and only if the graph $\Gr_T$ is a Hopf subalgebra of the smash product Hopf algebra $K\rtimes H$ and isomorphic to $K_T$.
\end{theorem}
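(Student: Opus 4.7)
The plan is to build everything around the linear map
\[
\phi : K \longrightarrow K \otimes H,\qquad \phi(a) = a_1 \otimes T(a_2),
\]
whose image is $\Gr_T$ by definition. Since $T$ is a coalgebra homomorphism, $\vep_H \circ T = \vep_K$, so $(\id_K \otimes \vep_H) \circ \phi = \id_K$, which shows that $\phi$ is injective and that $\phi : K \to \Gr_T$ is a linear isomorphism with inverse $(\id_K \otimes \vep_H)|_{\Gr_T}$. Using cocommutativity of $K$, a short Sweedler calculation (which amounts to swapping the two middle indices of $\Delta_K^{(2)}(a_1)\otimes\Delta_K^{(2)}(a_2)$-style expansions) shows that $\phi$ is a coalgebra homomorphism, so $\phi$ is always a coalgebra isomorphism $K \to \Gr_T$, independently of which implication is being proved.

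For the forward direction, assume $T$ satisfies \eqref{eq:rrb-Hopf}. The smash product formula together with the fact that $T$ is a coalgebra homomorphism yields
\[
(a_1 \# T(a_2))(b_1 \# T(b_2)) \;=\; a_1(T(a_2) \rightharpoonup b_1)\,\#\, T(a_3)T(b_2).
\]
On the other hand, since $\Delta$ is an algebra homomorphism for $K_T$, one expands $\phi(a *_T b) = (a_1 *_T b_1) \otimes T(a_2 *_T b_2)$, and then rewrites it, using cocommutativity of $K$ to reshuffle Sweedler indices and the relative Rota-Baxter identity $T(x *_T y)=T(x)T(y)$ inside the second tensor factor, to land on exactly the same expression. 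Hence $\Gr_T$ is closed under the product of $K \rtimes H$ and $\phi : K_T \to \Gr_T$ is an algebra homomorphism. Combined with the coalgebra statement from the first paragraph, $\phi$ is a bialgebra isomorphism; the convolution inverse of $\id_{\Gr_T}$ in $\Hom(\Gr_T, K\rtimes H)$ is therefore simultaneously $S_\rtimes|_{\Gr_T}$ and $\phi \circ S_T \circ \phi^{-1}$, which forces $\Gr_T$ to be closed under the ambient antipode and $\phi$ to intertwine $S_T$ with it. Thus $\Gr_T$ is a Hopf subalgebra of $K \rtimes H$ isomorphic to $K_T$ via $\phi$.

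For the converse, assume $\Gr_T$ is a Hopf subalgebra of $K\rtimes H$. Then the product of two graph elements lies in $\Gr_T$, so there exists $c \in K$ with
\[
a_1(T(a_2)\rightharpoonup b_1)\,\#\, T(a_3)T(b_2) \;=\; c_1 \otimes T(c_2).
\]
Applying $\id_K \otimes \vep_H$ to both sides, and using both $\vep_H T = \vep_K$ and $\vep_K(x \rightharpoonup b) = \vep_H(x)\vep_K(b)$ (which holds because $\rightharpoonup$ is a module bialgebra action), immediately gives $c = a_1(T(a_2)\rightharpoonup b) = a *_T b$. Applying $\vep_K \otimes \id_H$ to both sides and using the same counit identity yields $T(a)T(b) = T(a *_T b)$, which is exactly \eqref{eq:rrb-Hopf}.

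The main bookkeeping obstacle is the forward direction, where cocommutativity of $K$ must be invoked at precisely the right moment to align the Sweedler expressions coming from the smash product with those coming from $\phi(a *_T b)$; once that alignment is made, the algebra compatibility of $\phi$, the Hopf subalgebra property, and the identification with $K_T$ all follow at once, and the converse reduces to taking two counit-projections of a single tensor identity.
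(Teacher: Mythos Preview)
Your proof is correct and follows essentially the same route as the paper: both introduce the map $\phi=\Psi:a\mapsto a_1\# T(a_2)$, use cocommutativity to see it is a coalgebra isomorphism onto $\Gr_T$, verify multiplicativity via the smash product formula together with $T(a*_Tb)=T(a)T(b)$, and for the converse apply $\id_K\otimes\vep_H$ and $\vep_K\otimes\id_H$ to the product of two graph elements. The only genuine difference is the antipode step in the forward direction: the paper computes $S_\rtimes(a_1\#T(a_2))$ directly using the explicit formula \eqref{rrb-des2} for $S_T$ and the identity $T\circ S_T=S_H\circ T$ from \eqref{eq:rrb-anti}, whereas you argue abstractly that $S_\rtimes|_{\Gr_T}$ and $\phi\circ S_T\circ\phi^{-1}$ are both convolution inverses of the inclusion $\Gr_T\hookrightarrow K\rtimes H$ and hence coincide. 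Your argument is slightly cleaner in that it avoids unpacking those formulas, at the cost of relying on the existence of the antipode $S_T$ on $K_T$ (Corollary~\ref{prop:rrb-des}).
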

\begin{proof}

Let $T:K\to H$ be a relative Rota-Baxter operator. Then for all $a,b\in K$, we have
\begin{eqnarray*}
  (a_1\# T(a_2))(b_1\# T(b_2)) &=& a_1(T(a_2)\rightharpoonup b_1)\# T(a_3)T(b_2)\\
  &=& a_1\ast_T b_1\# T(a_2)T(b_2)\\
  &=& a_1\ast_T b_1\# T(a_2\ast_T b_2)\\
  &=&(a\ast_T b)_1\# T((a\ast_T b)_2)\in\Gr_T,
\end{eqnarray*}
as the binary operation $\ast_T$ on $K$ defined in \eqref{rrb-des1} is a coalgebra homomorphism by the cocommutativity of $K$, which implies that $\Gr_T$ is a subalgebra of $K\rtimes H$ with unit $1\#1=1\#T(1)$.

Also, as $T$ is a coalgebra homomorphism and $K$ is cocommutative,
\begin{eqnarray*}
\Delta(a_1\# T(a_2)) &=& (a_1\# T(a_3))\otimes (a_2\# T(a_4))\\
  &=& (a_1\# T(a_2))\otimes (a_3\# T(a_4))\in\Gr_T\otimes \Gr_T,\\
S(a_1\# T(a_2)) &=& (S_H(T(a_1))\rightharpoonup S_K(a_2))\#S_H(T(a_3))\\
 &\stackrel{\eqref{rrb-des2}}{=}& S_T(a_1)\# T(S_T(a_2))\\
  &=& S_T(a)_1\# T(S_T(a)_2)\in\Gr_T,
\end{eqnarray*}
where the antipode formula above is due to the property of $S_T$ shown in Proposition~\ref{prop:rrb-des}. Hence, $\Gr_T$ is a Hopf algebra inside $K\rtimes H$. Furthermore, define linear map
\begin{equation}\mlabel{eq:isom-rbb-gr}
\Psi:K_T\to \Gr_T,\,a\mapsto a_1\# T(a_2).
\end{equation}
By the calculation above, we know that $\Psi$ is a Hopf algebra isomorphism with its inverse $\Psi^{-1}=\id\otimes\vep_H$.

Conversely, assume that $\Gr_T$ is a subalgebra of $K\rtimes H$. For any $a,b\in K$, there exists $w\in K$ such that
$$a_1\ast_T b_1\# T(a_2)T(b_2) = (a_1\# T(a_2))(b_1\# T(b_2)) = w_1\# T(w_2).$$
In particular, since $\ast_T$ is bilinear, we have
\begin{align*}
w&=w_1\vep_H(T(w_2))=a_1\ast_T b_1\vep_H(T(a_2)T(b_2))=a\ast_T b,\\
T(w)&=\vep_K(w_1)T(w_2)= \vep_K(a_1\ast_T b_1)T(a_2)T(b_2) = T(a)T(b),
\end{align*}
which indicate 
that $T$ is a relative Rota-Baxter operator.
\end{proof}

 Let $\phi:\g\to\Der(\h)$ be an action of a Lie algebra  $(\g,[\cdot,\cdot]_\g)$  on   a Lie algebra   $(\mathcal{\h},[\cdot,\cdot]_\h)$.
 Denote by $\h\rtimes\g$ the semi-direct product Lie algebra  of $\h$ and $\g$ with respect to the   action $(\h,\phi)$. More precisely, the Lie bracket $[\cdot,\cdot]_{\rtimes}:\wedge^2(\h\oplus\g)\to \h\oplus\g$ is given by
\begin{eqnarray*}
     ~[(u,x),(v,y)]_{\rtimes}&=&([u,v]_\h+\phi(x)(v)-\phi(y)(u),[x,y]_\g),\quad \forall x,y\in \g, ~u,v\in \h.
\end{eqnarray*}
It is well known  that $T:\h\to\g$ is a relative Rota-Baxter operator if and only if the graph of $T$, $\Gr_T:=\{(u,T(u))\,|\,u\in\h\}$ is a subalgebra of $\h\rtimes\g$.

Now we consider the lifted relative Rota-Baxter operator  $\bar{T}:U(\h)\to U(\g)$  of the relative Rota-Baxter operator  $T:\h\to\g$. It turns out that the Hopf algebra $\Gr_{\bar{T}}$ can serve as the universal enveloping algebra of the Lie algebra $\Gr_T.$

\begin{prop}\mlabel{prop:Liegr-leftgr}
Let $\bar{T}:U(\h)\to U(\g)$ be the lifted relative Rota-Baxter operator  of the relative Rota-Baxter operator  $T:\h\to\g$. Then  $\Gr_{\bar{T}}\simeq U(\Gr_T),$ i.e. the universal enveloping algebra of the graph $\Gr_T$ of the relative Rota-Baxter operator $T:\h\to\g$    is isomorphic to the Hopf algebra $\Gr_{\bar{T}}$, which is the graph of the relative Rota-Baxter operator $\bar{T}:U(\h)\to U(\g)$.
\end{prop}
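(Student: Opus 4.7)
The plan is to chain together several results already established in the paper so that everything reduces to a short conceptual argument. The key observation is that each object in sight has two descriptions: the Lie-algebraic one for $\Gr_T$ and the Hopf-algebraic one for $\Gr_{\bar T}$, and they fit into a diagram of subadjacent/descendent Hopf algebras whose commutativity is exactly what Theorems \ref{th:post-uea}, \ref{prop:rrb-uea} and \ref{th:rrb-gr} supply.

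First, I would verify that, as a Lie subalgebra of $\h\rtimes\g$, the graph $\Gr_T$ is isomorphic to the subadjacent Lie algebra $\h_{\rhd_T}$ of the induced post-Lie algebra $(\h,[\cdot,\cdot]_\h,\rhd_T)$ with $u\rhd_T v=\phi(T(u))v$. Concretely, the projection $(u,T(u))\mapsto u$ is a linear bijection, and the Rota-Baxter identity \eqref{eq:B} rewrites $[(u,T(u)),(v,T(v))]_\rtimes$ as $(w,T(w))$ with $w=[u,v]_\h+\phi(T(u))v-\phi(T(v))u=[u,v]_{\h_{\rhd_T}}$. This is the only genuinely hands-on computation in the proof, and it is immediate from the definitions.

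Next I would stitch together the three previously proven facts. Theorem~\ref{th:rrb-gr}, applied to the relative Rota-Baxter operator $\bar T:U(\h)\to U(\g)$, gives a Hopf algebra isomorphism $\Gr_{\bar T}\simeq U(\h)_{\bar T}$, where $U(\h)_{\bar T}$ is the descendent Hopf algebra; by Corollary~\ref{prop:rrb-des} this is nothing but the subadjacent Hopf algebra $U(\h)_{\rhd_{\bar T}}$ of the post-Hopf algebra induced by $\bar T$. Theorem~\ref{prop:rrb-uea} (the commutative diagram at its end) identifies $\rhd_{\bar T}$ with the extended post-Hopf product $\bar\rhd_T$ coming from the post-Lie algebra $(\h,\rhd_T)$, so $U(\h)_{\rhd_{\bar T}}=U(\h)_{\bar\rhd_T}$. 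Theorem~\ref{th:post-uea} then gives the Hopf algebra isomorphism $U(\h)_{\bar\rhd_T}\simeq U(\h_{\rhd_T})$. Finally, the Lie algebra isomorphism $\Gr_T\simeq\h_{\rhd_T}$ from the first step extends functorially to $U(\Gr_T)\simeq U(\h_{\rhd_T})$.

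Assembling the chain
\[
\Gr_{\bar T}\;\simeq\;U(\h)_{\bar T}\;=\;U(\h)_{\rhd_{\bar T}}\;=\;U(\h)_{\bar\rhd_T}\;\simeq\;U(\h_{\rhd_T})\;\simeq\;U(\Gr_T)
\]
then yields the desired Hopf algebra isomorphism. I expect no real obstacle, as each link has been set up explicitly earlier; the only conceptual point to highlight is that the isomorphism in Theorem~\ref{th:rrb-gr} is realized by $\Psi(a)=a_1\#\bar T(a_2)$, while the isomorphism $U(\Gr_T)\to\Gr_{\bar T}$ one obtains by tracing the chain backward sends the Lie generator $(u,T(u))\in\Gr_T$ to $u\otimes 1+1\otimes T(u)=(u)_1\otimes\bar T((u)_2)$, matching the recursive definition of $\bar T$ on products. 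If desired, one could alternatively verify the universal property of $U(\Gr_T)$ directly: the Lie algebra map $\Gr_T\hookrightarrow\Gr_{\bar T}\subseteq U(\h)\rtimes U(\g)$ lifts to an algebra map from $U(\Gr_T)$ whose image is $\Gr_{\bar T}$ (by the recursive formula for $\bar T$) and which is injective by the PBW theorem applied to the decomposition $U(\h)\rtimes U(\g)\simeq U(\h)\otimes U(\g)$.
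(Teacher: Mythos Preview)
Your proof is correct but takes a genuinely different route from the paper's. The paper argues more directly: it uses the Hopf algebra isomorphism $U(\h\rtimes\g)\simeq U(\h)\rtimes U(\g)$, restricts it to $\Gr_T\subseteq\h\rtimes\g$ to get an injective Lie algebra map $\psi$ into $\Gr_{\bar T}$, observes that $\im\psi$ generates $\Gr_{\bar T}$ as an algebra, and then invokes the Heyneman--Radford theorem (injectivity on the first coradical filtration piece implies injectivity) to conclude that the induced map $U(\Gr_T)\to\Gr_{\bar T}$ is an isomorphism. Your argument instead passes through the subadjacent Lie algebra $\h_{\rhd_T}$ and chains together Theorems~\ref{th:rrb-gr}, \ref{prop:rrb-uea} and \ref{th:post-uea} to get $\Gr_{\bar T}\simeq U(\h)_{\bar T}=U(\h)_{\bar\rhd_T}\simeq U(\h_{\rhd_T})\simeq U(\Gr_T)$. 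This is more conceptual and has the advantage of staying entirely within the machinery already developed in the paper, avoiding the appeal to Heyneman--Radford; the price is that the explicit isomorphism is slightly obscured (though you recover it at the end when you note that the generator $(u,T(u))$ goes to $u\otimes 1+1\otimes T(u)$). The alternative you sketch in your final sentence---lifting the inclusion $\Gr_T\hookrightarrow U(\h)\rtimes U(\g)$ via the universal property and checking injectivity by PBW---is essentially the paper's approach.
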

\begin{proof}
First it is straightforward to obtain the following Hopf algebra isomorphism
$$\bar\varphi:U(\h\rtimes\g)\to U(\h)\rtimes U(\g),\quad (u,x)\mapsto u\#1+1\#x,\quad\forall x\in \g, ~u\in \h.$$
By Theorem~\mref{th:rrb-gr}, $\Gr_{\bar T}$ is a Hopf subalgebra of $U(\h)\#U(\g)$ isomorphic to $U(\h)_{\bar T}$. Let $\psi\coloneqq \bar\varphi|_{\Gr_T}$, then $\psi$ is injective and
$$\psi(u,T(u))=u\#1+1\#T(u)
=(\id\otimes\bar T)\Delta(u)
\in \Gr_{\bar T},\quad \forall u\in\h.$$
Also, note that ${\rm Im}\,\psi$ generates $\Gr_{\bar T}$ as an algebra. Hence, $\psi$ induces a Hopf algebra isomorphism $\bar\psi:U(\Gr_T)\to \Gr_{\bar T}$ by the Theorem of Heyneman and Radford~\cite[Theorem 5.3.1]{Mon}, as $U(\Gr_T)_1=\bk\oplus \Gr_T$ and $\bar\psi|_{U(\Gr_T)_1}$ is also injective.
\end{proof}

\subsection{Module and module bialgebra characterization}

Next we give another characterization  of relative Rota-Baxter operators on Hopf algebras using new module structures and new module bialgebra structures. Let $H$ and $K$ be Hopf algebras such that $K$ is a cocommutative  $H$-module bialgebra via an action $\rightharpoonup$.
\begin{theorem}\mlabel{th:diff-action}
A coalgebra homomorphism $T:K\to H$ is a relative Rota-Baxter operator   if and only if $K$ endowed with the binary operation $*_T$ in \eqref{rrb-des1} is an algebra, denoted by $K_T=(K,*_T)$, and $H$ is a $K_T$-module
via the action $\star_T$ defined by
$$a\star_T x\coloneqq T(a)x,\quad\forall x\in H,a\in K.$$
\end{theorem}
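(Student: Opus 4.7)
The plan is to observe that the relative Rota-Baxter identity \eqref{eq:rrb-Hopf} can be rewritten, using the definition \eqref{rrb-des1} of $*_T$, as the single equation $T(a *_T b) = T(a)T(b)$. This reformulation makes both directions of the equivalence essentially immediate once one disentangles the algebra axioms from the module axioms.

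For the forward direction, I would assume $T$ is a relative Rota-Baxter operator. Corollary~\ref{prop:rrb-des} already supplies that $(K,*_T)$ is a Hopf algebra, hence in particular an associative unital algebra with unit $1_K$. To check that $\star_T$ makes $H$ into a left $K_T$-module, I would first use $T(1_K)=1_H$ (since $T$ is a coalgebra homomorphism) to obtain $1_K \star_T x = T(1_K)x = x$, and then invoke \eqref{eq:rrb-Hopf} once:
\begin{align*}
(a *_T b)\star_T x \;=\; T(a *_T b)\,x \;=\; T(a)T(b)\,x \;=\; T(a)\bigl(b \star_T x\bigr) \;=\; a \star_T (b \star_T x).
\end{align*}

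For the converse, I would assume $(K,*_T)$ is an associative unital algebra and that $\star_T$ is a module action. Then the module associativity applied to $a,b \in K$ and $x \in H$ reads
\begin{align*}
T(a *_T b)\,x \;=\; (a *_T b) \star_T x \;=\; a \star_T (b \star_T x) \;=\; T(a)T(b)\,x.
\end{align*}
Specializing at $x = 1_H$ yields $T(a *_T b) = T(a)T(b)$, which by the definition of $*_T$ is exactly \eqref{eq:rrb-Hopf}. I do not anticipate any serious obstacle here: the substantive content is that specializing the module axiom at $x = 1_H$ already recovers the multiplicative identity for $T$, whereas the module axiom for all $x$ follows automatically once $T$ is known to be multiplicative, because $H$ acts faithfully on itself by left multiplication. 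The only minor point worth spelling out is that, in the converse direction, the compatibility of the unit of $(K,*_T)$ with the action $\star_T$ is already encoded in the requirement that $\star_T$ be a unital module structure, and is consistent with the values $1_K *_T a = a$ and $a *_T 1_K = a$ forced by $T$ being a coalgebra homomorphism.
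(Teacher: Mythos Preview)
Your proof is correct and follows essentially the same route as the paper's: both invoke Corollary~\ref{prop:rrb-des} to obtain the algebra structure $(K,*_T)$ in the forward direction, verify the module axioms by the identical computation, and recover \eqref{eq:rrb-Hopf} in the converse by specializing the module associativity at $x=1_H$. One small note: the fact $T(1_K)=1_H$ does not follow merely from $T$ being a coalgebra homomorphism (that only forces $T(1_K)$ to be group-like), but it does follow from Corollary~\ref{prop:rrb-des}, which you already cite and which asserts that $T:K_T\to H$ is a Hopf algebra homomorphism.
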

\begin{proof}
If $T:K\to H$ is a relative Rota-Baxter operator, then by Corollary~\ref{prop:rrb-des}, $K_T=(K,*_T)$ is an algebra with unit 1. Also,
\begin{align*}
1\star_T x&=T(1)x=1x=x,\\
(a*_T b)\star_T x&=T(a*_T b)x=T(a)T(b)x=a\star_T(b\star_T x),
\end{align*}
for any $x\in H,\,a,b\in K$. That is, $H$ is a $K_T$-module.

Conversely, if $K_T=(K,*_T)$ is an algebra and $H$ is a $K_T$-module via the stated action $\star_T$, then particularly
\begin{align*}
T(a_1(T(a_2)\rightharpoonup b))
&=T(a*_T b)=T(a*_T b)1=(a*_T b)\star_T 1\\
&=a\star_T(b\star_T 1)=T(a)(T(b)1)=T(a)T(b).
\end{align*}
Namely,  \eqref{eq:rrb-Hopf} holds, and $T:K\to H$ is a relative Rota-Baxter operator.
\end{proof}


The following result is straightforward to obtain.
\begin{lemma}\mlabel{lemma:smash-modalg}
Let $H$ and $K$ be two cocommutative Hopf algebras such that $K$ is an $H$-module bialgebra via an action $\rightharpoonup$. Then $K$ is a $K\rtimes H$-module bialgebra defined by
\begin{equation*}
(a\#x).b\coloneqq \ad_a(x\rightharpoonup b),\quad \forall x\in H,\,a,b\in K.
\end{equation*}
\end{lemma}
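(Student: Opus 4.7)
The plan is to check that the formula $(a\#x).b = \ad_a(x\rightharpoonup b)$ defines a left action of the Hopf algebra $K\rtimes H$ on $K$ which is compatible with both the multiplication and the comultiplication of $K$. The unit and counit axioms $(1\#1).b = b$, $(a\#x).1 = \vep(a)\vep(x)1$, and $\vep((a\#x).b) = \vep(a)\vep(x)\vep(b)$ are immediate from $1\rightharpoonup b = b$, $\ad_1(b)=b$, $x\rightharpoonup 1 = \vep(x)1$, and $\ad_a(1) = \vep(a)1$.

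Before tackling associativity, I would first record the sublemma that $x\rightharpoonup S_K(a) = S_K(x\rightharpoonup a)$ for all $x\in H$, $a\in K$. This follows by verifying that both $(x,a)\mapsto x\rightharpoonup S_K(a)$ and $(x,a)\mapsto S_K(x\rightharpoonup a)$ are two-sided convolution inverses of $(x,a)\mapsto x\rightharpoonup a$ in $\Hom(H\otimes K, K)$, using that $K$ is an $H$-module algebra for one direction and an $H$-module coalgebra together with the antipode axiom of $K$ for the other; uniqueness of convolution inverses forces equality. With this in hand, expanding $((a\#x)(a'\#x')).b$ via the smash-product multiplication $(a\#x)(a'\#x') = a(x_1\rightharpoonup a')\#x_2 x'$ and expanding $(a\#x).((a'\#x').b)$ via the module-algebra axiom applied three times to $x\rightharpoonup\ad_{a'}(x'\rightharpoonup b)$, both sides reduce to expressions of the shape
$$a_1(x_i\rightharpoonup a'_1)(x_j x'\rightharpoonup b)(x_k\rightharpoonup S_K(a'_2))S_K(a_2),$$
where $(i,j,k)$ on the two sides differ by the transposition swapping the middle two factors of $\Delta^{(2)}(x)$; cocommutativity of $H$ then closes the gap.

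For the module-algebra axiom $(a\#x).(bc) = ((a\#x)_1.b)((a\#x)_2.c)$, note that the tensor-product coalgebra on $K\otimes H$ yields $(a\#x)_1\otimes (a\#x)_2 = (a_1\#x_1)\otimes(a_2\#x_2)$, and the right-hand side is recovered using $x\rightharpoonup bc = (x_1\rightharpoonup b)(x_2\rightharpoonup c)$ together with the standard identity $\ad_a(uv) = \ad_{a_1}(u)\ad_{a_2}(v)$, which holds in any Hopf algebra from the antipode axiom applied to the inner two factors of the three-fold coproduct of $a$. For the module-coalgebra axiom $\Delta((a\#x).b) = ((a\#x)_1.b_1)\otimes((a\#x)_2.b_2)$, the $H$-module coalgebra property splits $\Delta(x\rightharpoonup b) = (x_1\rightharpoonup b_1)\otimes(x_2\rightharpoonup b_2)$; what remains is $\Delta(\ad_a c) = \ad_{a_1}(c_1)\otimes\ad_{a_2}(c_2)$, and this is where cocommutativity of $K$ enters: the identity $\Delta S_K = (S_K\otimes S_K)\Delta^{\mathrm{op}}$ reverses the order of two Sweedler factors of $a$, and cocommutativity of $K$ permits the necessary permutation of the four-fold coproduct of $a$ to reconcile the two expressions. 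I expect the main bookkeeping obstacle to be the associativity verification, where both the antipode-equivariance sublemma and the cocommutativity of $H$ are essential; the module-algebra and module-coalgebra axioms are then largely formal consequences of the standard module-bialgebra properties of $\ad$ on cocommutative Hopf algebras.
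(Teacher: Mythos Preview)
Your proposal is correct and follows essentially the same approach as the paper. The paper declares the result ``straightforward to obtain'' and omits the proof in the final version, but the suppressed argument is exactly the direct verification you outline: check associativity by expanding both sides (using the antipode-equivariance identity $x\rightharpoonup S_K(a)=S_K(x\rightharpoonup a)$ and cocommutativity of $H$ to match the Sweedler indices of $x$), and then verify the module-algebra and module-coalgebra axioms via the standard identities $\ad_a(uv)=\ad_{a_1}(u)\ad_{a_2}(v)$ and $\Delta(\ad_a c)=\ad_{a_1}(c_1)\otimes\ad_{a_2}(c_2)$, the latter requiring cocommutativity of $K$.
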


 \emptycomment{
\begin{proof}
First note that $x\otimes y\otimes z\mapsto \ad_a(x\rightharpoonup b)$ is clearly a linear map from $K\otimes H^{\otimes 2}$ to $H$.
For any $x,y\in H,\,a,b,c\in K$, as both $H$ and $K$ are cocommutative, we have
\begin{align*}
(a\#x)(b\#y).c&=(a(x_1\rightharpoonup b)\#x_2y).c=\ad_{a(x_1\rightharpoonup b)}(x_2y\rightharpoonup c)\\
&=\ad_a [(x_1\rightharpoonup b_1)(x_2\rightharpoonup (y \rightharpoonup c))S_K(x_3\rightharpoonup b_2)]\\
&=\ad_a [(x_1\rightharpoonup b_1)(x_2\rightharpoonup (y \rightharpoonup c))(x_3\rightharpoonup S_K(b_2))]\\
&=\ad_a (x\rightharpoonup b_1(y \rightharpoonup c)S_K(b_2))\\
&=\ad_a (x\rightharpoonup (\ad_b(y \rightharpoonup c)))\\
&=(a\#x).((b\#y).c),\\
(1\#1).c&=\ad_1(1\rightharpoonup c)=c,\\
(a\#x).bc&=\ad_a(x\rightharpoonup bc)\\
&=\ad_{a_1}(x_1\rightharpoonup b)\,\ad_{a_2}(x_2\rightharpoonup c)\\
&=((a_1\#x_1).b)((a_2\#x_2).c),\\
(a\#x).1&=\ad_a(x\rightharpoonup 1)\\
&=\ad_a(\vep_H(x)1)=\vep_K(a)\vep_H(x)1,\\
\Delta_K((a\#x).b)&=\Delta_K(\ad_a(x\rightharpoonup b))\\
&=\ad_{a_1}(x_1\rightharpoonup b_1)
\otimes \ad_{a_2}(x_2\rightharpoonup b_2)\\
&=(a_1\#x_1).b_1\otimes(a_2\#x_2).b_2,\\
\vep_K((a\#x).b)&=\vep_K(\ad_a(x\rightharpoonup b))\\
&=\vep_K(a)\vep_H(x)\vep_K(b)=\vep(a\#x)\vep_K(b).
\end{align*}
So $K$ becomes a $K\rtimes H$-module bialgebra.
\end{proof}
}

 \begin{prop}
Let $T:K\to H$ be a relative Rota-Baxter operator. Then $K$ has a cocommutative $K_T$-module bialgebra structure via the following action,
$$\ad_{T,a}b\coloneqq \ad_{a_1}(T(a_2)\rightharpoonup b),\quad \forall a,b\in K.$$
\end{prop}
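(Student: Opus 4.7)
My plan is to realize the claimed $K_T$-module bialgebra structure on $K$ as a restriction and pullback of an already-known one, so that no module-bialgebra axiom has to be verified from scratch. By the preceding Lemma~\ref{lemma:smash-modalg}, $K$ carries a cocommutative $(K\rtimes H)$-module bialgebra structure via
$$(a\#x).b \;=\; \ad_a(x\rightharpoonup b),\quad \forall x\in H,\,a,b\in K.$$
On the other hand, Theorem~\ref{th:rrb-gr} guarantees that the graph $\Gr_T$ is a Hopf subalgebra of $K\rtimes H$ and that the map $\Psi:K_T\to \Gr_T$ defined in~\eqref{eq:isom-rbb-gr} by $\Psi(a)=a_1\#T(a_2)$ is an isomorphism of Hopf algebras.

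The next step uses a general fact: any module bialgebra action of a Hopf algebra $A$ on $B$ restricts along any Hopf subalgebra inclusion $A'\hookrightarrow A$ to a module bialgebra action of $A'$ on $B$; this is immediate from the definition since the structure maps of $A'$ are just the restrictions of those of $A$. Applying this to $\Gr_T\hookrightarrow K\rtimes H$ yields a cocommutative $\Gr_T$-module bialgebra structure on $K$. Transporting this along the Hopf algebra isomorphism $\Psi:K_T\to \Gr_T$ then produces the desired cocommutative $K_T$-module bialgebra structure on $K$.

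Finally, I identify the transported action. For $a\in K_T$ and $b\in K$,
$$\Psi(a).b \;=\; (a_1\# T(a_2)).b \;=\; \ad_{a_1}\!\bigl(T(a_2)\rightharpoonup b\bigr) \;=\; \ad_{T,a}\, b,$$
which matches the formula in the statement exactly. I anticipate no real obstacle: all of the heavy lifting is already packaged in Lemma~\ref{lemma:smash-modalg} and Theorem~\ref{th:rrb-gr}, and the only observation needed is that the stated formula is precisely the image of the smash-product action under $\Psi$. The cocommutativity of the resulting action follows from the cocommutativity of $K\rtimes H$ as a coalgebra, which is inherited by both $\Gr_T$ and its image $K_T$ under $\Psi^{-1}$.
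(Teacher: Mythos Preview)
Your proof is correct and follows essentially the same route as the paper's own argument: invoke Lemma~\ref{lemma:smash-modalg} to get the $(K\rtimes H)$-module bialgebra structure on $K$, restrict it to the Hopf subalgebra $\Gr_T$ via Theorem~\ref{th:rrb-gr}, and pull back along the Hopf algebra isomorphism $\Psi:K_T\to\Gr_T$. Your explicit verification that $\Psi(a).b=\ad_{T,a}b$ is a nice touch that the paper leaves implicit.
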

\begin{proof}
Let $T:K\to H$ be a relative Rota-Baxter operator. By Theorem~\mref{th:rrb-gr}, the graph $\Gr_T$ is a Hopf algebra inside $K\rtimes H$. Therefore $K$ becomes a $\Gr_T$-module bialgebra by Lemma~\mref{lemma:smash-modalg}.

Furthermore, pulled back by the Hopf algebra isomorphism $\Psi:K_T\to \Gr_T$ given in  \meqref{eq:isom-rbb-gr}, $K$ becomes a $K_T$-module bialgebra via the desired action $\ad_T$.
\end{proof}

\emptycomment{
\begin{remark}
 According to Theorem \ref{thm:restriction}, a relative   Rota-Baxter operator $T:K\to H$  induces a relative  Rota-Baxter operator $T:P(K)\to P(H)$. Consequently, we get the   descendent  Lie algebra  $P(K)_T$, where the Lie algebra structure $[\cdot,\cdot]_T$ on $P(K)$ is given by
 $$[a,b]_T=T(a)\rightharpoonup b -T(b)\rightharpoonup a+[a,b],\quad \forall a,b\in P(K).$$
    Restrict on $P(K)$ of the above new module algebra structure, we find that $P(K)_T$ acts on $P(K)$ by derivations via $\ad_T$ defined by
   $$
   \ad_T(a)(b)=[a,b]+T(a)\rightharpoonup b.
   $$
   Obviously, this fact holds for arbitrary (relative) Rota-Baxter operator  on Lie algebras. 
\end{remark}
}

\vspace{0.1cm}
 \noindent
{\bf Acknowledgements. } This research is supported by NSFC (Grant No. 11922110, 12071094, 12001228).

\bibliographystyle{amsplain}

\end{document}